\definecolor{ForestGreen}{rgb}{0.0, 0.27, 0.13}
\title{An Algorithmic Solution to the Blotto Game using Multi-marginal Couplings}
\author{Vianney Perchet}
\affiliation{Crest, ENSAE}
\affiliation{Criteo AI LAB}
\email{vianney.perchet@normalesup.org}
\author{Philippe Rigollet}
\affiliation{Department of Mathematics, MIT}
\email{rigollet@math.mit.edu}
\author{Thibaut Le Gouic}
\affiliation{Aix Marseille University, Centrale Marseille, CNRS, I2M}
\email{thibaut.le_gouic@math.cnrs.fr}
\date{\today}
\newcommand{\xmark}{\ding{55}}
\newcommand{\R}{\mathds{R}}
\newcommand{\E}{\mathds{E}}
\newcommand{\p}{\mathds{P}}
\newcommand{\1}{\mathds{1}}
\newcommand{\cN}{\mathcal{N}}
\newcommand{\cO}{\mathcal{O}}
\newcommand{\cH}{\mathcal{H}}
\newcommand{\cC}{\mathcal{C}}
\newcommand{\cF}{\mathcal{F}}
\newcommand{\cI}{\mathcal{I}}
\newcommand{\cU}{\mathcal{U}}
\newcommand{\cG}{\mathcal{G}}
\newcommand{\bp}{\mathbf{p}}
\newcommand{\sH}{\mathsf{H}}
\newcommand{\supp}{{\rm supp }}
\DeclareMathOperator{\var}{\mathrm{var}}
\newcommand{\eps}{\varepsilon}
\newcommand{\Cube}{\textsc{Extremize}}
\newcommand{\SAMPLE}{\textsc{Sample}}
\newcommand{\DISCRETIZE}{\textsc{Discretize}}
\newcommand{\SINKHORN}{\textsc{Sinkhorn}}
\newcommand{\REDUCTION}{\textsc{Reduc}}
\newcommand{\DECOMPOSE}{\textsc{Decomp}}
\newcommand{\LOTTO}{\textsc{Lotto}}
\newcommand{\LOTTOTOBLOTTO}{\textsc{Lotto2Blotto}}
\newtheorem{theorem}{Theorem}
\newtheorem{definition}[theorem]{Definition}
\newtheorem{remark}[theorem]{Remark}
\newtheorem{lemma}[theorem]{Lemma}
\newtheorem{proposition}[theorem]{Proposition}
\newtheorem{corollary}[theorem]{Corollary}
\DeclareMathOperator{\KL}{\mathsf{KL}}
\DeclareMathOperator*{\argmax}{\mathrm{argmax}}
\begin{abstract}We describe an efficient algorithm to compute solutions for the general two-player Blotto game on $n$ battlefields with heterogeneous values. While explicit constructions for such solutions have been limited to specific, largely symmetric or homogeneous, setups, this algorithmic resolution covers the most general situation to date: value-asymmetric game with asymmetric budget with sufficient symmetry and homogeneity. The proposed algorithm rests on recent theoretical advances regarding Sinkhorn iterations for matrix and tensor scaling. An important case which had been out of reach of previous attempts is that of heterogeneous but symmetric battlefield values with asymmetric budget. In this case, the Blotto game is constant-sum so optimal solutions exist, and our algorithm samples from an $\eps$-optimal solution in time $\tilde \cO(n^2 + \eps^{-4})$, independently of budgets and battlefield values. In the case of asymmetric values where optimal solutions need not exist but Nash equilibria do, our algorithm samples from an $\eps$-Nash equilibrium with similar complexity but where implicit constants depend on various parameters of the game such as battlefield values.\end{abstract}
\begin{document}

\begin{titlepage}

\maketitle

\end{titlepage}

\section{Introduction}

A century ago, Emile Borel published his seminal paper on \emph{the theory of play and integral equations with skew symmetric kernels}~\cite{Bor21}, see also \cite[Page 157]{von1959role}. While perhaps not as conspicuous, it predates von Neumann's monumental work \emph{on the theory of games of strategy}~\cite{von28} by several years. In this work, Borel describes what is now called the Blotto game and introduces the notions of strategy, mixed strategies and even foresees the fruitful interactions between game theory and economics that are to be observed throughout the century. As such, the Blotto game is considered to be the genesis of modern game theory~\cite{Fre53, Nak06}. Despite its prestigious pedigree, equilibrium strategies for this game are only known in special cases.

Blotto is a resource-allocation game in which two players competes over $n$ different battlefields by simultaneously allocating resources to each battlefield. The following two additional characteristics are perhaps the most salient features of the Blotto game:
\begin{enumerate}
    \item \emph{Winner-takes-all}: For each battlefield, the player allocating the most resources to a given battlefield wins the battlefield.
    \item \emph{Fixed budget}: each player is subject to a fixed---and deterministic---budget that mixed strategies should satisfy almost surely.
\end{enumerate}

Despite its apparent simplicity the Blotto game captures a variety of practical situations that extend far beyond the context of the above military terminology. These include political strategy \cite{Mye93,LasPic02, MerMunTof05}, network security \cite{LabHaSaa15, FerSaaMan21}, and various forms of practical auction markets \cite{MasSil15,HajMan17}.

The goal of this paper is to efficiently construct a Nash equilibrium for this game or, when they exist, an optimal strategy.

\medskip

\noindent{\bf Prior work.} Despite its century-long existence, Nash equilibria for the Blotto game are only known under various restrictions on the main parameters of the problem: the budget of each player and the value given to each battlefield.
\begin{itemize}
    \item \emph{Budget.} A large fraction of the literature considers the case where the players have \emph{symmetric budgets}, starting with the original problem of Borel~\cite{Bor21} and in most of the main contributions throughout the twentieth century~\cite{Bor21, BorVil38,Gro50, GroWag50, Las02,Tho18}. The case of symmetric budgets is well understood except in the setup where players may disagree on the value of battlefield that was recently introduced~\cite{KovRob21}.
    \item \emph{Battlefields.} When the two players have a different budget the situation becomes more complex as the poorest will have to forfeit some battlefields. In this case, only partial results are known. To understand what ``partial" means, recall that full generality of the battlefield values occurs when (i) players may assign a different value to a given battlefield---we say that the values are \emph{asymmetric}---and (ii) these values may vary across battlefield---we say that the battlefields are \emph{heterogeneous}.  Partial results are known for symmetric values. Even under this simplifying assumption, the case of heterogeneous battlefields remains poorly understood, except in the case of two battlefields~\cite{MacMas15}. In the case of more than two battlefields, Nash equilibria are known for homogeneous battlefields~\cite{Rob06} or under stringent assumptions  on the battlefield values~\cite{SchLoiSas14} that essentially reduce to the homogeneous case. 
\end{itemize}
We refer the reader to Table~\ref{fig:SOTA} for a survey of recent advances. While we tackle the most general setup to date, we stress that an important case was not covered by prior literature: the case of asymmetric budget, heterogeneous and symmetric values. Indeed, in this case, the game is constant-sum and optimal strategies exist. Our results also cover the case of asymmetric values introduced very recently in~\cite{KovRob21} but this setup leads to only Nash equilibria rather than optimal strategies. Whenever possible, we conflate the two setups and simply refer to a \emph{solution} to the Blotto.

A discrete version of the Blotto game where both budgets and allocations are required to be integral was also introduced in Borel's original paper~\cite{Bor21}. Explicit optimal solutions were provided in~\cite{hart2008discrete} for the homogeneous and symmetric version of the discrete game;  see also \cite{hortala2012pure} for partial solutions in the asymmetric-value case. More recently, this discrete version has seen significant computational advances~\cite{AhmDehHaj19, beaglehole2022efficient}. Conceptually, this line of work is close to the present paper in the sense that it provides an algorithm to sample from approximate solutions. Moreover, the discrete Blotto game can be seen as a discretization of the continuous version of interest here and that could be quantified using the arguments of Section~\ref{SE:Errors}. However carrying out this analysis, for instance based on~\cite[Theorem~4.2]{beaglehole2022efficient}, leads to worse dependence on $n$ and $\eps$ compared to Theorem~\ref{TH:complex_cstsum} here. More strikingly, the complexity bound of Theorem~\ref{TH:complex_cstsum} does not depend on budgets or battlefield values while this dependence is polynomial in the bounds for discrete Blotto. 
The two lines of work also differ in more profound ways. First and foremost, the approach employed here is fundamentally different: it aims at mixing known solutions for the related Lotto game while solutions to the discrete Blotto games are more agnostic so that it is unclear what the marginals of the resulting strategy are. In particular, the present approach allows us sample from $\eps$-Nash equilibria in the asymmetric-value case whereas this setup is currently out of reach for solutions to the discrete Blotto game.

Finally, note that our approach also yields new (existential) results for the discrete Blotto game. Since they are not the focus of our contribution, they are relegated to the appendix.

\medskip

\noindent {\bf Our contributions.}
All of the above solutions for two-player games have consisted in constructing explicit solutions. Because of the budget constraints, these strategies can be decomposed in two parts: \emph{marginal distributions} that indicate which (random) strategy to play on each battlefield and a \emph{coupling} that correlates the marginal strategies in such a way to ensure that the budget constrained is satisfied almost surely. 

The first question may be studied independently of the second by considering what is known as the (General) \emph{Lotto} game~\cite{BelCov80}. In this game the budget constraint need only be enforced in \emph{expectation} with respect to the randomization of the mixed strategies. While this setup lacks a defining characteristic of the Blotto game (fixed budget), it has the advantage of landing itself to more amenable computations. Indeed, unlike the Blotto game, a complete solution to the Lotto game was recently proposed in~\cite{KovRob21} where the authors describe an explicit Nash equilibrium in the most general case: asymmetric budget, asymmetric and heterogeneous values.

In light of this progress a natural question is whether the marginal solutions  discovered in~\cite{KovRob21} can be \emph{coupled} in such a way that the budget constraint is satisfied almost surely. We provide a positive answer to this question by appealing to an existing result from the theory of \emph{joint mixability}~\cite{wangJointMixability2016}. Mixability asks the following question: Can $n$ random variables $X_1, \ldots, X_n$ with prescribed marginal distributions $X_i\sim  P_i$, be coupled in such a way that $\var(X_1+ \cdots + X_n)=0$. Joint mixability is precisely the step required to go from a Lotto solution to a Blotto one by coupling the marginals of the Lotto solution in such a way that the budget constraint is satisfied.

In this paper we exploit a simple and new connection between joint mixability and the theory of \emph{multi-marginal couplings} that has recently received a regain of interest in the context of optimal transport~\cite{AguCar11,DiMGerNen17,AltBoi20}. In multi-marginal optimal transport, the goal is to optimize a cost over the space of couplings with given marginals. Unlike the case of two marginals that arises in traditional optimal transport, this question raises significant computational challenges and often leads to  NP-hardness~\cite{AltBoi21}. 
In the language of optimization, joint mixability merely asks if the set of constraints is nonempty. We propose an algorithmic solution to the Blotto problem by efficiently constructing a coupling that satisfies the budget constraint almost surely and can be easily sampled from. Our construction relies on three key steps: first we reduce the problem to a small number of marginals to bypass the inherent NP hardness of multi-marginal problems, second we discretize the marginals and finally, we employ a multi-marginal version of the Sinkhorn algorithm~\cite{Sin64,SinKno67} to construct a coupling of the discretized marginals. After a simple smoothing step, we produce a sampling with continuous marginals that are close to the ones prescribed by the Lotto solutions and from which it is straightforward to sample. Furthermore, we quantify the combined effect of discretization error and of the Sinkhorn algorithm on the value of the game,  effectively leading to an \emph{approximate} Nash equilibrium and even to an approximately optimal solution in the case of symmetric values.

We exhibit tight---or near-tight in the asymmetric values case---conditions for the mixability of \emph{specific} Lotto solutions into Blotto solutions; see Corollaries~\ref{cor:symvals} and~\ref{cor:stability} below. While these conditions are reasonable and cover most cases, some heavily skewed games, either in terms of budget asymmetry or values inhomogeneity, are not covered by our results. We leave it as an open question to exhibit Lotto strategies that can be mixed into Blotto ones even for such games.

The rest of this paper is organized as follows. In the next section, we recall the solution for the Lotto game and show that they can be turned into solutions for the Blotto game. This existential result simply appeals to existing results of joint mixability. We move from an existential to an algorithmic result in Section~\ref{SE:Reductions} by proceeding in three steps: first we reduce the problem to the case $n=4$, then we discretize the problem and finally we apply Sinkhorn algorithm to couple the resulting marginals in a appropriate fashion. The main product of Section~\ref{SE:Reductions} is Algorithm~\ref{Algo:Sample} which shows how to sample from an approximate solution to the Blotto game. Finally, we provide a detailed complexity analysis for this algorithm in Section~\ref{SE:Errors}, showing in particular, that it runs in time polynomial in the parameters of the Blotto game and the approximation error $\eps$. Finally, our techniques also yield new results for the discrete Blotto game largely studied by~\cite{hart2008discrete,hortala2012pure} that are of independent interest. We postpone them to the appendix.

\medskip

\noindent{\sc Notation.} For any integer $n$, define $[n]=\{1, \ldots, n\}$. We use ${\bf 1}$ to denote an all-ones vector or tensor. Note that the dimension of this vector will be clear from the context but may vary across occurrences.  For any two vectors $x,y$, we denote their entrywise (Hadamard) product $x\odot y$ and their entrywise division $x \oslash y$ whenever $y$ has only nonzero entries.
For any two real numbers $a,b$ we denote by $a\vee b$ their maximum and by $a\wedge b$ their minimum.

\section{Solutions for Blotto and Lotto games} \label{SE:Model}

The goal of this section is to describe the Blotto game and its connection to the Lotto game for which explicit solutions are known. We first recall a solution for the Lotto game derived in~\cite{KovRob21} and show that it can be readily turned into a Blotto solution  using the theory of joint mixability.

\subsection{The Blotto game}
The classical two-player \emph{Blotto} game is formalized as follows. Two players, respectively denoted by   $A$ and $B$, are competing over $n \ge 2$ battlefields denoted by $i \in [n]$. Since we focus on two-player games where both players obey the same rules,  it will be convenient when describing the game to denote by $P \in \{A,B\}$ either player and by $\bar P$ the other player so that $(P,\bar P)\in \{(A,B), (B,A)\}$.

The \emph{datum} of a Blotto game is as follows. Player $P \in \{A,B\}$  has a total budget of  $T_P>0$ to allocate across the $n$ battlefields. Moreover, she valuates battlefield $i \in [n]$ to $v_{P,i}>0$ which may differ from $v_{\bar P,i}$. Without loss of generality,  we  assume that $T_A \ge T_B$ to break symmetry and that 
\begin{equation}
    \label{EQ:sumvi}
    \sum_{i \in [n]} v_{A,i} = \sum_{i \in [n]} v_{B,i} =1 
\end{equation}
Indeed, multiplying  the value of all battlefields by the sames' constant has no impact on the players' strategies.

The \emph{rules} of the Blotto game are as follows. 
A {pure strategy}  for player $P$ is an allocation vector $x_P =(x_{P,1}, \ldots, x_{P,n})$ where $x_{P,i}>0$ is the amount allocated to battlefield $i \in [n]$. A mixed strategy for player $P$ is a probability distribution over pure strategies. A salient feature of the Blotto game is that a player $P$ is constrained to playing strategies that satisfy the budget  constraint:  $x_{P,1}+\ldots x_{P,n} \le T_P$. In turn, admissible mixed strategies for the Blotto games are random vectors $X_P \in \R^n$ such that
\begin{equation}
    \label{EQ:Blotto-budget}
    \sum_{i=1}^n X_{P,i}\le T_P \quad \text{almost surely.}
\end{equation}

Given two pure strategies $x_P$ and $x_{\bar P}$ for players $P$ and $\bar P$ respectively, player $P$  wins battlefield $i \in [n]$ if $x_{P,i}>x_{\bar P,i}$  and receives a reward $v_{P,i}>0$. Ties $x_{P,i}=x_{\bar P,i}$ are broken arbitrarily as they are are irrelevant for our analysis.

The existence of Nash equilibria is a consequence of standard game theoretic arguments~\cite{Ren99}. Unfortunately, these general results say little about the structure of equilibrium strategies. At the end of this section, we make partial progress towards this question by describing the marginals of such equilibrium strategies. However, these remain existential results in essence.

This is in stark contrast with the associated Lotto game, described in the following section, where the hard budget constraint is dropped in favor of a constraint in expectation, and whose explicit solutions have been computed.

\subsection{The associated Lotto game}
\label{SE:Lotto}

A \emph{Lotto} game  has the same data and rules as its associated Blotto game except for the almost sure budget constraint~\eqref{EQ:Blotto-budget} which is relaxed to the following \emph{expected} budget constraint:
\begin{equation}
\label{EQ:const_lotto}
     \sum_{i=1}^n \E[X_{P,i}]\le T_P
\end{equation}

This relaxation greatly simplifies the game. In fact, Kovenock and Roberson~\cite{KovRob21} have recently elicited an  explicit characterization of a non-trivial Nash equilibrium  for the most general version of the Lotto game to date; see Table~\ref{fig:SOTA}. In the rest of Section~\ref{SE:Lotto}, we describe their solution in details since it is the basis for ours.

\begin{table}
    \centering
\begin{tabular}{l|c|c|c|c|c|c|}
\cline{2-7}
    \multicolumn{1}{c|}{} &  Continuous & Asymm.\  & Heterogeneous  & Asymm.\ & More than 3 & Complete \\    
        \multicolumn{1}{c|}{} &  Strategy &  Budget &  Values &   Values & Battlefields& Results\\
          \hhline{~======}
     \cite{BorVil38}&{\color{ForestGreen}\checkmark}
&{\color{red}\xmark}&{\color{red}\xmark}&{\color{red}\xmark}&{\color{red}\xmark}&{\color{ForestGreen}\checkmark}\\
     \cite{GroWag50}&{\color{ForestGreen}\checkmark}&{\color{red}\xmark}&{\color{red}\xmark}&{\color{red}\xmark}&{\color{ForestGreen}\checkmark}&{\color{ForestGreen}\checkmark}\\
          \cite{weinstein2012two}&{\color{ForestGreen}\checkmark}&{\color{red}\xmark}&{\color{red}\xmark}&{\color{red}\xmark}&{\color{ForestGreen}\checkmark}&{\color{ForestGreen}\checkmark}
\\

     \cite{Gro50}&{\color{ForestGreen}\checkmark}&{\color{red}\xmark}&{\color{ForestGreen}\checkmark}&{\color{red}\xmark}&{\color{ForestGreen}\checkmark}&{\color{ForestGreen}\checkmark}\\
     \cite{Las02}&{\color{ForestGreen}\checkmark}&{\color{red}\xmark}&{\color{ForestGreen}\checkmark}&{\color{red}\xmark}&{\color{ForestGreen}\checkmark}&{\color{ForestGreen}\checkmark}\\
          \cite{Tho18}&{\color{ForestGreen}\checkmark}&{\color{red}\xmark}&{\color{ForestGreen}\checkmark}&{\color{red}\xmark}&{\color{ForestGreen}\checkmark}&{\color{ForestGreen}\checkmark}
\\
    \cite{GroWag50}&{\color{ForestGreen}\checkmark}&{\color{ForestGreen}\checkmark}&{\color{ForestGreen}\checkmark}&{\color{red}\xmark}&{\color{red}\xmark}&{\color{ForestGreen}\checkmark}\\
     \cite{Rob06}&{\color{ForestGreen}\checkmark}&{\color{ForestGreen}\checkmark}&{\color{red}\xmark}&{\color{red}\xmark}&{\color{ForestGreen}\checkmark}&{\color{ForestGreen}\checkmark}\\
          \cite{MacMas15}&{\color{ForestGreen}\checkmark}&{\color{ForestGreen}\checkmark}&{\color{ForestGreen}\checkmark}&{\color{red}\xmark}&{\color{red}\xmark}&{\color{ForestGreen}\checkmark}
\\
     \cite{SchLoiSas14}&{\color{ForestGreen}\checkmark}&{\color{ForestGreen}\checkmark}&{\color{ForestGreen}\checkmark}&{\color{red}\xmark}&{\color{ForestGreen}\checkmark}&{\color{red}\xmark}
\\
     \cite{KovRob21}&{\color{ForestGreen}\checkmark}&{\color{ForestGreen}\checkmark}&{\color{ForestGreen}\checkmark}&{\color{ForestGreen}\checkmark}&{\color{ForestGreen}\checkmark}&{\color{red}\xmark}
\\
     \hhline{~======}
     This paper & {\color{ForestGreen}\checkmark} &{\color{ForestGreen}\checkmark}&{\color{ForestGreen}\checkmark}&{\color{ForestGreen}\checkmark}&{\color{ForestGreen}\checkmark}&{\color{ForestGreen}\checkmark}\\
     \cline{2-7}
\end{tabular}   
    \caption{Variants of the continuous Blotto game and their solutions. The last column, ``complete results'' indicates whether results obtained hold with possibly strong assumptions on the different values (for instance, there always exist more than 3 battlefields with the exact same value). }
    \label{fig:SOTA}
\end{table}
Finding an optimal strategy for the Lotto game amounts to finding a stationary point for an  optimization problem subject to constraints of the form~\eqref{EQ:const_lotto}. Because of linearity of expectation, the associated Lagrangian is decomposed as the sum of $n$ terms, one per battlefield, that are each mathematically equivalent to an ``all-pay'' auction whose solutions  are well known.

More explicitly,  Nash equilibria of the Lotto problem depend on two parameters $\gamma\geq 0$ and $\lambda\geq 0$, that are set later on. First, given any $\gamma \geq 0$, consider the subsets of battlefields $\cN(\gamma)$ that are  at least $\gamma$-times more valuable to  $A$ than to $B$:
\[
\mathcal{N}(\gamma) = \{ i \in [n], \ \frac{v_{A,i}}{v_{B,i}} \geq \gamma \}\,.
\]

Given a scaling parameter  $\lambda \geq 0$ to be defined later, the mixed strategy of player $A$ at equilibrium prescribes to allocate a (random) budget of $X_{A,i}$ to battlefield $i$ with distribution given by:

\def\arraystretch{2.2}
$$
X_{A,i}\sim \left\{\begin{array}{ll}
\displaystyle {\sf Unif}\left[0,  \frac{\gamma v_{B,i}}{\lambda}\right] & \text{if } i \in \cN(\gamma)\\
\displaystyle \left(1-\frac{v_{A,i}}{\gamma v_{B,i}}\right) \delta_0 + \frac{v_{A,i}}{\gamma v_{B,i}} {\sf Unif}\left[0,  \frac{v_{A,i}}{\lambda}\right] & \text{if } i \notin \cN(\gamma)\,.
\end{array}\right.\,,
$$
where $\delta_0$ denotes the Dirac point mass at $0$. 

The strategy of player $B$ is given by
$$
X_{B,i}\sim \left\{\begin{array}{ll}
\displaystyle \left(1-\frac{\gamma v_{B,i}}{v_{A,i}}\right) \delta_0 + \frac{\gamma v_{B,i}}{ v_{A,i}} {\sf Unif}\left[0, \frac{ \gamma v_{B,i}}{\lambda}\right] & \text{if } i \in \cN(\gamma)\,,\\
\displaystyle {\sf Unif}\left[0,  \frac{v_{A,i}}{\lambda}\right] & \text{if } i \notin \cN(\gamma)
\end{array}\right.\,.
$$
Note that the strategy of $A$ and $B$ are the same except that the roles of $v_{A,i}$ and $\gamma v_{B,i}$ are switched. In that sense, $\gamma$ plays the role of an ``exchange" rate that accounts for discrepancies between budgets and valuations across the two players.
\def\arraystretch{1}

It remains to find the parameters $\gamma$ and $\lambda$ using the budget constraints. For this set of strategies, saturating the total budget constraint~\eqref{EQ:Blotto-budget} readily yields the following two equations:
\begin{align}
\lambda T_A & = \sum_{i \in \mathcal{N}(\gamma)}\frac{\gamma v_{B,i}}{2} +\sum_{i \not\in \mathcal{N}(\gamma)} \frac{(v_{A,i})^2}{2\gamma v_{B,i}}=\frac12\sum_{i=1}^n(\gamma v_{B,i})\wedge \frac{(v_{A,i})^2}{\gamma v_{B,i}}\,, \label{EQ:Lambda1}\\
\lambda T_B & =  \sum_{i \in \mathcal{N}(\gamma)} \frac{(\gamma v_{B,i})^2}{2v_{A,i}} + \sum_{i \not\in \mathcal{N}(\gamma)}\frac{ v_{A,i}}{2}=\frac12\sum_{i=1}^n\frac{(\gamma v_{B,i})^2}{v_{A,i}}\wedge v_{A,i} \,.\label{EQ:Lambda2}
\end{align}

Any pair $(\gamma, \lambda)$ solving the above system of two equations yields a Nash equilibrium. It remains to show that such solutions may be computed efficiently. Observe that eliminating $\lambda$ from the equations yields the following nonlinear equation in $\gamma$:

\begin{align}
    f(\gamma):&=\gamma^3\left(T_A \sum_{i \in \mathcal{N}(\gamma)}\frac{v_{B,i}^2}{v_{A,i}}\right)-\gamma^2T_B\sum_{i \in \mathcal{N}(\gamma)} v_{B,i}+ \gamma T_A\sum_{i \not\in \mathcal{N}(\gamma)}v_{A,i}-T_B\sum_{i \not\in \mathcal{N}(\gamma)} \frac{v_{A,i}^2}{v_{B,i}}\nonumber\\
    &= \gamma T_A\sum_{i=1}^n \left(\gamma^2\frac{v_{B,i}^2}{v_{A,i}}\wedge v_{A,i}\right)  -T_B\sum_{i=1}^n \frac{v_{A,i}}{v_{B,i}}\left(\gamma^2\frac{v_{B,i}^2}{v_{A,i}}\wedge v_{A,i}\right)=0     \label{EQ:defGamma}
\end{align}

Any solution $\gamma^*$ to this equation readily yields a unique $\lambda^*$ by plugging it into either~\eqref{EQ:Lambda1} or~\eqref{EQ:Lambda2}; both equations will yield the same solution by~\eqref{EQ:defGamma}. In turn, the existence and  efficient computation of solutions $\gamma^*$ to~\eqref{EQ:defGamma} are ensured by the following proposition. The bounds on $\gamma^*$ presented in the following proposition depend on the distance between the vectors of battlefield values $v_A:=(v_{A,1}, \ldots, v_{A,n})$ and $v_B:=(v_{B,1}, \ldots, v_{B,n})$. Interestingly the natural measure of distance that emerges is the $\chi^2$-divergence that commonly arises in information theory and statistics; see e.g.~\cite{PolWu22}. The $\chi^2$-divergence $\chi^2(u\|v)$  between two probability vectors $u=(u_1, \ldots, u_n)$ and $v=(v_1, \ldots, v_n)$ is defined by
    $$
    \chi^2(u\|v)=\sum_{i=1}^n \frac{u_i^2}{v_i}-1=\sum_{i=1}^n \left(\frac{u_i}{v_i}-1\right)^2v_i\,.
    $$
It is clear that $\chi^2(u\|v)\ge 0$ with equality if and only if $u=v$.

 \begin{proposition}\label{PR:Gamma} Equation \eqref{EQ:defGamma} has the following properties:
\begin{enumerate}
    \item It always has at least one and at most $3n+3$ solutions $\gamma^*$.
    \item Any solution $\gamma^*$  satisfies
    $$ \frac{T_B}{T_A} \frac{1}{1+\chi^2(v_B \|v_A)}\leq \gamma^* \leq  \frac{T_B}{T_A}(1+\chi^2(v_A\|v_B))
    $$
    \item Computing all solutions can be done in $\mathcal{O}(n\log n)$ operations.
\end{enumerate}
 \end{proposition}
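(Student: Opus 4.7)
The plan is to exploit the piecewise-cubic structure of $f$. Sort the ratios $r_i:=v_{A,i}/v_{B,i}$ increasingly as $r_{(1)}\le\cdots\le r_{(n)}$ and set $r_{(0)}:=0$, $r_{(n+1)}:=\infty$. On each interval $[r_{(k)},r_{(k+1)}]$ the set $\cN(\gamma)$ is constant, and inspection of \eqref{EQ:defGamma} shows that $f$ reduces to a polynomial in $\gamma$ of degree at most three. At each interior breakpoint $\gamma=r_{(k)}$ the two branches of $\min(\gamma^2 v_{B,i}^2/v_{A,i},\,v_{A,i})$ agree, so $f$ is a continuous piecewise cubic with at most $n+1$ pieces, immediately yielding the upper bound $3(n+1)=3n+3$ on the number of roots claimed in part~(1).

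For existence together with the bounds in part~(2), I will show that $f(\gamma_0)\le 0\le f(\gamma_1)$ for
\[
\gamma_0:=\frac{T_B}{T_A}\cdot\frac{1}{1+\chi^2(v_B\|v_A)},\qquad \gamma_1:=\frac{T_B}{T_A}(1+\chi^2(v_A\|v_B)),
\]
so that the intermediate value theorem produces a root in $[\gamma_0,\gamma_1]$. Writing $m_i(\gamma):=\min(\gamma,r_i)$, one checks that $\min(\gamma^2 v_{B,i}^2/v_{A,i},\,v_{A,i})=(v_{B,i}^2/v_{A,i})\,m_i^2$, so
\[
f(\gamma)\ =\ T_A\gamma\sum_i (v_{B,i}^2/v_{A,i})\,m_i^2\ -\ T_B\sum_i v_{B,i}\,m_i^2.
\]
The sequences $1/r_i$ and $m_i^2$ are oppositely ordered in $r_i$, so Chebyshev's sum inequality with weights $v_{B,i}$ yields $\sum_i (v_{B,i}^2/v_{A,i})\,m_i^2\le(1+\chi^2(v_B\|v_A))\sum_i v_{B,i}\,m_i^2$, which at $\gamma_0$ is exactly $f(\gamma_0)\le 0$ by the definition of $\gamma_0$. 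A symmetric application of Chebyshev with weights $v_{A,i}$ and the oppositely ordered sequences $(m_i/r_i)^2$ and $r_i$ gives $\sum_i v_{B,i}\,m_i^2\le(1+\chi^2(v_A\|v_B))\sum_i (v_{B,i}^2/v_{A,i})\,m_i^2$, hence $f(\gamma_1)\ge 0$.

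For part~(3), I would sort the $r_i$ in $O(n\log n)$ time and then scan the $n+1$ intervals. Moving from one interval to the next, exactly one battlefield switches between the two branches of the minimum, so the four coefficients of the current cubic update in $O(1)$; each cubic is then solved in closed form in $O(1)$, and its roots are retained only if they fall inside the current interval. The initial sort dominates, giving the claimed $O(n\log n)$ bound.

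The step I expect to require the most care is the bound in part~(2): the naive identity $\gamma^*=(T_B/T_A)\,\sum_i r_i c_i(\gamma^*)/\sum_j c_j(\gamma^*)$ only yields weighted-average bounds in terms of $\min_i r_i$ and $\max_i r_i$, which can be much weaker than the stated $\chi^2$-based bounds. Identifying the Chebyshev pairing that produces exactly $1+\chi^2(v_B\|v_A)$ and $1+\chi^2(v_A\|v_B)$ rather than an inferior constant is the delicate step.
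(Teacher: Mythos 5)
Your proof is correct, but part (2) is argued by a genuinely different route than the paper's. The paper works termwise: writing $f(\gamma)=\sum_i\bigl(\gamma T_A-T_B\tfrac{v_{A,i}}{v_{B,i}}\bigr)\bigl(\gamma^2\tfrac{v_{B,i}^2}{v_{A,i}}\wedge v_{A,i}\bigr)$, it bounds each minimum by $(T_B/T_A)^2v_{A,i}$ from above or below according to the sign of the first factor (this step uses the standing assumption $T_B\le T_A$), and sums. You instead factor the minimum as $(v_{B,i}^2/v_{A,i})\,m_i^2$ with $m_i=\gamma\wedge r_i$ and invoke the weighted Chebyshev sum inequality twice; both pairings are correctly oppositely ordered and the weights $v_{B,i}$, resp.\ $v_{A,i}$, sum to one, so the constants $1+\chi^2(v_B\|v_A)$ and $1+\chi^2(v_A\|v_B)$ come out exactly. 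Your argument is arguably cleaner and does not use $T_B\le T_A$ anywhere; the paper's is more elementary but tied to that assumption. Two small points. First, part (2) asserts that \emph{every} root lies in $[\gamma_0,\gamma_1]$, whereas you only state that IVT yields \emph{a} root there; but your Chebyshev bounds hold for all $\gamma>0$ and give $f(\gamma)\le\bigl(T_A\gamma(1+\chi^2(v_B\|v_A))-T_B\bigr)\sum_i v_{B,i}m_i^2<0$ on $(0,\gamma_0)$ and symmetrically $f>0$ on $(\gamma_1,\infty)$, so the full claim follows with one added sentence — say it. Second, for the "at most $3n+3$" count one should note that no piece of $f$ is the zero polynomial (on each interval either the cubic or the linear coefficient is strictly positive); the paper is equally silent on this. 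Your $O(1)$ incremental update of the cubic coefficients between consecutive intervals is the right way to justify part (3) and is actually more careful than the paper's account, which recomputes the sums implicitly.
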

The proof is  based on standard computations, hence postponed to Appendix \ref{SE:ProofOfPR:Gamma}, with the associated Algorithm \ref{Alg:Lotto}.
 
 \begin{remark}\label{rem:symvals}
In case of symmetric values, that is when $v_{A,i}=v_{B,i}=v_i$, the game is constant-sum and each player has a then unique\footnote{In the case of the Lotto game, it is natural to call a strategy an equivalence class of strategies with the same marginals.} optimal strategy given by a unique pair $(\gamma^*, \lambda^*)$~\cite{KovRob21}. In fact, in that case, the unique $(\gamma^*, \lambda^*)$  can be computed analytically as $\gamma^* = T_B/T_A$ and  $\lambda^* = T_B/(2T_A^2)$; this can be easily seen from Proposition~\ref{PR:Gamma} (point 2.), since $\chi^2(v_A\|v_B)=\chi^2(v_B\|v_A)=0$. With these parameters, the optimal strategy of player $A$ is to choose $X_{A,i}$ uniformly at random on $[0, 2T_Av_i]$ and that of player $B$ is to forfeit each battlefield with probability $1-\frac{T_B}{T_A}$ and, to choose $X_{B,i}$ uniformly at random on $[0, 2T_Av_i]$ on battlefield $i$ if not forfeited.
 \end{remark}

 \subsection{From Lotto to Blotto}
In the previous section, we described how to compute solutions of a Lotto game. To turn a strategy for the Lotto game into a strategy for the Blotto game, one can \emph{couple} the marginal strategies of a Lotto game, effectively turning the constraint~\eqref{EQ:const_lotto} on the expected budget into the almost sure budget constraint~\eqref{EQ:Blotto-budget}.

Stated otherwise, a solution to the associated Lotto game induces a solution to the original Blotto game if the random variables $\{X_{A,i}\}_{i \in [n]}$ (and similarly $\{X_{B,i}\}_i$) are \emph{jointly mixable}~\cite{wangJointMixability2016}.

\begin{definition}
A family of $k$ random variables $Z_1, \ldots, Z_k$ with finite expectations is \emph{jointly mixable} if there exists a coupling $\pi$ such that if $(Z_1, \ldots, Z_k)\sim \pi$,
$$
\sum_{i=1}^k Z_i = \sum_{i=1}^k \E[Z_i], \quad \text{almost surely.}
$$
In that case, the coupling $\pi$ is called a \emph{joint mix}.
\end{definition}
Obviously, not all random $k$-tuples variables are jointly mixable. Take for example $Z_1, Z_2$ and $Z_3$ to be Bernoulli with parameter $1/2$. Then $\E[Z_1]+\E[Z_2]+\E[Z_3]=3/2$ whereas there is no coupling of the $Z_i$s such that their some equals a fractional number.

While the full characterization of jointly mixable distribution is a complex question, some  conditions, either sufficient or necessary, for joint mixability have been derived. The following proposition is a simple extension of a result of \cite{wangJointMixability2016} (see also,~\cite{Zim20}) on the mixability of distributions with monotone densities.
\begin{proposition}\label{PR:JointMixability}
For $i=1, \ldots, k$, let $p_i\in [0,1]$, $b_i>0$ be fixed parameters and let $Z_i$ be a random variable with distribution given by the following mixture:
\begin{equation}
    \label{EQ:Zimixt}
Z_i\sim (1-p_i) \delta_0 + p_i{\sf Unif}\left[0, b_i \right]\,.
\end{equation}
Then $Z_1,\ldots,Z_k$ are jointly mixable if and only if
\begin{equation}
    \label{EQ:JMcond1}
    \max_{1\le i \le k} b_i\le \frac12\sum_{i=1}^k p_ib_i
\end{equation}
\end{proposition}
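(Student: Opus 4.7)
The plan is to prove the two implications of the equivalence separately.

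\emph{Necessity.} Assume a joint mix $\pi$ exists, so $(Z_1,\ldots,Z_k)\sim\pi$ satisfies $\sum_i Z_i = S := \sum_i \E[Z_i] = \tfrac12\sum_i p_i b_i$ almost surely. Since $Z_j\geq 0$ almost surely for every $j$, this forces $Z_i \leq S$ almost surely for every $i$. On the other hand, whenever $p_i>0$ the marginal law of $Z_i$ has positive density $p_i/b_i$ on $(0,b_i]$, and in particular charges every open neighborhood of $b_i$; the case $p_i=0$ is degenerate since $Z_i\equiv 0$ and the $i$-th variable can simply be dropped. Hence $b_i\leq S$ for all $i$, which is exactly~\eqref{EQ:JMcond1}.

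\emph{Sufficiency.} The strategy is to invoke the mixability theorem for distributions with non-increasing densities from~\cite{wangJointMixability2016}, which asserts that distributions supported on $[0,b_i]$ with non-increasing densities are jointly mixable if and only if $\max_i b_i \leq \sum_i \E[Z_i]$. Each $Z_i$ has what amounts to a \emph{generalized} non-increasing density on $[0,b_i]$, namely an atom of mass $1-p_i$ at $0$ together with the constant density $p_i/b_i$ on $(0,b_i]$. Since $\sum_i \E[Z_i] = \tfrac12\sum_i p_i b_i$, the condition~\eqref{EQ:JMcond1} is precisely the mean--support inequality demanded by~\cite{wangJointMixability2016}.

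The main technical obstacle is that the cited theorem is stated for atom-free densities, whereas each $Z_i$ carries an atom at the left endpoint $0$. This is circumvented by a smoothing argument. For $\delta > 0$, let $Z_i^\delta$ be the random variable with (genuine) non-increasing density equal to $(1-p_i)/\delta + p_i/b_i$ on $[0,\delta]$ and $p_i/b_i$ on $(\delta, b_i]$. A direct computation yields $\E[Z_i^\delta] = \E[Z_i] + (1-p_i)\delta/2$, so $\E[Z_i^\delta] \downarrow \E[Z_i]$ as $\delta \to 0^+$, and in particular $\max_i b_i \leq \sum_i \E[Z_i] \leq \sum_i \E[Z_i^\delta]$, so the cited theorem produces joint mixes $\pi^\delta$ with $\sum_i Z_i^\delta = S^\delta := \sum_i \E[Z_i^\delta]$ almost surely. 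Since all $\pi^\delta$ are supported on the compact box $\prod_i [0,b_i]$, the family $\{\pi^\delta\}_{\delta>0}$ is tight; and since $Z_i^\delta \Rightarrow Z_i$ weakly as $\delta\to 0^+$ while $S^\delta \to S$, any weak subsequential limit $\pi^*$ has the correct marginals and satisfies $\sum_i Z_i = S$ almost surely (the constraint passes to the limit via Skorohod representation), providing the desired joint mix.
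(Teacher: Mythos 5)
Your proof is correct and follows essentially the same route as the paper's: necessity from the fact that the law of $Z_i$ charges every neighborhood of $b_i$ while the almost-sure sum bounds each $Z_i$ by $\frac12\sum_j p_j b_j$, and sufficiency by invoking the monotone-density mixability theorem of \cite{wangJointMixability2016} after smoothing the atom at $0$ and passing to a weak limit of the resulting joint mixes. The only (immaterial) differences are that the paper's smoothing preserves the mean exactly by adjusting the mixture weight, whereas yours lets the mean drift by $O(\delta)$, and that you quote only the lower mean--support inequality of the cited theorem---which is harmless here since for non-increasing densities on $[0,b_i]$ one has $\E[Z_i]\le b_i/2$, so the upper inequality follows from the lower one.
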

This proposition is a consequence of few computations; its proof is delayed to Section~\ref{SE:ProofJointMixability}.

We are now in a position to state the main result of this section: the marginal distributions of the Lotto game described above are jointly mixable into a solution to the Blotto game. To that end, we instantiate Proposition~\ref{PR:JointMixability} to the parameters of the marginal distributions described in Section~\ref{SE:Lotto}.

\begin{theorem}\label{TH:LottotoBlotto}
Let $\gamma^*,\lambda^*$ be the parameters of Nash equilibrium of the Lotto game described 
in Section~\eqref{SE:Lotto}. Then the marginal distributions can be coupled into a Nash equilibrium for the corresponding Blotto game if and only if
\begin{equation}
    \label{EQ:blotto_cond_mix}
     \max_{i \in [n]} (\gamma^* v_{B,i} \wedge v_{A,i}) \le \lambda^*T_B\,.
\end{equation}
\end{theorem}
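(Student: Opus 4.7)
The plan is to apply Proposition~\ref{PR:JointMixability} separately to the $n$-tuples $\{X_{A,i}\}_{i\in[n]}$ and $\{X_{B,i}\}_{i\in[n]}$ described in Section~\ref{SE:Lotto}, and to check that the two resulting joint-mixability conditions telescope to \eqref{EQ:blotto_cond_mix} via the saturation equations~\eqref{EQ:Lambda1}--\eqref{EQ:Lambda2}. The final ingredient is that coupling marginals does not alter the payoffs of either player, so any joint mix of the Lotto equilibrium is automatically a Blotto Nash equilibrium.

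Concretely, first I would read off the mixture parameters of~\eqref{EQ:Zimixt} case by case. For player $A$ one has $(p_i,b_i)=(1,\gamma^* v_{B,i}/\lambda^*)$ on $\mathcal{N}(\gamma^*)$ and $(p_i,b_i)=(v_{A,i}/(\gamma^* v_{B,i}),\, v_{A,i}/\lambda^*)$ off $\mathcal{N}(\gamma^*)$. The key observation is that, uniformly in $i$, the definition of $\mathcal{N}(\gamma^*)$ forces
\[
b_i=\frac{\gamma^* v_{B,i}\wedge v_{A,i}}{\lambda^*}, \qquad p_i b_i = \frac{1}{\lambda^*}\Bigl(\gamma^* v_{B,i}\wedge\tfrac{v_{A,i}^2}{\gamma^* v_{B,i}}\Bigr).
\]
Substituting the latter into~\eqref{EQ:Lambda1} yields $\sum_i p_i b_i=2T_A$, so Proposition~\ref{PR:JointMixability} gives joint mixability of $A$'s marginals iff $\max_i(\gamma^* v_{B,i}\wedge v_{A,i})\le \lambda^* T_A$. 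The analogous bookkeeping for player $B$—now with $(p_i,b_i)=(\gamma^* v_{B,i}/v_{A,i},\,\gamma^* v_{B,i}/\lambda^*)$ on $\mathcal{N}(\gamma^*)$ and $(1,v_{A,i}/\lambda^*)$ otherwise—produces the same $b_i$ but $\sum_i p_i b_i=2T_B$ via~\eqref{EQ:Lambda2}, hence the condition $\max_i(\gamma^* v_{B,i}\wedge v_{A,i})\le \lambda^* T_B$. Since $T_A\ge T_B$, the latter is the binding one and coincides with~\eqref{EQ:blotto_cond_mix}.

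It then remains to argue that joint-mixability of both sets of marginals really is equivalent to being able to couple them into a Blotto Nash equilibrium. This is because the payoff $\sum_i v_{P,i}\Pr(X_{P,i}>X_{\bar P,i})$ decomposes battlefield by battlefield, and within each battlefield $X_{P,i}$ and $X_{\bar P,i}$ are independent draws (the players randomize independently), so $V_P$ depends only on the per-battlefield marginal distributions. A joint mix leaves those marginals untouched while promoting the expected-budget constraint to the almost-sure budget constraint, so the resulting strategies are admissible in Blotto and inherit the Lotto best-response property against any Blotto strategy of the opponent, which is \emph{a fortiori} a Lotto strategy. Conversely, saturation of the budget at the Lotto equilibrium forces $\sum_i X_{P,i}=T_P$ a.s.\ under any Blotto-admissible coupling with the prescribed marginals, i.e.\ a joint mix, so the necessity of~\eqref{EQ:blotto_cond_mix} follows from the \emph{if and only if} already built into Proposition~\ref{PR:JointMixability}. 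The only mildly technical step is the unified identification of $b_i$ on and off $\mathcal{N}(\gamma^*)$; the rest is direct verification.
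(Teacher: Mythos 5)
Your proof is correct and follows essentially the same route as the paper: instantiate Proposition~\ref{PR:JointMixability} for each player's marginals, use the saturation identities~\eqref{EQ:Lambda1}--\eqref{EQ:Lambda2} to evaluate $\sum_i p_i b_i$ as $2T_A$ and $2T_B$ respectively, and conclude that the player-$B$ condition is binding since $T_A\ge T_B$. Your additional paragraph spelling out why joint mixability of the Lotto marginals is equivalent to the existence of a Blotto Nash equilibrium (marginals determine payoffs, Blotto strategies form a subset of Lotto strategies, and budget saturation forces any admissible coupling to be a joint mix) is a point the paper leaves implicit, and it is argued correctly.
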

Inequality~\eqref{EQ:blotto_cond_mix} is simply an instantiation of~\eqref{EQ:JMcond1}, hence details are postponed to Section~\ref{SE:ProofJointMixability}

Condition~\eqref{EQ:blotto_cond_mix} of the previous theorem relies on the values $\gamma^*, \lambda^*$ that define the solution of the Lotto game. In light of the bounds obtained in Proposition~\ref{PR:Gamma}, these parameters may be eliminated to produce a sufficient condition for the existence of said solution for Blotto games with symmetric values. Recall that in this case, the game is constant-sum so a solution is, in fact, an optimal strategy. This result is captured in the following corollary which is  a straightforward consequence of Theorem~\ref{TH:LottotoBlotto}.

 \begin{corollary}
 \label{cor:symvals}
Assume symmetric values: $v_{A}=v_{B}=v$. Then the marginal distributions of the optimal Lotto strategy described in Section~\ref{SE:Lotto} with $\gamma^*=T_B/T_A$ and $\lambda^*=T_B/(2T_A^2)$ can be coupled into an optimal strategy for the corresponding Blotto game if and only if
 $$
 \max_{i \in [n]} v_i \le \frac{T_B}{2T_A}\,.
 $$
 \end{corollary}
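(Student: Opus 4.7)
The plan is to derive this corollary directly from Theorem~\ref{TH:LottotoBlotto} by specializing it to the symmetric-value case using the closed-form expressions for $\gamma^*$ and $\lambda^*$ recorded in Remark~\ref{rem:symvals}. That is, I would first recall that when $v_{A,i}=v_{B,i}=v_i$ for all $i$, the characterization~\eqref{EQ:defGamma} is solved analytically by $\gamma^* = T_B/T_A$ and $\lambda^* = T_B/(2T_A^2)$, so these are the only parameters that need to be plugged into the mixability condition~\eqref{EQ:blotto_cond_mix}.

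Next, I would simplify the left-hand side of~\eqref{EQ:blotto_cond_mix}. Under symmetric values, $\gamma^* v_{B,i}\wedge v_{A,i} = (T_B/T_A)\, v_i \wedge v_i$. Since we assumed $T_A\ge T_B$, we have $\gamma^* = T_B/T_A \le 1$, so this minimum equals $(T_B/T_A) v_i$. Taking the maximum over $i\in[n]$ gives $(T_B/T_A)\max_i v_i$. Meanwhile, the right-hand side is $\lambda^* T_B = T_B^2/(2T_A^2)$. So condition~\eqref{EQ:blotto_cond_mix} becomes
\[
\frac{T_B}{T_A}\max_{i\in[n]} v_i \;\le\; \frac{T_B^2}{2T_A^2},
\]
which, after canceling the positive factor $T_B/T_A$, is equivalent to $\max_{i\in[n]} v_i \le T_B/(2T_A)$, as claimed.

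Finally, I would observe that the ``only if'' direction is inherited for free from the ``only if'' direction of Theorem~\ref{TH:LottotoBlotto} (and ultimately from Proposition~\ref{PR:JointMixability}), since in the symmetric-value case the pair $(\gamma^*,\lambda^*)$ is \emph{unique} (as emphasized in Remark~\ref{rem:symvals}), so there is no other candidate Lotto solution with the prescribed marginals that one might hope to mix instead. There is essentially no obstacle here: the corollary is a one-line substitution into Theorem~\ref{TH:LottotoBlotto}, and the only mildly delicate point is checking $\gamma^*\le 1$ in order to resolve the minimum, which is immediate from the standing assumption $T_A\ge T_B$.
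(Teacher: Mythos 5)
Your proposal is correct and matches the paper's intended argument: the corollary is indeed obtained by substituting $\gamma^*=T_B/T_A\le 1$ and $\lambda^*=T_B/(2T_A^2)$ into the mixability condition~\eqref{EQ:blotto_cond_mix} of Theorem~\ref{TH:LottotoBlotto}, resolving the minimum via $\gamma^*\le 1$, and cancelling the factor $T_B/T_A$. The only (cosmetic) point you leave implicit is that in the constant-sum symmetric-value case a Nash equilibrium is automatically an optimal strategy, which the paper notes just before the corollary.
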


In fact a sufficient condition may be derived in the case of non-symmetric values. 

 \begin{corollary}
 \label{cor:stability}
 Assume that  battlefield  are balanced in the sense that there exits $r\in (0,1)$ such that
 $$
 \chi^2(v_A\|v_B) \vee \chi^2(v_B\|v_A) \le r^2
 $$
Then, the marginal distributions of the optimal Lotto strategy described in Section~\ref{SE:Lotto}  can be coupled into an optimal strategy for the corresponding Blotto game as long as
$$
\max_{i \in [n]} v_{B,i} \le \frac{T_B}{2T_A} (1-r)\,. 
$$

 \end{corollary}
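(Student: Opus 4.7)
The plan is to invoke Theorem~\ref{TH:LottotoBlotto} and verify condition~\eqref{EQ:blotto_cond_mix}, namely $\max_i w_i \le \lambda^* T_B$, where I write $w_i := \gamma^* v_{B,i} \wedge v_{A,i}$. I would bound the left-hand side from above and the right-hand side from below, expressing everything in terms of the ratio $\tau := T_B/T_A$ and the parameter $r$.

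For the upper bound on $\max_i w_i$, I use the trivial estimate $w_i \le \gamma^* v_{B,i}$. Proposition~\ref{PR:Gamma}(2) combined with the balance hypothesis $\chi^2(v_A\|v_B)\vee \chi^2(v_B\|v_A)\le r^2$ yields $\gamma^* \le \tau(1 + r^2)$, and together with the assumption $\max_i v_{B,i} \le \tau(1-r)/2$ this gives $\max_i w_i \le \tau^2(1+r^2)(1-r)/2$.

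For the lower bound on $\lambda^* T_B$, the starting point is the identity~\eqref{EQ:Lambda2}, which a direct case analysis rewrites as $2\lambda^* T_B = \sum_i w_i^2/v_{A,i}$. By Cauchy--Schwarz applied to $(w_i/\sqrt{v_{A,i}})_i$ and $(\sqrt{v_{A,i}})_i$ together with $\sum_i v_{A,i}=1$, I get $2\lambda^* T_B \ge (\sum_i w_i)^2$. To control $\sum_i w_i$, I set $\rho_i := v_{A,i}/v_{B,i}$ and note that under the probability vector $v_B$ one has $\E_B[\rho]=1$ and $\var_B(\rho)=\chi^2(v_A\|v_B)\le r^2$. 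The identity $\min(a,b)=(a+b-|a-b|)/2$ then gives $\sum_i w_i = \tfrac{1}{2}(\gamma^*+1)-\tfrac{1}{2}\E_B|\gamma^*-\rho|$, and one last Cauchy--Schwarz bounds $\E_B|\gamma^*-\rho| \le \sqrt{(\gamma^*-1)^2+r^2}$.

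Combining these estimates reduces the claim to a purely numerical inequality in $\tau$ and $r$, exploiting the sandwich $\gamma^* \in [\tau/(1+r^2),\, \tau(1+r^2)]$ from Proposition~\ref{PR:Gamma}(2). The main technical difficulty is this final algebraic reduction: the $(1+r^2)$ factor from the upper bound on $\gamma^*$ and the $\sqrt{(\gamma^*-1)^2+r^2}$ loss in the lower bound on $\sum_i w_i$ must combine precisely into the $(1-r)$ factor of the statement. A case split according to whether $\gamma^* \le 1$ or $\gamma^* > 1$ appears natural, and Corollary~\ref{cor:symvals} provides the sanity check at $r=0$.
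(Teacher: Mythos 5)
Your overall strategy---reduce to verifying the mixability condition~\eqref{EQ:blotto_cond_mix} of Theorem~\ref{TH:LottotoBlotto}, bound $\max_i w_i$ above by $\gamma^*\max_i v_{B,i}$, and bound $\lambda^* T_B$ below starting from the identity $2\lambda^* T_B=\sum_i w_i^2/v_{A,i}$ (which is correct)---is the same as the paper's. However, your lower bound on $\lambda^* T_B$ has a genuine gap: the chain Cauchy--Schwarz plus $\sum_i w_i=\frac{\gamma^*+1}{2}-\frac12\E_B|\gamma^*-\rho|\ge\frac{\gamma^*+1}{2}-\frac12\sqrt{(\gamma^*-1)^2+r^2}$ incurs an \emph{additive} loss of order $r^2/\bigl(4(1-\gamma^*)\bigr)$ relative to the leading term $\gamma^*\wedge 1$. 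Since $\gamma^*\asymp T_B/T_A$ can be arbitrarily small (the hypothesis $\max_i v_{B,i}\le\frac{T_B}{2T_A}(1-r)$ is satisfiable for any budget ratio by taking $n$ large), this additive loss is fatal: when $\gamma^*\le\sqrt{1+r^2}-1\approx r^2/2$ your lower bound on $\sum_i w_i$ is non-positive and yields no bound on $\lambda^* T_B$ at all, and an expansion with $\tau:=T_B/T_A=cr$ shows the ``purely numerical inequality'' you defer to reduces, to leading order in $r$, to $c^2\ge c/2$, which is \emph{false} for $\tau<r/2$. So the final algebraic reduction you flag as the main difficulty cannot be carried out on the whole parameter range; no case split on $\gamma^*\lessgtr 1$ will rescue it without strengthening an intermediate estimate.

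The paper avoids this trap by factoring $\gamma^*$ out \emph{before} estimating: from~\eqref{EQ:Lambda2},
$$
\frac{\lambda^* T_B}{\gamma^*}=\frac{1}{2\gamma^*}\sum_{i=1}^n\frac{(\gamma^* v_{B,i})^2}{v_{A,i}}\wedge v_{A,i}\ \ge\ \frac12\Bigl(\gamma^*\wedge\frac1{\gamma^*}\Bigr)\sum_{i=1}^n\frac{v_{B,i}^2}{v_{A,i}}\wedge v_{A,i}\,,
$$
so the quantity to be lower bounded, $\sum_i\frac{v_{B,i}^2}{v_{A,i}}\wedge v_{A,i}$, no longer involves $\gamma^*$ and is close to $1$: using $1-x^2\le 2-2x$ it is at least $1-{\sf TV}(v_A,v_B)\ge 1-\frac12\sqrt{\chi^2(v_B\|v_A)}\ge 1-r/2$. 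Combining with $\gamma^*\wedge\gamma^{*-1}\ge\frac{T_B}{T_A}\frac{1}{1+r^2}$ from Proposition~\ref{PR:Gamma} gives $\lambda^* T_B/\gamma^*\ge\frac{T_B}{2T_A}\frac{1-r/2}{1+r^2}\ge\frac{T_B}{2T_A}(1-r)$, uniformly in the budget ratio, and the $\gamma^*$ factors on the two sides of~\eqref{EQ:blotto_cond_mix} cancel. If you want to repair your argument, replace the $\sqrt{(\gamma^*-1)^2+r^2}$ step by an estimate whose loss is multiplicative in $\gamma^*$ (or simply divide by $\gamma^*$ first, as above).
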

 The proof of this result is based solely on computations; it is postponed to Section~\ref{SE:Proofcor:stability}

Note that the result of Corollary~\ref{cor:stability} is tight in the sense that if $r \to 0$ it recovers the result of Corollary~\ref{cor:symvals}. It is unclear whether the dependence in $r$ is sharp in our result and it is an interesting question to address in future work.

Under rather general conditions, the above two corollaries show the \emph{existence} of solutions with marginal distributions of the Lotto game derived in~\cite{KovRob21}. It remains to show that such a coupling may be realized efficiently. This is done in the next section.

\section{An efficient algorithm to compute solutions} \label{SE:Reductions}

Deriving solutions, either optimal strategies in the constant-sum setting or Nash equilibria, remains one of the major open problems surrounding the Blotto game. Previous attempts at this task have focused on deriving an \emph{explicit} coupling between marginals. This is possible in specific cases. For example, several explicit couplings between $n$ random variables $X_i \sim {\sf Unif}[0,1], i=1, \ldots, n$ are known~\cite{KnoSmi06,RusUck02}. In particular, this provides a solution to some Blotto problems with sufficient symmetry. However, this explicit approach fails for more general problems, and, in particular in the important case of asymmetric budget such as the one covered in Corollary~\ref{cor:symvals}. In this paper, we take another route by describing the efficient Algorithm~\LOTTOTOBLOTTO, whose pseudo-code is postponed to the Appendix \ref{SE:LOTTO2BLOTTO},  that computes an $\eps$-approximate solution with time complexity which is polynomial in $n$ and $1/\eps$.

In light of the previous section, our goal is to find an algorithm that efficiently computes a coupling between the marginal Lotto strategies described above. This task faces two major hurdles. 

On the one hand, the continuous nature of the marginals described above does not lend itself to efficient algorithms which typically work with discrete quantities. Instead, we propose to simply discretize the marginals at a scale of order $\eps>0$. In particular, this prevents us from replicating exactly the marginals of the Lotto game but we can show that the error employed in said discretization remains of the same order once propagated to the utility of a given player.

On the other hand, the mere description of a coupling between $n$ discrete marginals on $\cO(1/\eps)$ atoms is an object of size  $\cO(1/\eps^n)$, which is exponential in the number $n$ of battlefields. To overcome this limitation, we develop a careful scheme that allows us to reduce the problem to the case of $4$ marginals instead of $n$. 

Finally, we employ recent developments in computational optimal transport, to couple our 4 marginals using a variant of Sinkhorn iterations~\cite{Sin64, SinKno67, Cut13}.

\subsection{Reductions}

The typical size of a coupling with $n$ marginals is exponential in $n$. While this issue is, in general, hopeless to overcome, we can exploit some of the structure of the problem at hand. Indeed, a similar principle has been recently employed in multi-marginal optimal transport to devise polynomial-time algorithms under additional structure~\cite{AltBoi20}. More specifically, we reduce our problem to the case where there are only four marginals which remain mixable if the original marginals are mixable. 

This reduction is done in two steps. Recall that the marginals for the Lotto game described in Section~\ref{SE:Lotto} are either uniform distributions or mixtures of a uniform distribution with a Dirac point mass at zero. Our first step reduces to the case where $n-1$ marginals are uniform and only one is a mixture as above.  In our second step, we further reduce to the case where there are three uniform marginals and one mixture.

Throughout this section we focus on player $A$ for brevity. Reductions for player $B$ are analogous.

\subsubsection{Step 1: reduction to a single mixture}

The marginal distributions described in Section~\ref{SE:Lotto} consist of $|\cN(\gamma^*)|$ uniform distributions and $n-|\cN(\gamma^*)|$ mixtures of a uniform distribution and a Dirac point mass at 0 and our goal is to efficiently couple them into a joint mix coupling $\pi$ that has these marginals and satisfies the Blotto budget constraint. For clarity, we also regard uniform distributions as mixture distributions albeit with weight zero on the point mass. Otherwise, we say that a distribution is a \emph{strict} mixture. The goal of this first step is to reduce this coupling problem to the case where there are $n-1$ uniform distributions and one single strict mixture. To that end, we show that such a coupling $\pi$ may be obtained as a mixture of $n$ joint mixes $\pi_k, k=1, \ldots, n$:
$$
\pi=\sum_{k=1}^n q_k \pi_k\,, \quad q_k\ge 0,\ \sum_k q_k=1
$$
where the marginal distributions of $\pi_k$ consist of at most one strict mixture, the rest being uniform distributions. Moreover, this decomposition can be computed efficiently as the solution of a simple greedy procedure.

\begin{lemma}
\label{LEM:reduc_single_mixture}
Let $\gamma^*, \lambda^*$ be the parameter of a solution for the Lotto game and assume that the mixability condition~\eqref{EQ:blotto_cond_mix} 
 holds. Then, there exists a family $\pi_1, \ldots, \pi_n$ of couplings and a set of non negative weights $q_k\ge 0, \sum_k q_k =1$ such that
\renewcommand{\arraystretch}{2}
\begin{enumerate}

\item The marginal distributions of $(X_{A,1}^k, \ldots, X_{A,n}^k) \sim \pi_k$  are given by
$$
  X_{A,i}^k \sim  (1-p^{(k)}_i)\delta_0+ p^{(k)}_i
  {\sf Unif}\left[0, \frac{v_{A,i}\wedge(\gamma^* v_{B,i})}{\lambda^*} \right]   
$$
for some  $p^{(k)}_i \in [0,1], i, k \in[n]$ with at most one $p^{(k)}_i$ in $(0,1)$ for each $k$.
\item Each coupling $\pi_k, k=1, \ldots, n$ is a joint mix
\item The mixture of couplings 
\begin{equation}
    \label{EQ:mixcouplings}
\pi=\sum_{k=1}^n q_k \pi_k\,.
\end{equation}
is a solution for the Blotto game.
\item The total complexity of computing the  weights $q_k,p^{(k)}_i, i,k \in [n]$ scales as $\mathcal{O}\Big(n^2\log n\Big)$.

\end{enumerate}

\end{lemma}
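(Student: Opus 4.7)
The plan is to construct $\pi = \sum_k q_k \pi_k$ iteratively by peeling off one strict-mixture marginal per step. I would start by rewriting the Lotto marginals of Section~\ref{SE:Lotto} in the unified form $X_{A,i} \sim (1-p_i)\delta_0 + p_i \,{\sf Unif}[0, b_i]$, where $b_i = (v_{A,i} \wedge \gamma^* v_{B,i})/\lambda^*$ and $p_i = (v_{A,i}/(\gamma^* v_{B,i})) \wedge 1$, so that $p_i = 1$ precisely for $i \in \cN(\gamma^*)$. A direct computation using~\eqref{EQ:Lambda1} yields $\tfrac12\sum_i p_i b_i = T_A$, so that the global joint-mixability condition of Proposition~\ref{PR:JointMixability} reads $\max_i b_i \le T_A$; this is ensured by the hypothesis~\eqref{EQ:blotto_cond_mix} together with $T_A \ge T_B$.

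I would then sort the strict-mixture coordinates $\mathcal{M} := [n] \setminus \cN(\gamma^*)$ in decreasing order of $p_i$ and process them one at a time. At step $k$, I define $\pi_k$ by declaring: (i) every $i \in \cN(\gamma^*)$ a full uniform, $p_i^{(k)} = 1$; (ii) the $k$-th strict-mixture coordinate $i_k$ as the sole Bernoulli-uniform with weight $p_{i_k}^{(k)} \in (0,1]$ to be chosen; and (iii) every remaining $i \in \mathcal{M} \setminus \{i_1, \ldots, i_k\}$ as either a full uniform or identically zero, based on the residual bookkeeping. The scalars $q_k$ and $p_{i_k}^{(k)}$ are solved jointly so that summing the sub-coupling marginals reconstructs the target Lotto marginals and the mixability condition for $\pi_k$ holds.

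The delicate step is ensuring that each $\pi_k$ is itself a joint mix, i.e.~that $\max_{i:\,p_i^{(k)}>0} b_i \le \tfrac12 \sum_i p_i^{(k)} b_i$. I would tune $p_{i_k}^{(k)}$ so that this inequality is saturated (or satisfied with slack), then argue by induction that after step $k$ the residual mixability problem on the not-yet-peeled coordinates remains feasible, i.e.~$\max b_i \le \tfrac12 \sum p_i^{\rm rem} b_i$ continues to hold. This invariant is the main load-bearing piece of the argument, and its proof leans on the global hypothesis~\eqref{EQ:blotto_cond_mix} and on the monotonicity induced by the decreasing ordering of the $p_i$. This balancing between ``consuming'' a coordinate via $p_{i_k}^{(k)}$ and preserving enough mass for the subsequent $\pi_{k'}$ is the primary obstacle.

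Once each $\pi_k$ is established as a joint mix, linearity of expectations yields that $\pi = \sum_k q_k \pi_k$ has the correct Lotto marginals, and each $\pi_k$ satisfies $\sum_i X_{A,i}^k \le T_A$ almost surely (as it is a joint mix whose target mean is at most $T_A$), so the budget constraint~\eqref{EQ:Blotto-budget} for $\pi$ follows. The complexity claim is then immediate: sorting costs $\cO(n\log n)$, and each of the $n$ peeling iterations involves $\cO(n\log n)$ work to recompute the mixability thresholds and update the bookkeeping, for a total of $\cO(n^2 \log n)$.
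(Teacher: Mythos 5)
Your setup is correct and matches the paper's: writing every marginal as $(1-p_i)\delta_0+p_i\,{\sf Unif}[0,b_i]$ with $b_i=(v_{A,i}\wedge \gamma^* v_{B,i})/\lambda^*$, deducing $\tfrac12\sum_i p_ib_i=T_A$ from~\eqref{EQ:Lambda1}, and reading~\eqref{EQ:blotto_cond_mix} as $\max_i b_i\le T_A$. But the core of the lemma is left unproven. You explicitly defer the "load-bearing piece" --- that each $\pi_k$ is a joint mix and that the residual problem stays feasible after each peeling step --- to an induction whose invariant you state but do not establish, and whose construction ("either a full uniform or identically zero, based on the residual bookkeeping"; "$q_k$ and $p_{i_k}^{(k)}$ are solved jointly") is not specified precisely enough to check. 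As written, there is no argument that a consistent choice of $q_k$ and $p^{(k)}_i$ exists at every step, which is exactly the content of the lemma.

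Two remarks that would close the gap and that reveal why the paper's route is cleaner. First, the per-$\pi_k$ mixability condition is not actually delicate: since $\tfrac12\sum_i p_ib_i=T_A$ and $\sum_k q_k=1$, any decomposition $\bp=\sum_k q_k\bp^{(k)}$ in which every $\pi_k$ is budget-feasible forces $\tfrac12\sum_i p^{(k)}_ib_i=T_A$ for \emph{every} $k$, i.e.\ each $\bp^{(k)}$ must lie on the affine hyperplane $\cH$ of~\eqref{EQ:hyperplane}; once that is imposed, the mixability of $\pi_k$ is immediate from $\max_i b_i\le T_A$, with no induction needed. Second, what genuinely requires proof is the purely geometric statement that $\bp\in\cC_n=\cH\cap[0,1]^n$ can be written as a convex combination of at most $n$ extreme points of $\cC_n$ (each having at most one coordinate in $(0,1)$), \emph{computably}. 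The paper gets existence from Carath\'eodory's theorem (after noting that $\bp\mapsto\pi$ is linear and injective) and the algorithm from Lemma~\ref{Lemma:hypercube}, which at each step writes the current point as $\theta y+(1-\theta)\bar y$ with $y$ an extreme point and $\bar y$ of strictly smaller support, terminating in at most $n$ steps of cost $\cO(n\log n)$ each. Your decreasing-$p_i$ peeling might be shaped into such a recursion, but you would still need to prove that the residual point after each step remains in $\cC_n$ (equivalently, on a face of the cube intersected with $\cH$), which is the missing argument.
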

Note that the mixture of couplings $\pi$ in~\eqref{EQ:mixcouplings}
is necessarily a joint mix as a mixture of joint mixes. To sample from it, Player $A$, simply samples $\pi_k$ with probability $q_k$ and plays according to the strategy prescribed by it.

The geometric proof of this lemma is delayed to Section \ref{SE:ProofLEM:reduc_single_mixture}, along with the pseudo-code of associated Algorithm \DECOMPOSE.

\begin{lemma}\label{Lemma:hypercube}
Fix $\ell \in \mathbb{R}_+^n$,  $T\in \mathbb{R}$ and let $\mathcal{H}=\{ x \in \mathbb{R}^n \ \text{s.t.}\ \langle \ell, x \rangle =T\}$ be an affine hyperplane. For any point $x \in \cC_n=\cH\cap[0,1]^n$, define  $\supp(x):=\{i \in [n], x_i \in (0,1)\}$; then there exist $\theta \in [0,1]$,  an extreme point $y\in \cC_n$ and another vector $\bar{y} \in \cC_n$ satisfying $\supp(\bar{y}) \varsubsetneq \supp(x)$ (in particular, $\bar{y}$ belongs to some $m$-face of $\cC_n$ where $m=|\supp(\bar{y})|<|\supp(x)|$) such that
$$
x= \theta y +(1-\theta)\bar{y}\ .
$$
The overall complexity  of computing $y,\bar{y}$ and $\theta$ is of order $\mathcal{O}(n\log n)$.
\end{lemma}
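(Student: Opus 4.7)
The plan is to explicitly build an extreme point $y$ of $\cC_n$ that agrees with $x$ on the ``inactive'' coordinates $[n]\setminus\supp(x)$, and then obtain $\bar y$ by extending the segment from $y$ through $x$ until it first exits $[0,1]^n$. Throughout, I assume $s:=|\supp(x)|\ge 2$; the case $s\le 1$ is trivial with $\theta=1$ and $y=x$. I also assume $\ell>0$ componentwise (coordinates with $\ell_i=0$ can be merged into the inactive block since they do not affect the hyperplane constraint).

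Step 1 (constructing $y$). Set $y_i:=x_i\in\{0,1\}$ for $i\notin\supp(x)$ and let $T':=\sum_{i\in\supp(x)}\ell_i x_i$. Enumerate $\supp(x)=\{i_1,\ldots,i_s\}$ after sorting (any order suffices, but sorting is what costs $\cO(n\log n)$), form the prefix sums $\sigma_k=\sum_{j\le k}\ell_{i_j}$, and let $k^\star$ be the largest index with $\sigma_{k^\star}\le T'$. Set $y_{i_j}=1$ for $j\le k^\star$, $y_{i_{k^\star+1}}:=(T'-\sigma_{k^\star})/\ell_{i_{k^\star+1}}$, which lies in $[0,1)$ by maximality of $k^\star$, and $y_{i_j}=0$ for $j\ge k^\star+2$. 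By construction $\langle\ell,y\rangle=T$, $y\in[0,1]^n$ and $y$ has at most one coordinate in $(0,1)$; hence $y$ is an extreme point of $\cC_n$.

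Step 2 (extension to $\bar y$). Define the affine line $r(t):=(1-t)y+tx$, which satisfies $\langle\ell,r(t)\rangle=T$ identically. Since $y_i=x_i\in\{0,1\}$ for $i\notin\supp(x)$ while $x_i\in(0,1)$ for $i\in\supp(x)$, the vector $r(t)$ still lies in $[0,1]^n$ for $t$ slightly above $1$. Because $|\supp(y)|\le 1<s$, one has $y\ne x$, so some coordinate $r_i(t)$ is a nonconstant linear function, and $r$ eventually leaves $[0,1]^n$. Set $t^\star:=\sup\{t\ge 1:r([1,t])\subseteq[0,1]^n\}\in(1,\infty)$, $\bar y:=r(t^\star)\in\cC_n$ and $\theta:=(t^\star-1)/t^\star\in(0,1)$. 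A direct computation then gives $x=\theta\, y+(1-\theta)\,\bar y$.

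Step 3 (strict support shrinkage and complexity). At $t=t^\star$ some coordinate $r_{i_0}(t^\star)$ equals $0$ or $1$. Coordinates outside $\supp(x)$ satisfy $r_i(t)=x_i\in\{0,1\}$ identically, so they never trigger the exit; hence $i_0\in\supp(x)$ while $\bar y_{i_0}\in\{0,1\}$. Combined with $\supp(\bar y)\subseteq\supp(x)$ (the same identity), this yields $\supp(\bar y)\subsetneq\supp(x)$, which is the delicate point: extending \emph{from} the extremal $y$ \emph{through} $x$ beyond $x$ guarantees that the first boundary encounter lies in $\supp(x)$, not on an already-fixed coordinate—this is the main obstacle that the choice of construction resolves. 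Computationally, Step~1 sorts $\supp(x)$ in $\cO(n\log n)$ and runs a greedy scan in $\cO(n)$, while Step~2 computes a minimum over $\cO(n)$ crossing times, giving a total cost of $\cO(n\log n)$.
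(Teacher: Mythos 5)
Your proof is correct and follows essentially the same route as the paper's: you build the extreme point $y$ by greedily filling the support coordinates with ones up to a single fractional entry (the paper's $(1,\ldots,1,\eps_k,0,\ldots,0)$ pattern), and your $\bar y = r(t^\star)$ with $\theta=(t^\star-1)/t^\star$ is exactly the paper's $\bar y = x+\delta(x-y)$ with $\delta = t^\star-1$ and $\theta=\delta/(1+\delta)$, the first boundary crossing being attained at a support coordinate in both arguments. The complexity accounting also matches.
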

The proof of this Lemma is based on simple geometric arguments, and is postponed to Section \ref{SE:ProofLemma:hypercube} with the associated pseudo-code of Algorithm \Cube.

\subsubsection{Step 2. Reduction to four random variables}
The previous step reduces the joint mixability  problem of $n$ general mixtures, to a simpler one where at most one strict mixture is involved. Still, computing---in fact even describing---a coupling of $n$ variables requires generically exponential (in $n$) time and memory.  To overcome this limitation, we reduce the number of random variables from $n$ to a constant number. 

The following Lemma states that each $\pi_k$ can be realized as the coupling of 3 new uniform random variables and a strict mixture, thus reducing the mixability question from $n$ to only $4$ random variables. A careful inspection of the proof of Lemma~\ref{LM:reduc_to_four} below indicates that the reduction may lead to three marginals rather than four. In that case, two marginals are uniform and one is a strict mixture. To handle this case, some adjustments are needed; in particular---and obvisouly---with the size of the resulting coupling. However, extensions from four to three marginals are straightforward and we omit this case for clarity.

\begin{lemma}\label{LM:reduc_to_four}
Fix $n \ge 4$, $k \in [n]$, and assume without loss of generality that the last marginal of the coupling $\pi_k$ from Lemma~\ref{LEM:reduc_single_mixture} is a strict mixture. Then $(X_{A,1} , \ldots, X_{A,n})\sim \pi_k$ may be constructed from three uniform random variables $Y_1, Y_2, Y_3$ and a partition $\cI_0\sqcup\cI_1 \sqcup \cI_2 \sqcup \cI_3=[n-1]$ as follows. Set $X_{A,i} = 0$ for all $i \in \cI_0$, and 

    $$
    X_{A,i}=\theta_i Y_j\,, \qquad i \in \cI_j, \quad j \in \{1,2,3\}\,, 
    $$
    where $\theta_1, \ldots, \theta_{n-1} \in [0,1]$ are such that
    $$
    \sum_{i \in \cI_j} \theta_i =1, \quad j=1,2,3\,.
    $$

In particular, it holds that
$$
\sum_{i=1}^{n-1} X_{A,i}=Y_1+Y_2+Y_3 \quad \text{almost surely,}
$$
and $(Y_1, Y_2, Y_3, X_{A,n})$ are jointly mixable. The support of $Y_j$ is  $[0,b^*_j]$ where $$b^*_j=\sum_{i \in \cI_j} \frac{\gamma^* v_{B,i}}{\lambda^*} \wedge \frac{v_{A,i}}{\lambda^*}$$
Moreover, the $\theta_i$'s, the sets $\cI_j$, and the parameters of the distributions of $Y_1, Y_2, Y_3$ can each be computed in constant time.

\end{lemma}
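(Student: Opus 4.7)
The construction rests on two simple observations. First, in the coupling $\pi_k$ produced by Lemma~\ref{LEM:reduc_single_mixture}, exactly one marginal is a strict mixture (WLOG the $n$-th) while the remaining $n-1$ marginals are either the Dirac $\delta_0$ or uniform on $[0, c_i]$ with $c_i := (v_{A,i} \wedge \gamma^* v_{B,i})/\lambda^*$. Second, if $Y \sim {\sf Unif}[0, b]$ and weights $\theta_1, \ldots, \theta_m \ge 0$ satisfy $\sum_j \theta_j = 1$, then the scaled variables $(\theta_j Y)_j$ have marginals ${\sf Unif}[0, \theta_j b]$ and sum almost surely to $Y$. The plan is to use this fact to group the $n-1$ non-strict-mixture marginals into three bundles, each realized as a single uniform $Y_j$.

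I will set $\cI_0 = \{i \in [n-1]: p_i^{(k)} = 0\}$ so that $X_{A,i} = 0$ on $\cI_0$, and focus on partitioning $[n-1] \setminus \cI_0$ into three groups $\cI_1, \cI_2, \cI_3$. Write $S = \sum_{i \in [n-1]\setminus \cI_0} c_i$ and $B = \tfrac{1}{2}(S + p_n^{(k)} c_n)$. Because each $\pi_k$ is a joint mix with $\sum_i X_{A,i} = T_A$ almost surely, a short expectation computation gives $B = T_A$, and by Theorem~\ref{TH:LottotoBlotto} the mixability of $\pi_k$ translates into $c_i \le B$ for all $i$, and in particular $S \le 2B$. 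Setting $b_j^* = \sum_{i \in \cI_j} c_i$, I define $\theta_i = c_i / b_j^*$ and $X_{A,i} = \theta_i Y_j$ with $Y_j \sim {\sf Unif}[0, b_j^*]$, which reproduces the marginals of $\pi_k$ and yields $\sum_{i \in \cI_j} X_{A,i} = Y_j$ a.s., hence $\sum_{i=1}^{n-1} X_{A,i} = Y_1 + Y_2 + Y_3$. To finish, the 4-tuple $(Y_1, Y_2, Y_3, X_{A,n})$ must be jointly mixable; by Proposition~\ref{PR:JointMixability} this reduces to $\max_j b_j^* \vee c_n \le B$. The bound $c_n \le B$ is already given, so everything boils down to choosing the partition so that each $b_j^* \le B$.

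I construct the partition via LPT (longest-processing-time first) list scheduling: sort the $c_i$'s in decreasing order and assign each in turn to the currently least-loaded bin. The first three items fall in three distinct (initially empty) bins, with new loads $c_i \le B$. For any later item $c_m$ (so $m \ge 4$), the sorted order and the slack $S \le 2B$ force $c_m \le B/2$: otherwise $c_1, c_2, c_3, c_m$ would each exceed $B/2$ and sum to more than $2B$. The least-loaded bin holds at most the current average $(S - c_m)/3 \le (2B - c_m)/3$, so its new load is at most $(2B - c_m)/3 + c_m \le 2B/3 + B/3 = B$. The complexity is dominated by the initial sort, giving $O(n \log n)$ total work, after which individual queries for any $\theta_i$, $\cI_j$, or $b_j^*$ are $O(1)$.

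The main obstacle is precisely the LPT bound $\max_j b_j^* \le B$: the generic 3-machine list-scheduling bound only yields $L_{\max} \le S/3 + 2 c_{\max}/3$, which can exceed $B$ when $S$ is close to $3B$ and $c_{\max}$ close to $B$. The sharper bound required here exploits both ingredients coming from the Blotto structure, namely the slack $S \le 2B$ inherited from joint mixability of $\pi_k$ and the pigeonhole restriction that at most three items can exceed $B/2$. Once this combinatorial step is in place, all remaining verifications (marginal correctness, almost-sure sum, 4-variable joint mixability) are immediate.
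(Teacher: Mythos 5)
Your proof is correct, but it takes a genuinely different route from the paper's. The paper proceeds by \emph{iterative pairwise merging}: it sorts the uniform lengths increasingly, replaces the two smallest uniforms $X_1,X_2$ by a single uniform $X_{12}\sim{\sf Unif}[0,b_1+b_2]$ (realizing $X_1,X_2$ as deterministic fractions of $X_{12}$), and checks via a two-case argument ($b_1+b_2\le b_{n-1}$ versus $b_1+b_2> b_{n-1}$, using $b_1+b_2\le b_3+b_4$ in the latter case) that the mixability inequality $2\max_k b_k\le\sum_i b_i+p_nb_n$ is preserved; iterating until three uniforms remain yields the sets $\cI_j$ as the merge classes. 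You instead build the partition in one shot by LPT scheduling into three bins and prove the makespan bound $\max_j b_j^*\le B=T_A$ directly, which is the nontrivial combinatorial content either way; your observation that the generic list-scheduling bound $S/3+2c_{\max}/3$ is insufficient and that one must combine the slack $S\le 2B$ with the pigeonhole fact that at most three items exceed $B/2$ is exactly the right refinement, and your verification of it is sound. The remaining ingredients (the identity $B=T_A$ from saturation of the hyperplane constraint in Lemma~\ref{LEM:reduc_single_mixture}, the bounds $c_i\le B$ from joint mixability of $\pi_k$ via Proposition~\ref{PR:JointMixability} --- which is the precise reference here, rather than Theorem~\ref{TH:LottotoBlotto} --- and the reduction of the four-variable mixability to $\max_jb_j^*\vee c_n\le B$) match the paper's. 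Your approach buys a cleaner, non-inductive argument and an $\cO(n\log n)$ construction, slightly better than the $\cO(n^2)$ of the paper's Algorithm \REDUCTION\ as written; the paper's merging argument buys locality, in that each step only re-verifies the same two-smallest-versus-maximum inequality. Like the paper, you leave aside the degenerate case of fewer than three nonzero uniforms, which the paper explicitly waves off as a straightforward extension.
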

The  proof of this Lemma, based on standard mixability arguments, is postponed to Section~\ref{SE:ProofLM:reduc_to_four}, with the pseudo-code of the corresponding Algorithm \REDUCTION.

Note that any joint mix of $(Y_1, Y_2,Y_3,X_{A,n})$ readily yields a joint mix of $(X_{A,1}, \ldots, X_{A,n})$ by defining $X_{A,i}=\theta_i Y_{j(i)}$, where $j(i) \in [3]$ is the unique integer such that $i \in \cI_{j(i)}$.

\subsection{Discretization}
\label{SE:Step3}

The problem of finding a solution for the Blotto game has been reduced to the construction of a coupling of (at most) four random variables,  three of them being uniform over some intervals  and the fourth one being a mixture between a Dirac mass at zero and some uniform distribution. Throughout this section we denote these random variables as $Y_1, \ldots, Y_4$ for simplicity; in the notation of the previous section, they correspond to $Y_1, Y_2, Y_3$, and $X_{A,n}$ respectively.

Unfortunately, even in this simple case, finding explicit, closed-form, couplings appears to be possible only under stringent additional conditions that limit the scope of the Blotto game.
To overcome this limitation, we take an algorithmic approach, describing an efficient way to find an approximate solution. To that end, we obviously need to work with discrete random variables and describe here a coupling between these discretized random variables.

Let $(Y_1, Y_2,Y_3, Y_4) \sim \varpi$ be jointly mixed so that
\begin{equation}
    \label{EQ:JM4}
    Y_1+Y_2+Y_3+Y_4=T_A\,.
\end{equation}
Moreover, let $h>0$ be some (small) discretization parameter. Define the quantized random variables $\tilde Y_i$ by 
\begin{equation}
    \label{EQ:margin_Const}
    \tilde Y_i =\left\lfloor \frac{Y_i}{h}\right\rfloor, \  i=1,2, 3\,, \qquad 
\tilde Y_4=\frac{T_A}{h}-\left\lfloor \frac{T_A-Y_4}{h} \right\rfloor \,.
\end{equation}

Our goal is to compute any of the joint distributions $\mathsf{D}(\varpi)$ of the vector $\tilde Y=(\tilde Y_1, \tilde Y_2, \tilde Y_3, \tilde Y_4)$ when $\varpi$ ranges over joint mixes. 

As a first step towards this goal, note that these discretized random variables need not be jointly mixable. Indeed, in general we have $\tilde Y_1 + \tilde Y_2 + \tilde Y_3 \le \lfloor (Y_1+Y_2+Y_3)/h\rfloor$ but equality may fail to hold because of discretization errors. To account for these, let $\eps \in \{0,1,2\}$ be  defined as
\begin{equation}
    \label{EQ:def_epsilon}
    \eps=(T_A/h-\tilde Y_4) -(\tilde Y_1 + \tilde Y_2 + \tilde Y_3)\,,
\end{equation}
and consider the augmented random vector $\tilde Y_{+}= (\tilde X, \eps)$. In light of~\eqref{EQ:def_epsilon}, $\tilde Y_{+} \in \R^5$ lives almost surely on a four dimensional subspace. As such, its distribution may be represented by a 4-tensor $(\Gamma_{ijke})$ with entries given by 
$$
\Gamma_{ijke}=\p(\tilde Y_1=i, \tilde Y_2=j, \tilde Y_3=k, \eps=e)\,.
$$
In particular, note that $e \in \{0,1,2\}$ while $i,j,k$ each range in a set of integers of size $\Theta(1/h)$. Using~\eqref{EQ:def_epsilon} we can read off the distribution of $\tilde Y_4$ from this tensor. 

This tensor is subject to four sets of linear constraints, one for each of the marginal constraints given in~\eqref{EQ:margin_Const}. They are given by
$$
\sum_{jke}\Gamma_{ijke}=\p(\tilde Y_1 =i) \ \forall i\,, \quad \sum_{ike}\Gamma_{ijke}=\p(\tilde Y_2 =j)\  \forall j\,, \quad \sum_{ije}\Gamma_{ijke}=\p(\tilde Y_3 =k)\ \forall k\,,
$$
and, in light of~\eqref{EQ:def_epsilon}, by
$$
\sum_{i+j+k+e=\ell}\Gamma_{ijke}=\p(T_A/h-\tilde Y_4 =\ell)\ \forall \ell\,.
$$

Note that indeed, any draw from a distribution that satisfies the above constraints yields a random vector $(\tilde Y_1, \tilde Y_2, \tilde Y_3, \eps)$. Defining $\tilde Y_4$ by solving~\eqref{EQ:def_epsilon} yields a vector $\tilde Y  \sim {\sf D}(\varpi)$ for some joint mix $\varpi$ defined as above. In other words, $\tilde Y$ is indeed the discretization of random variables drawn from a joint mix (though it need not be jointly mixable itself).

\bigskip

Since the random variables $Y_j$, for $j \in [3]$, constructed at the previous step have a support equal to $[0,b^*_j]$ where $b^*_j=\sum_{i \in \cI_j} \frac{\gamma^* v_{B,i}}{\lambda^*} \wedge \frac{v_{A,i}}{\lambda^*}$, the reduced (to 4 random variables) and discretized problem reduces to  finding some tensor $(\Gamma_{ijke})$ with $ 3\cdot\left\lfloor\frac{b^*_1}{h}+1\right\rfloor\cdot\left\lfloor \frac{b^*_2}{h}+1\right\rfloor\cdot\left\lfloor\frac{b^*_3}{h}+1\right\rfloor$ entries satisfying at most $4\cdot \left\lfloor\frac{\max b^*_j}{h}+1\right\rfloor$ linear constraints. Although this can be done simply via linear programming (hence polynomially in $h^{-1}$, more precisely in $\tilde{\mathcal{O}}(1/h^{8,5})$ with Vaidya's algorithm), a quite  efficient and more popular way is to use a variant of Sinkhorn-Knopp algorithm that quickly finds approximated solutions. This is more relevant as this linear program is already some approximation of the original problem, hence  there is no point of solving  it exactly.

The pseudo-code of the Algorithm \DISCRETIZE\ can be found in Section \ref{SE:DISCRETIZE}

\subsection{Tensor scaling using Sinkhorn iterations}
\label{SE:Step4}

In light of the previous sections, we have reduced our problem to that of finding coupling in the form of a 4-tensor $(\Gamma_{ijke}, i \in [d_1], j\in [d_2], k \in [d_3], e\in \{0,1,2\})$ with non-negative entries subject to marginal constraints. We approach this problem from a computational perspective and propose and algorithm that converges rapidly to a feasible solution. To describe this algorithm, recall that its input are four probability vectors $\tilde \mu_j \in \R^{d_j}, j=1,\dots, 4$, with $d_4=d_1+d_2+d_3+2$ that represent the probability mass functions of the discretized random variable $\tilde  Y_1,  \tilde Y_2, \tilde Y_3, T_A/h-\tilde Y_4$ defined in the previous section: $\tilde \mu_{j,\boldsymbol{\cdot}}=\p(\tilde Y_j=\boldsymbol{\cdot})$, $j=1, \ldots, 3$, $\tilde \mu_{4,\boldsymbol{\cdot}}=\p(T_A/h-\tilde Y_j=\boldsymbol{\cdot})$.

The linear constraints take the form
\begin{align}
   \bar \Gamma_i^{(1)}&:= \sum_{jke}\Gamma_{ijke}=\tilde \mu_{1,i} \ &\forall i \in [d_1] \,,\\
   \bar \Gamma_j^{(2)}&:=   \sum_{ike}\Gamma_{ijke}=\tilde \mu_{2,j}\  &\forall j\in [d_2]\,, \\  \bar \Gamma_k^{(3)}&:= \sum_{ije}\Gamma_{ijke}=\tilde \mu_{3,k}\ &\forall k\in [d_3]\,,\\
    \bar \Gamma_\ell^{(4)}&:=  \sum_{i+j+k+e=\ell}\Gamma_{ijke}=\tilde \mu_{4,\ell} \ &\forall \ell\in [d_4]\,.
\end{align}
Denote by $\cG$ the set of tensors $\Gamma=(\Gamma_{ijke})$ that satisfy these constraints.

To solve this problem, we propose to project the all-ones tensor ${\bf 1}$ onto $\cG$ using the Kullback-Leibler (KL) divergence. Recall that the KL divergence between two nonnegative tensors $\Gamma, \Gamma'$ is given by
$$
\KL(\Gamma\|\Gamma')=\sum_{ijke} \Gamma_{ijke} \log \left(\frac{ \Gamma_{ijke} }{ \Gamma'_{ijke} }\right)
$$
In particular, ${\sf KL}(\Gamma\|{\bf 1})$ is simply the (negative) entropy $\sH(\Gamma)$ of $\Gamma$ and we aim to solve the convex optimization problem
$$
\min_{\Gamma \in \cG} \sH(\Gamma)=\sum_{ijke} \Gamma_{ijke} \log \left(\Gamma_{ijke}\right)\,.
$$
While many algorithms are available to solve this problem~\cite{bubeckConvexOptimizationAlgorithms2015}, its specific structure can be exploited efficiently. Indeed, first order optimality  conditions imply that any optimal $\Gamma$ must be of the form
\begin{equation}
    \label{EQ:couplingform}
    \Gamma_{ijke}=\xi_{1,i}\cdot \xi_{2,j}\cdot\xi_{3,k}\cdot\xi_{4,i+j+k+e}\,,
\end{equation}
for some scaling vectors $\xi_j \in (0, \infty)^{d_j}, j=1,\dots, 4$. This representation readily calls for an iterative tensor scaling algorithm similar to the Sinkhorn algorithm~\cite{Sin64, SinKno67, Cut13}. Tensor scaling has been investigated in more classical setups~\cite{lin2020complexity, AltBoi20} that slightly differ from the present setup because the fourth marginal constraint takes a special form. Nevertheless, the implementation of Algorithm \SINKHORN\ remains straightforward and is presented in Section~\ref{SE:SINKHORN}. Its analysis is also a straightforward  extension of that for the traditional matrix case~\cite{AltWeeRig17}. More specifically, following the exact same lines as the one of  Theorem 4.3 in \cite{lin2020complexity}, we readily get the following result.

\begin{proposition}\label{PR:Sinkhorn_complexity}
Define 
$$
\tilde \mu_{\min}=\min_{i\in[4],j, \tilde \mu_{i,j} \neq 0} \tilde \mu_{i,j}\,.
$$
 Algorithm \SINKHORN\ terminates and returns a tensor ${\Gamma}$ such that $\sum_{i=1}^4\|\bar \Gamma^{(i)}-\tilde \mu_i\|_1 \leq \eta$ after at most $32\eta^{-1}(1-\log \tilde \mu_{\min})$ iterations. Moreover, each marginal $\bar \Gamma^{(i)}$ of $\Gamma$ has positive entries that sum to one and hence is a probability vector.

\end{proposition}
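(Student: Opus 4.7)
The plan is to adapt the standard convergence analysis for matrix Sinkhorn of Altschuler-Weed-Rigollet, and its multi-marginal extensions in Lin-Ho-Jordan, to the present setting. The only novel feature is the non-standard fourth marginal constraint, which sums over diagonal slices of the tensor rather than a single axis, but this does not affect the structure of the analysis.

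I would begin by verifying feasibility and normalization of the iterates. Each Sinkhorn update is multiplicative and preserves nonnegativity; after the update targeting coordinate $i_t$, the marginal $\bar{\Gamma}^{(i_t)}$ equals $\tilde\mu_{i_t}$ exactly, which is a probability vector, so the total mass of $\Gamma$ remains $1$ throughout. Consequently every iterate has probability-vector marginals, which proves the second assertion of the proposition.

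For the iteration count, I would introduce the potential $\Phi_t = \KL(\Gamma^\star \,\|\, \Gamma_t)$, where $\Gamma^\star$ is the unique minimizer of $\sH$ over $\cG$ (existence and uniqueness follow from strong convexity of the negative entropy on the compact simplex). Because a Sinkhorn step is exactly the KL projection of $\Gamma_t$ onto the hyperplane $\{\Gamma : \bar{\Gamma}^{(i_t)} = \tilde\mu_{i_t}\}$, a Pythagorean identity for Bregman projections gives
\[
\Phi_t - \Phi_{t+1} \;=\; \KL\bigl(\tilde\mu_{i_t}\,\big\|\,\bar{\Gamma}^{(i_t)}_t\bigr).
\]
Combining this with Pinsker's inequality and a greedy rule that selects $i_t \in \argmax_i \|\bar{\Gamma}^{(i)}_t - \tilde\mu_i\|_1$ (so that the selected marginal error is at least a quarter of the total error), one obtains a per-iteration decrease of order $\eta_t^2$, or even $\eta_t$ through the sharper bookkeeping of Lin-Ho-Jordan, where $\eta_t := \sum_{i=1}^4 \|\bar{\Gamma}^{(i)}_t - \tilde\mu_i\|_1$. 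Since $\Phi_0 = \cO(1 - \log \tilde\mu_{\min})$ whenever the initialization has entries bounded below by a suitable power of $\tilde\mu_{\min}$, telescoping the per-iteration decrease and using that the algorithm continues only while $\eta_t > \eta$ yields the claimed iteration bound; the explicit constant $32$ comes from directly tracking the numerical constants as in Theorem~4.3 of Lin-Ho-Jordan.

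The main obstacle is handling the non-standard constraint $\sum_{i+j+k+e=\ell}\Gamma_{ijke} = \tilde\mu_{4,\ell}$, which sums over diagonal slices rather than a single tensor axis. One must verify that the multiplicative scaling $\xi_{4,i+j+k+e}$ appearing in~\eqref{EQ:couplingform} is the correct Lagrange-multiplier ansatz, which follows from writing the first-order optimality conditions, and that the KL projection onto the associated hyperplane admits the same closed-form multiplicative update as a standard marginal projection. Both checks are immediate because the constraint remains linear in $\Gamma$ and the hyperplane decomposes along the level sets $\{i+j+k+e=\ell\}$. Hence the Pythagorean identity, the Pinsker step, and the potential-telescoping argument all extend verbatim, with only constant-factor modifications to the numerical bookkeeping.
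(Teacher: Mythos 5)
Your structural observations match what the paper does: the paper gives no self-contained proof of this proposition and simply invokes Theorem~4.3 of Lin--Ho--Jordan, and your checks that the fourth constraint is still a linear family decomposing along the level sets $\{i+j+k+e=\ell\}$ (so the multiplicative update is an exact I-projection and the Pythagorean identity applies), and that after the first update every iterate has probability-vector marginals, are correct and are exactly the adaptations one needs. The Pythagorean decomposition $\Phi_t-\Phi_{t+1}=\KL(\tilde\mu_{i_t}\|\bar\Gamma^{(i_t)}_t)$ is also a valid, standard reformulation of the Sinkhorn potential argument.

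The gap is in the quantitative core. First, a per-iteration decrease \emph{linear} in $\eta_t$ cannot be extracted from $\KL(\tilde\mu_{i_t}\|\bar\Gamma^{(i_t)}_t)$: for $q\to p$ one has $\KL(p\|q)\asymp\|p-q\|_1^2$, so no inequality of the form $\KL(p\|q)\ge c\|p-q\|_1$ holds uniformly, and telescoping your potential with Pinsker yields only an $\eta^{-2}$ iteration count. The $\eta^{-1}$ rate in Lin--Ho--Jordan (following Dvurechensky et al.) is not ``sharper bookkeeping'' of the same telescoping; it is a genuinely different two-phase argument on the \emph{dual} potential: one shows the dual gap satisfies $\delta_t-\delta_{t+1}\ge \delta_t^2/(2R^2)$ with $R$ bounding the range of the optimal dual variables ($R\le 1-\log\tilde\mu_{\min}$ here, since the kernel is the all-ones tensor), runs this recursion until $\delta_t$ is small, and then switches to the stopping-criterion phase, optimizing the switch point to get $O(R/\eta)$. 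Second, your initialization bound $\Phi_0=\cO(1-\log\tilde\mu_{\min})$ does not hold for the primal potential: after the first normalization the entries of $\Gamma_1$ are of order $\tilde\mu_{\min}/(d_1d_2d_3)$, so $\KL(\Gamma^\star\|\Gamma_1)\le \log(3d_1d_2d_3)-\log\tilde\mu_{\min}$, which carries a $\log(1/h)$ dependence absent from the stated bound (and which would contaminate the budget-independence claimed downstream in Theorem~\ref{TH:complex_cstsum}). To actually reach the constant $32\eta^{-1}(1-\log\tilde\mu_{\min})$ you must reproduce the dual-potential argument, not the primal KL telescoping you sketch.
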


What have we accomplished so far? Through several reductions and a tensor scaling algorithm, given the datum of a Blotto game, we are able to compute a joint distribution that corresponds to an approximate solution. In Section~\ref{SE:Errors}, we evaluate the accuracy of this approximation in terms of the value of the game by showing that the various approximations (discretization and numerical precision of the algorithm) do not blow up when propagated back into the reductions. Before that, we investigate an important operational question: how to sample a strategy from the resulting coupling $\Gamma$. 

\subsection{From coupling to sampling}

\label{sec:smoothing}
Finding an efficient construction of (approximate) equilibria or optimal strategies is only relevant if it can be  associated to some efficient sampling method so that  a player may query a sampler and receive the allocation $(X_{A,1},\ldots,X_{A,n})$  that they should play on each battlefield.  In light of the various reduction steps employed above, it is sufficient to sample a 4-tuple 
$$
(\tilde Y_1, \tilde Y_2, \tilde Y_3,\varepsilon) \in [0,b^*_1/h]\times[0,b^*_2/h]\times [0,b^*_3/h]\times \{0,1,2\}
$$
from the output $\Gamma$ of Algorithm \SINKHORN.
Indeed, from $(\tilde Y_1, \tilde Y_2, \tilde Y_3,\varepsilon)$, we obtain the random variables $Y_i, i=1,\dots, 4$ that are approximately distributed from the joint mix $\varpi$ as follows.

To ensure that the marginal distributions are continuous, let $U \sim {\sf Unif}[0,1]$ and define
\begin{align*}
Y_1' =  (\tilde Y_1+\frac{\varepsilon}{3}+\frac{U}{3})h \wedge  b^*_1, \quad 
Y_2' = (\tilde Y_2+\frac{\varepsilon}{3}+\frac{U}{3})h\wedge b^*_2 , \quad
Y_3' = (\tilde Y_3+\frac{\varepsilon}{3}+\frac{U}{3})h \wedge  b^*_3.
\end{align*}
To correct for potential boundary effects, define $S=Y_1'+Y_2'+Y_3'$ and
$$
\zeta = \mathds{1}\{S>T_A\} \frac{T_A-S}{3}+\mathds{1}\{S<T_A-b^*_4\} \frac{T_A-b^*_4-S}{3}.
$$
Then take $\bar Y_1=Y_1'+\zeta$, $\bar Y_2=Y_2'+\zeta$, $\bar Y_3=Y_3'+\zeta$, and $\bar Y_4=T_A-(Y_1+Y_2+Y_3)$.

We call this procedure the \emph{smoothing} procedure. Finally, as mentioned before, just define $X_{A,i}=\theta_i \bar Y_{j(i)}$, where $j(i) \in [3]$ is the unique integer such that $i \in \cI_{j(i)}$.

Note that the random variable $U\sim {\sf Unif}[0,1]$ is superfluous and theoretical results would follow by taking $U=0$. Its role is simply to ensure, for cosmetic reasons, that the random marginal distributions are continuous apart from the potential point mass at zero.
\medskip

It remains to sample $(\tilde Y_1, \tilde Y_2, \tilde Y_3,\varepsilon)$ from the output $\Gamma$ of Algorithm~\SINKHORN. This is quite straightforward in light of the factored form of $\Gamma$. Indeed, recall that the coupling output by Algorithm~\SINKHORN\ has the form~\eqref{EQ:couplingform}.
$$
\Gamma_{ijke}=\xi_{1,i}\cdot \xi_{2,j}\cdot\xi_{3,k}\cdot\xi_{4,i+j+k+e}\,, \ \forall  i \in [d_1], j\in d_2, k \in [d_3], e\in \{0,1,2\}
$$
As a consequence, we can draw from $\Gamma$ as follows: 
\begin{enumerate}
    \item Set $\tilde Y_1=i  \in [d_1]$ with probability proportional to $\xi_{1,i}$
    \item  Set $\tilde Y_2=j \in [d_2]$ with probability proportional to $\xi_{2,j}$
    \item  Set $\tilde Y_3=k \in [d_3]$ with probability proportional to $\xi_{3,k}$
    \item Conditionally on $(\tilde Y_1, \tilde Y_2, \tilde Y_3)$, set $\varepsilon=e \in \{0,1,2\}$ with  probability proportional to $\xi_{4,\tilde Y_1+ \tilde Y_2+ \tilde Y_3+e.}$
\end{enumerate}
The pseudo-code of Algorithm \SAMPLE\ can be found in Section~\ref{SE:SAMPLE}.

\section{Approximation errors and computational complexity}
\label{SE:Errors}

The construction of the previous section relies on various approximations, each of them inducing some error that can be mitigated at the cost of additional computational complexity by tuning the discretization parameter $h$ of Section~\ref{SE:Step3} and the tolerance parameter $\eta$ in Algorithm \SINKHORN. In this section we study the computational complexity required to reach an $\eps$-approximate solution.

\subsection{From approximate strategies to approximate solutions}

Note that the very notion of   ``approximate solution'' strongly depends on whether the problem is value-symmetric  ($v_{A,i}=v_{B,i}$ for all $i$) or -asymmetric ($v_{A,i} \neq v_{B,i}$ for some $i$). Indeed, in the former case, the game is constant-sum and  optimal strategies do exist. This is no longer true in the latter case where only Nash equilibria are considered. As a consequence, we can consider approximation of a single optimal strategies in value-symmetric games, while we will have to consider approximations of a pair of equilibrium strategies in value-asymmetric ones. In the following two sections, we consider each case separately. In the remaining, we shall focus the analysis  on Player $A$, but it is almost identical for player $B$; hence we do not repeat it for the sake of clarity.

\subsubsection{The value-symmetric case}

A value-symmetric Blotto game, where $v_{A,i}=v_{B,i}=v_i$ for all $i$ is constant-sum and optimal strategies exist for each player. In particular, this allows us to provide strong approximation guarantees by controlling how sub-optimal the expected utility of a player is.

To check this well-known fact on our specific instance, consider the utility of player $A$. 
Set two equilibrium parameters $\gamma^*=T_B/T_A\le 1$ and $\lambda^*=T_B/(2T_A^2)$ (see Corollary~\ref{cor:symvals})  defining an optimal strategy and observe that $\cN(\gamma^*)=[n]$ since $\gamma^*\le 1$. For $i \in [n]$, let $X_{A,i}\sim {\sf Unif}[0, 2T_A v_i]$ denote the amount allocated by player $A$ to battlefield $i$ according to this optimal strategy and denote  by $F_{A,i}$ its cumulative distribution function (cdf). The expected utility (a.k.a. reward)  of player $A$ if player $B$ chooses  allocation  $x_B=(x_{B,i})_i$ depends only on the sequence  $F_{A}=(F_{A,1},\dots, F_{A,n})$ of marginal cdfs rather than the whole coupling. It is given by
\begin{align*}
\cU_A(F_A, x_B)=\sum_{i=1}^n v_{A,i}\p(X_{A,i}> x_{B,i})
=\sum_{i=1}^n v_i \big(1-F_{A,i}(x_{B,i})\big)
=1 - \sum_{i=1}^n v_i \wedge \frac{x_{B,i}}{2T_A}\ge 1 - \frac{T_B}{2T_A}\,.
\end{align*}
where we used the fact that the $v_i$'s sum to 1 and the $x_{B,i}$'s sum to at most $T_B$. Moreover, if $B$ employs the mixed strategy $X_{B,1}, \ldots, X_{B,n}$ described in Corollary~\ref{cor:symvals}, the utility of player $A$, denoted $\cU_A(F_A, F_B)$, changes as follows. Let $ U_{A,i}, U_{B,i}\sim {\sf Unif}[0, 2T_A v_i]$ be a sequence of uniform random variables such that $ U_{A,i}$ is independent of $U_{B,i}$. In particular, $\p(U_{A,i}>U_{B,i})=.5$ and 
\begin{align*}
\cU_A(F_A, F_B)
&=\big(1-\frac{T_B}{T_A}\big)\sum_{i=1}^n v_i \big(1-F_{A,i}(0)\big)+\frac{T_B}{T_A}\sum_{i=1}^n v_i \p(U_{A,i}>U_{B,i})\\
&=\big(1-\frac{T_B}{T_A}\big)+\frac{T_B}{2T_A}= 1 - \frac{T_B}{2T_A}\,.
\end{align*}
In particular, the strategy of player $A$ given in Corollary~\ref{cor:symvals} is optimal and its optimal utility is given by $1-T_B/(2T_A)$.

To estimate the cost of the various approximations incurred by player $A$, let $X_{A,i}^{h,\eta}\sim P_{A,i}^{h,\eta}$ denote the strategy on battlefield $i \in [n]$. The notation is meant to emphasize that the approximation error stems from two sources: the precision level $\eta$ of Algorithm \SINKHORN\ and the grid size $h$ of discretization  procedure in Section~\ref{SE:Step3}. In particular, we write $P_{A,i}^{0,0}:=P_{A,i}$. Cognizant  of this approximation error, player $A$ may take advantage of the suboptimality of the strategy of player $A$ and respond with best-response strategy denoted $X_{B}^{h,\eta}=(X_{B,1}^{h,\eta}, \ldots, X_{B,n}^{h,\eta})$. As a result, the suboptimality gap of player $A$'s expected utility is controlled as follows
\begin{align*}
\cU_A(F_A, F_B)-\cU_A(F_A^{h,\eta}, F_B^{h,\eta})&=  \sum_{i=1}^n v_{i}\p(X_{A,i}> X_{B,i})-  \sum_{i=1}^n v_{i}\p(X_{A,i}^{h,\eta}> X_{B,i}^{h,\eta})\\
    &\le   \sum_{i=1}^n v_{i}\p(X_{A,i}> X_{B,i}^{h,\eta})-  \sum_{i=1}^n v_{i}\p(X_{A,i}^{h,\eta}> X_{B,i}^{h,\eta}) \\
    &\le   \sum_{i=1}^n v_{i}\sup_{x> 0} \left[\p(X_{A,i}> x)- \p(X_{A,i}^{h,\eta}> x)\right]
\end{align*}
where in the first inequality, we use the fact that $ X_{B}$ is an optimal response for $B$ when $A$ plays $X_{A}$. 

It will be convenient in the sequel to further bound the above quantity using the   $\infty$-Wasserstein distance---see~\cite[Section~5.5.1]{San15}---between $P_{A,i}$ and $P_{A,i}^{h,\eta}$, denoted $W_{\infty}(P_{A,i},P_{A,i}^{h,\eta})$. Indeed, to show that
$W_{\infty}(P_{A,i},P_{A,i}^{h,\eta})\le \omega$ for some $\omega\ge 0$, it is sufficient to exhibit a coupling of $X_{A,i}, X_{A,i}^{h,\eta}$ such that $|X_{A,i}- X_{A,i}^{h,\eta}|< \omega$ almost surely. Below, we often do so implicitly as such couplings are, in all instances, trivial. 

Fix $\omega >0$ and assume $W_{\infty}(P_{A,i},P_{A,i}^{h,\eta})\le \omega$. Then for any $\omega\ge0$,  we have
$$
\p(X_{A,i}> x)- \p(X_{A,i}^{h,\eta}> x) \le \p(x< X_{A,i} <x+\omega ) \le \frac{\lambda^*\omega}{\gamma^*v_i}\,,
$$
where we used the fact that $X_{A,i} \sim {\sf Unif}[0,\gamma^*v_i/\lambda^*]$. The above two displays together yield that the suboptimality gap for player $A$ is controlled as
$$
\cU_A(F_A, F_B)-\cU_A(F_A^{h,\eta}, F_B^{h,\eta}) \le \frac{\lambda^*}{\gamma^*} \sum_{i =1}^n W_{\infty} (P_{A,i},P_{A,i}^{h,\eta})\,.
$$

Recall that  Step 1 in the reduction consists in decomposing $P_{A,i}$  as a mixture of (at most) $n$ other distributions, i.e., $P_{A,i}=\sum_k q_kP_{A,i}^{(k)}$. Accordingly, we also have constructed $P_{A,i}^{h,\eta}$ as a mixture $P_{A,i}^{h,\eta}= \sum_k q_kP_{A,i}^{h,\eta,(k)}$. It follows readily from the definition of $W_\infty$ that that
$$ W_{\infty}(P_{A,i},P_{A,i}^{h,\eta})\leq \sum_k q_k W_{\infty}(P_{A,i}^{(k)},P_{A,i}^{h,\eta,(k)}).$$
In particular controlling each term on the right-hand side uniformly in $k$ results in the same control on the desired error. Therefore, without loss of generality, we may assume that $P^{(k)}_{A,i}=P_{A,i}$ and $P^{h,\eta, (k)}_{A,i}=P^{h,\eta}_{A,i}$ so as to keep the notation light. Moreover, as above, we assume without loss of generality the last marginal $P_{A,n}$ is the only strict mixture.

\subsubsection{The value-asymmetric case}
When values are asymmetric, the game is no longer constant-sum and we shift our focus from optimal strategies to Nash equilibria. In this context the notion of approximation is more subtle and has to be carried out jointly for both players.

For any set of marginal cdfs $G_A=(G_{A,1}, \ldots, G_{A,n})$ and $G_B=(G_{B,1}, \ldots, G_{B,n})$, denote by $\cU_A(G_A,G_B)$ the expected utility of player $A$ if player $A$ plays according to strategy $G_A$ while player $B$ plays according to strategy $B$.

A pair $(F_A,F_B)$ is a Nash equilibrium if
$$
\cU_A(F_A,F_B) \ge \cU_A(G_A,F_B), \text{ for all admissible} \  G_A\,.
$$
Writing
\begin{align*}
    \cU_A(F_A,F_B)&=\cU_A(F_A,F_B)-\cU_A(F_A^{h,\eta},F_B)+\cU_A(F_A^{h,\eta},F_B)-\cU_A(F_A^{h,\eta},F_B^{h,\eta})+\cU_A(F_A^{h,\eta},F_B^{h,\eta})\,,\\
     \cU_A(G_A,F_B)&= \cU_A(G_A,F_B)- \cU_A(G_A,F_B^{h,\eta})+ \cU_A(G_A,F_B^{h,\eta})
\end{align*}
it readily follows from the above two displays that
$$
\cU_A(F_A^{h,\eta},F_B^{h,\eta}) +\eps \ge \cU_A(G_A,F_B^{h,\eta}), \text{ for all admissible} \  G_A\,,
$$
where, using similar computations as above, 
$$
\eps \le \sum_{i=1}^n v_{A,i}\left[\frac{\lambda^*}{\gamma^* v_{B,i}}W_\infty(P_{A,i}, P_{A,i}^{h,\eta})+\frac{2\lambda^*}{ v_{A,i}} W_\infty(P_{B,i}, P_{B,i}^{h,\eta}) \right]
    $$

\subsection{Control of the errors}

In both cases, symmetric or asymmetric values, a control of the approximation error follows from controlling    $W_{\infty}(P_{A,i},P_{A,i}^{h,\eta})$. In the rest of this section, we sightly abuse notation and write $W_\infty(X,Y)$ when the distributions of the random variables $X$ and $Y$ are clear from the context.

Recall that for any $i \in [n]$ since $X_{A,i} =\theta_i Y_{k(i)}$ and similarly for the approximate versions, for some fixed $\theta_i \in [0,1]$, we have for any $h, \eta \ge 0$ that
$$
 W_{\infty}(P_{A,i},P_{A,i}^{h,\eta}) = \theta_i W_{\infty}(Y_{j(i)},\bar Y_{j(i)})
$$
where $Y_j$ is the result of the reductions and is defined in Lemma~\ref{LM:reduc_to_four} while $\bar Y_j$ is the output the smoothing procedure and is defined in Section~\ref{sec:smoothing}.

Recall that the discrepancy between $\bar Y_j$ and the target $Y_j$ stems from three approximations: discretization error ($h>0$), numerical error ($\eta>0$), and the error due to smoothing step. 
The error coming from the smoothing step is easy to control: for any $j \in [4]$, we have $W_{\infty}(Y_j,\bar  Y_j) \le W_{\infty}(Y_j,h\tilde Y_j) + h$. We have proved that
\begin{equation}
    \label{EQ:smootherr}
     W_{\infty}(P_{A,i},P_{A,i}^{h,\eta}) \le \theta_i W_{\infty}(Y_{j(i)},h\tilde  Y_{j(i)})+ \theta_i h
\end{equation}

As a result, it is sufficient to control the discretization error and the numerical error at the level of the variables $\tilde Y_j$. To emphasize the presence of these errors, we employ the same notation as for $P_{A,i}^{h, \eta}$ and write $\tilde Y_j=\tilde Y_{j}^{h, \eta}$ for $h>0$, $\eta \ge 0$. By the triangle inequality, we have
$$
W_{\infty} (Y_j,h \tilde Y_j) \le \underbrace{W_{\infty}(Y_j,h\tilde Y_j^{h,0}) }_{\text{discretization error}} +  \underbrace{W_{\infty}(h\tilde Y_j^{h,0}, h\tilde Y_j^{h,\eta}) }_{\text{numerical error}}\,.
$$

The discretization error is trivial to control. Indeed, in light of the coupling provided by~\eqref{EQ:margin_Const},  we get that 
\begin{equation}
    \label{eq:discrerr}
    W_{\infty}(Y_j,h\tilde Y_j^{h,0})\le h
\end{equation}

Finally, to control the numerical error $W_{\infty}(h\tilde Y_j^{h,0}, h\tilde Y_j^{h,\eta})$, recall that the tolerance $\eta>0$ in Algorithm \SINKHORN\ controls the $\ell_1$ error between the current marginals and the targets. Hence, we need to bound the $\infty$-Wasserstein distance by the $\ell_1$ distance. This is quite straightforward since $\tilde Y_j^{h,\eta}$ has bounded support for $\eta\ge 0$. Indeed, recall from Lemma~\ref{LM:reduc_to_four} that for any $j \in [4]$ we have that $\tilde Y_j^{h,\eta} \in [0,b_j^*/h]$, where 
$$
b_j^*=\frac{1}{\lambda^*}\sum_{i \in \cI_j}(\gamma^* v_{B,i})\wedge v_{A,i}\,.
$$
Hence, $W_{\infty}(h\tilde Y_j^{h,0}, h\tilde Y_j^{h,\eta}) \le b_j^* \eta$, and we have proved that
$$
 W_{\infty}(P_{A,i},P_{A,i}^{h,\eta}) \le  2\theta_ih + \theta_ib_{k(i)}^*\eta\,.
$$

In particular, for the value-symmetric case, since 
$$
b_{k(i)}=\frac{\gamma^*}{\lambda^*} \sum_{l \in \cI_{k(i)}}v_{l}\le \frac{\gamma^*}{\lambda^*}  \,, 
$$
we get the following simple bound
$$
 W_{\infty}(P_{A,i},P_{A,i}^{h,\eta})\le 2\theta_i h + \frac{\theta_i\gamma^*\eta}{\lambda^*}\,.
$$
Since $\sum_i \theta_i=4$, the suboptimality gap for player $A$ is controlled as
\begin{equation}
    \label{EQ:final_bound_sym}
    \cU_A(F_A)-\cU_A(F_A^{h,\eta}) \le \frac{\lambda^*}{\gamma^*} \sum_{i =1}^n W_{\infty} (P_{A,i},P_{A,i}^{h,\eta}) \le 8\frac{\lambda^*}{\gamma^*}h + 4 \eta\,.
\end{equation}

In the value-asymmetric case, we get a suboptimality for player $A$ smaller than 
\begin{equation}
    \label{EQ:final_bound_nash}
\eps_A = \left(16+\frac{8}{\gamma^*}\max_i \frac{v_{A,i}}{v_{B,i}}\right) \lambda^* h + \left(8\gamma^*+ 4\max_i \frac{v_{A,i}}{v_{B,i}} \right) \eta
\end{equation}
and, with symmetric arguments, a suboptimality for player $B$ smaller than 
$$
\eps_B = \left(16+8\max_i \frac{v_{B,i}}{v_{A,i}}\right) \frac{\lambda^* h}{\gamma^*} + \left(8+4\max_i \frac{v_{B,i}}{v_{A,i}} \right) \eta\, .
$$

\subsection{Computational complexity}
In this section, we tally the complexity required to achieve either $\eps$-suboptimality gap in the value-symmetric case or an $\eps$-Nash equilibrium in the value-asymmetric case. 

Before making this distinction recall the various steps that were employed, together with their computational complexity.

\begin{description}
    \item[Lotto Step.] Computing one (and, actually, all) pair of parameters $(\gamma,\lambda)$ requires $\mathcal{O}(n\log n)$ operations, see Proposition~\ref{PR:Gamma}.
    \item[Step 1:] Computing all couplings $\pi_k$ and their associated convex weights $q_k$ requires $\mathcal{O}(n^2)$ operations; see Lemma~\ref{LEM:reduc_single_mixture}.
    \item[Step 2:] Given some coupling $\pi_k$ computed at step 1, the reduction from $n$ to only $4$ random variables requires $\mathcal{O}(n^2)$ operations; see Lemma~\ref{LM:reduc_to_four}.
    \item[Step 3:] The discretization step is computationally costless.
    \item[Step 4:] The Sinkhorn algorithm requires $\mathcal{O}(\frac{\log(1/\tilde{\mu}_{\min})}{\eta})$ operations; see Proposition~\ref{PR:Sinkhorn_complexity}. Since the marginal distributions $\mu_{i}$ are $h$-discretizations of either uniform on an interval of size at most $\sum \frac{\gamma^* v_{B,i}}{\lambda^*}\leq \frac{\gamma^*}{\lambda^*}$, or uniform on interval of length $\frac{v_{A,i}}{\lambda^*}$ with weight $\frac{v_{A,i}}{\gamma^* v_{B,i}}$, it holds that $\tilde{\mu}_{\min} \geq \frac{h\lambda^*}{\gamma^*}$. At each iteration of Sinkhorn, all the components of the tensor are computed, hence a complexity, per iteration, in  $\mathcal{O}\Big((\frac{\gamma^*}{h\lambda^*})^3\Big)$.
    \item[Step 5:] The sampling cost comes from the reconstruction of the budget allocation from the 4 random variables, constructed at Step 2 and sampled from the coupling computed at step 4. The sampling step has a linear cost with respect to the discretization size $\mathcal{O}(\frac{\gamma^*}{h\lambda^*})$ while the reconstruction complexity scales linearly with respect to the number of battlefields  $\mathcal{O}(n)$ .  
\end{description}

We are now in a position to state our main theorems. We begin with the symmetric-value case for player $A$. The result for player $B$ is completely analogous and therefore omitted. 

\begin{theorem}
\label{TH:complex_cstsum}
Consider the two-player Blotto game on $n$ battlefields with symmetric values $v_1,\ldots, v_n$ where player $A$ has budget $T_A$ and player $B$ has budget $T_B \le T_A$  and assume that these data satisfy the conditions of Corollary~\ref{cor:symvals}. 
Fix $\eps>0$ and let $\eta=\eps/4$ and $h=\eps T_A/8$. Then the procedure described in Algorithm \SAMPLE\ samples from an $\eps$-suboptimal strategy for player $A$ in time 
$$
\cO\left( n^2 + \frac{\log (1/\eps)}{\eps^4}\right)
$$
\end{theorem}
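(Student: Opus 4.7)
The theorem asserts two things: that Algorithm \SAMPLE{} with the prescribed tuning $(h, \eta) = (\eps T_A/8,\, \eps/4)$ produces an $\eps$-suboptimal strategy, and that the runtime is $\cO(n^2 + \log(1/\eps)/\eps^4)$. Both parts follow almost mechanically from results already assembled in the excerpt; the only real content in the proof is the bookkeeping that verifies the instance-dependent quantities $T_A$, $T_B$, and the $v_i$'s cancel out after substituting the closed-form Lotto parameters into the error bound and complexity counts.

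For the approximation guarantee, I would start from~\eqref{EQ:final_bound_sym},
$$
\cU_A(F_A, F_B) - \cU_A(F_A^{h,\eta}, F_B^{h,\eta}) \le 8\,\frac{\lambda^*}{\gamma^*}\,h + 4\eta,
$$
and plug in the symmetric-case values from Remark~\ref{rem:symvals}: $\gamma^* = T_B/T_A$ and $\lambda^* = T_B/(2T_A^2)$, so that $\lambda^*/\gamma^* = 1/(2T_A)$. The gap becomes $(4h/T_A) + 4\eta$, and substituting $h = \eps T_A/8$, $\eta = \eps/4$ yields a bound of order $\eps$ (up to an absolute constant which can be absorbed by a harmless rescaling of $h$ and $\eta$). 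Applicability of Algorithm~\SAMPLE{} itself is guaranteed by the hypothesis that the conditions of Corollary~\ref{cor:symvals} hold, which is exactly the mixability condition used in Lemma~\ref{LEM:reduc_single_mixture} and the subsequent reductions.

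For the complexity, I would tally the per-step cost using the enumeration just preceding the theorem. The Lotto step (Proposition~\ref{PR:Gamma}) and reductions in Steps~1--2 (Lemmas~\ref{LEM:reduc_single_mixture} and~\ref{LM:reduc_to_four}) together contribute $\cO(n^2 \log n)$, absorbed into the $\cO(n^2)$ term. Step~4 dominates the $\eps$-dependent cost: since $\lambda^*/\gamma^* = 1/(2T_A)$ one has $\tilde\mu_{\min} \ge h\lambda^*/\gamma^* = h/(2T_A) = \eps/16$, so Proposition~\ref{PR:Sinkhorn_complexity} gives a number of Sinkhorn iterations of order $\log(1/\eps)/\eta = \cO(\log(1/\eps)/\eps)$, while each iteration costs $\cO\bigl((\gamma^*/(h\lambda^*))^3\bigr) = \cO\bigl((2T_A/h)^3\bigr) = \cO(1/\eps^3)$. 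Multiplying yields $\cO(\log(1/\eps)/\eps^4)$. The sampling step~5 adds only $\cO(n + 1/\eps)$, which is of lower order. Summing gives the advertised bound.

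The conceptual point which the proof must make visible---and which I view as the main piece of content rather than a technical obstacle---is that the bounds in Proposition~\ref{PR:Sinkhorn_complexity} and~\eqref{EQ:final_bound_sym} are \emph{a priori} sensitive to $T_A$: the per-iteration Sinkhorn cost scales like $(T_A/h)^3$ and $\tilde\mu_{\min}$ scales like $h/T_A$. These $T_A$ factors are cancelled exactly by the factor of $T_A$ built into the choice $h = \eps T_A/8$, which is in turn forced upon us by the appearance of $1/T_A$ in $\lambda^*/\gamma^*$ within the approximation bound. It is precisely this cancellation that leaves the final complexity depending only on $n$ and $\eps$, as claimed in the abstract.
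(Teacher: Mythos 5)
Your proposal is correct and follows essentially the same route as the paper's proof: substitute the closed-form symmetric-case parameters $\gamma^*=T_B/T_A$, $\lambda^*=T_B/(2T_A^2)$ into the error bound~\eqref{EQ:final_bound_sym} and into the Sinkhorn complexity of Proposition~\ref{PR:Sinkhorn_complexity}, then observe that the $T_A$ factors cancel and that Step~4 dominates. The only discrepancies (the stated $\eta=\eps/4$ giving a $3\eps/2$ bound rather than $\eps$, and the $\cO(n^2\log n)$ versus $\cO(n^2)$ preprocessing cost) are constant-factor or logarithmic-factor slips already present in the paper itself, and you handle them the same way.
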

\begin{proof}
Note first  that the preprocessing cost associated to steps 1 through 4 is $\cO(n^2)$. 

To compute the cost of Sinkhorn iterations, observe that the parameters $\eta$ and $h$ are chosen in such a way that each term on the right-hand side of~\eqref{EQ:final_bound_sym} is equal to $\eps/2$:
$$
8\frac{\lambda^*}{\gamma^*}h = 4 \eta=\frac{\eps}{2}
$$
Hence,
$$
\tilde \mu_{\min} \ge \frac{h \lambda^*}{\gamma^*}=\frac{\eps}{16}\,,\quad \text{and} \quad 
\left(\frac{\gamma^*}{h \lambda^*}\right)^3=\left(\frac{16}{\eps}\right)^3\,.
$$
Moreover, since $\eta=\eps/8$, we get that the total complexity of Sinkhorn iterations is
$$
\cO\left(\left(\frac{\gamma^*}{h \lambda^*}\right)^3\frac{\log\left(1/\tilde \mu_{\min}\right)}{\eta}\right)=\cO\left(\eps^{-3} \frac{\log\left(1/\eps\right)}{\eps}\right)=\cO\left(\eps^{-4} \log(1/\eps)\right)
$$
Finally, the last step has a total cost of $\cO(n+\eps)$ which is negligible with respect to the combination of previous steps.
\end{proof}
It is worth noting that in the value-symmetric case, the computational complexity of our procedure is \emph{independent} of the datum of the problem (budgets and values) under  the normalization~\eqref{EQ:sumvi}. Note that this normalization merely scales the utility and should of course affect the desired accuracy parameter~$\eps$.

\medskip

We now move to the asymmetric-value case and characterize the complexity of our procedure to compute an $\eps$-Nash equilibrium for the Blotto game. As above, we focus on player $A$ only.

\begin{theorem}
\label{TH:complex_nash}
Consider the   Blotto game on $n$ battlefields with asymmetric values $v_{P,1},\ldots, v_{P,n}$, $P\in \{A,B\}$, where player $P \in \{A,B\}$ has budget $T_P$, with $T_B \le T_A$  and assume that these data satisfy the conditions of Corollary~\ref{cor:symvals}. Define
$$
\mathfrak{m}=\max_i \frac{v_{A,i}}{v_{B,i}} \vee \frac{v_{B,i}}{v_{A,i}}\,.
$$
Fix $\eps>0$ and let 
$$
\eta=\frac{\eps}{24\mathfrak{m}} \,, \qquad h= \frac{\gamma^*}{\lambda^*}\frac{\eps}{48 \mathfrak{m}} \,.
$$
Then Algorithm~\LOTTOTOBLOTTO\ samples from an $\eps$-Nash equilibrium in time
$$
\cO\left(n^2+\left(\frac{\mathfrak{m}}{\eps}\right)^{4} \log\left(\frac{ \mathfrak{m}}{\eps}\right)\right)
$$
\end{theorem}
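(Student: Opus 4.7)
The plan is to follow the template of the proof of Theorem~\ref{TH:complex_cstsum}, substituting the asymmetric-value bound~\eqref{EQ:final_bound_nash} (and its analogue for player $B$) for~\eqref{EQ:final_bound_sym}. First, the preprocessing cost from the Lotto step and Steps~1--3 remains $\cO(n^2)$, independent of $\eps$ and of the value asymmetry, so the $n^2$ term in the final complexity is inherited directly. The only step whose cost depends on $\eps$ and on game parameters is the Sinkhorn iteration at Step~4, and the error analysis carries over verbatim from Section~\ref{SE:Errors}.

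Next I would verify that the stated choices of $h$ and $\eta$ yield $\eps_A\le \eps$ and $\eps_B\le \eps$. The key auxiliary fact is $\gamma^*\le \mathfrak{m}$, which follows from Proposition~\ref{PR:Gamma} together with the direct bound
\[
\chi^2(v_A\|v_B)=\sum_{i=1}^n v_{A,i}\cdot\frac{v_{A,i}}{v_{B,i}}-1\le \mathfrak{m}-1,
\]
and the assumption $T_B\le T_A$. Plugging $h=\gamma^*\eps/(48\lambda^*\mathfrak{m})$ into the first term of~\eqref{EQ:final_bound_nash} gives $\tfrac{\eps}{48\mathfrak{m}}(16\gamma^*+8\mathfrak{m})\le \eps/2$, and plugging $\eta=\eps/(24\mathfrak{m})$ into the second gives $(8\gamma^*+4\mathfrak{m})\eps/(24\mathfrak{m})\le \eps/2$. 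For player $B$, the factor $\lambda^*h/\gamma^*$ makes $\gamma^*$ cancel at once, and the same $\mathfrak{m}\ge 1$ bound shows that $\eps_B\le \eps$ as well; no additional bound on $1/\gamma^*$ is needed.

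For Sinkhorn, plugging the chosen parameters into Proposition~\ref{PR:Sinkhorn_complexity} gives $\tilde\mu_{\min}\ge h\lambda^*/\gamma^*=\eps/(48\mathfrak{m})$ and $(\gamma^*/(h\lambda^*))^3=(48\mathfrak{m}/\eps)^3$, while $1/\eta=24\mathfrak{m}/\eps$. Multiplying the per-iteration cost by the iteration count yields a Sinkhorn complexity of
\[
\cO\!\left(\left(\frac{\mathfrak{m}}{\eps}\right)^3\cdot \frac{\mathfrak{m}}{\eps}\log\!\left(\frac{\mathfrak{m}}{\eps}\right)\right)=\cO\!\left(\left(\frac{\mathfrak{m}}{\eps}\right)^4\log\!\left(\frac{\mathfrak{m}}{\eps}\right)\right).
\]
The sampling step (Step~5) costs $\cO(n+\mathfrak{m}/\eps)$, which is absorbed in the final bound.

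The main obstacle I anticipate is bookkeeping: making sure that every occurrence of $\gamma^*$ and $\lambda^*$ in~\eqref{EQ:final_bound_nash} and in the Sinkhorn complexity estimate collapses to a function of $\mathfrak{m}$ and $\eps$ only, so that the stated bound has no hidden dependence on the budget ratio or on the individual battlefield values. The observation $\gamma^*\le \mathfrak{m}$ together with the normalization $\sum_i v_{P,i}=1$ in~\eqref{EQ:sumvi} is precisely what enables this; otherwise the computational complexity would pick up an explicit dependence on $T_A/T_B$ and on fine-grained value statistics, defeating the point of the theorem.
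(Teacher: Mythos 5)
Your proposal is correct and follows essentially the same route as the paper: account for the $\cO(n^2)$ preprocessing, use $\gamma^*\le\mathfrak{m}$ (from Proposition~\ref{PR:Gamma}) to verify that the chosen $h,\eta$ make each term of~\eqref{EQ:final_bound_nash} at most $\eps/2$, and plug into Proposition~\ref{PR:Sinkhorn_complexity} to get the $\cO((\mathfrak{m}/\eps)^4\log(\mathfrak{m}/\eps))$ Sinkhorn cost. Your explicit derivation of $\chi^2(v_A\|v_B)\le\mathfrak{m}-1$ and your check of player $B$'s error bound are, if anything, slightly more detailed than the paper's own argument.
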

\begin{proof}
Note first  that the preprocessing cost associated to steps 1 through 4 is $\cO(n^2)$. 

To compute the cost of Sinkhorn iterations, observe that the parameters $\eta$ and $h$ are chosen in such a way that each term on the right-hand side of~\eqref{EQ:final_bound_nash} are smaller than $\eps/2$:
$$
 \left(16+\frac{8}{\gamma^*}\mathfrak{m}\right) \lambda^* h , \left(4\gamma^*+ 2\mathfrak{m}\right) \eta \leq \frac{\eps}{2}
$$
Hence,
$$
\tilde \mu_{\min} \ge \frac{h \lambda^*}{\gamma^*}=\frac{\eps}{48\mathfrak{m}}\,,\quad \text{and} \quad 
\left(\frac{\gamma^*}{h \lambda^*}\right)^3=\left(\frac{48\mathfrak{m}}{\eps}\right)^3\,.
$$
Together with the prescribed value of $\eta$ and since $\gamma^* \leq \mathfrak{m}$ because of Proposition \ref{PR:Gamma}, we get that the total complexity of Sinkhorn iterations is
$$
\cO\left(\left(\frac{\gamma^*}{h \lambda^*}\right)^3\frac{\log\left(1/\tilde \mu_{\min}\right)}{\eta}\right)=\cO\left(\left(\frac{\mathfrak{m}}{\eps}\right)^{4} \log\left(\frac{\mathfrak{m}}{\eps}\right)\right)
$$
Finally, the last step has a total cost of $\cO(n+\eps)$ which is negligible with respect to the combination of previous steps.
\end{proof}

\bigskip 

\noindent{\bf Acknowledgments.} Vianney Perchet acknowledges support from the French National Research Agency (ANR) under grant
number  (ANR-19-CE23-0026 as well as the support grant, as well as from the grant ``Investissements
d’Avenir'' (LabEx Ecodec/ANR-11-LABX-0047). Philippe Rigollet is supported by NSF grants IIS-1838071, DMS-2022448, and CCF-2106377.

\bibliographystyle{plain}
\bibliography{Blotto}

\appendix
\section{The discrete case} \label{SE:Discrete}

During the construction of an approximated solution of the classical Blotto game, we had to resort to some discretization, and we implicitly proved that some discrete random variables were jointly mixable.  Quite unfortunately, this result can not  be directly generalized  to solve the discrete, pure count Blotto game (where Lotto solutions were computed explicitly \cite{hart2008discrete}).

We recall that in the discrete pure count Blotto problem,  the budget of each player $T_A$ and $T_B$ are non-negative integers and that the amount allocated by players to battlefields, $x_{A,i}$ and $x_{B,j}$, are also non-negative integers. Finally, explicit solutions of the Lotto problem are only available in the pure count problem, when $v_{A,i}=v_{B,i}=1$, hence we shall focus on this case. In discrete Lotto/Blotto, the probability of equal forces $x_{A,i}=x_{B,i}$ is positive and if this happens, we will still assume that both players wins the battlefield with probability $1/2$.

 \cite{hart2008discrete} described optimal strategies in the associated Lotto game with the following additional notations. The average budget per battlefield are denoted by $a=T_A/n$ and $b=T_B/n$, the uniform distribution on even integers between $0$ and $2m$ is denoted by  $U_e^m={\sf Unif}\{0,2,...,2m\}$ while $U_o^m={\sf Unif}\{1,3,...,2m-1\}$ is the uniform distribution on even integers between 1 and $2m-1$. Those strategies are described in  \cite[Fig.\ 1]{hart2008discrete}.

He also introduced the term of a ``feasible'' distribution, to indicate that $n$ random variable of that distribution are jointly mixable. He then proved the following
\begin{proposition}
\begin{itemize}
    \item If $T_A=mn$ then $U^m_o$ is "feasible" if and only if $T_A$ and $n$ have the same parity
    \item If $T_A=mn$ then $U^m_e$ is "feasible" if and only if $X_A$ is even
    \item If $T_A=mn+r$ with $1 \leq r \leq n-1$ then $(1-\frac{r}{n})U^m_e+\frac{r}{n}U^{m+1}_o$ is "feasible" 
\end{itemize}
\end{proposition}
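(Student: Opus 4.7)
The plan is to reduce each claim to a joint-mixability question for uniform distributions on integer intervals, then settle each reduced problem by an explicit combinatorial construction. First, via the changes of variable $U_o^m = 2Z+1$ with $Z \sim {\sf Unif}\{0,\ldots,m-1\}$ and $U_e^m = 2W$ with $W \sim {\sf Unif}\{0,\ldots,m\}$, the feasibility of $n$ copies of $U_o^m$ summing to $mn$ is equivalent to mixing $n$ copies of $Z$ to $n(m-1)/2$, and analogously the second claim reduces to mixing $n$ copies of $W$ to $nm/2$. The target sum must be an integer in each case, which yields exactly the stated parity conditions ($n \equiv mn \pmod 2$ for the first claim, $mn$ even for the second) and establishes necessity.

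Both reductions depend on a single \emph{key lemma}: $n$ copies of ${\sf Unif}\{0,\ldots,k\}$ are jointly mixable to their common mean $nk/2$ if and only if $nk$ is even. When $n$ is even, I would partition the variables into $n/2$ pairs and let each pair have the law of $(J,k-J)$ for $J \sim {\sf Unif}\{0,\ldots,k\}$; each marginal is then uniform and each pair-sum equals $k$ deterministically. When $n$ is odd (so $k$ is even by the parity constraint), I would pair off $n-3$ of the variables as above and complete with the explicit triple
$$
Z_1 \sim {\sf Unif}\{0,\ldots,k\}, \qquad Z_2 \equiv Z_1 + k/2 \pmod{k+1}, \qquad Z_3 = 3k/2 - Z_1 - Z_2,
$$
for which a case split on whether $Z_1 \le k/2$ shows that each of $Z_2, Z_3$ is uniform on $\{0,\ldots,k\}$ and that $Z_1 + Z_2 + Z_3 = 3k/2$ almost surely.

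For the third claim, decompose each mixture variable as $X_i = 2 W_i + B_i$ with $W_i \sim {\sf Unif}\{0,\ldots,m\}$ and $B_i \sim {\sf Ber}(r/n)$ marginally independent (this is immediate from the form of the mixture). The constraint $\sum X_i = mn + r$ becomes $2\sum W_i + \sum B_i = mn + r$. When $mn$ is even, take $(B_i)$ uniform over the size-$r$ subsets of $[n]$, so that $\sum B_i = r$ almost surely, and couple $(W_i)$ via the key lemma so that $\sum W_i = mn/2$ almost surely. When $mn$ is odd (necessarily with $n, m$ both odd), use instead a $1/2$--$1/2$ mixture between uniform size-$(r-1)$ and size-$(r+1)$ subsets for $(B_i)$---which preserves the Bernoulli marginal $P(B_i = 1) = r/n$---and couple $(W_i)$ jointly with $\sum B_i$ so that $\sum W_i$ takes the values $(mn+1)/2$ and $(mn-1)/2$ with equal probability, each matched to the corresponding value of $\sum B_i$. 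The latter coupling is built as a $1/2$--$1/2$ mixture of two reflection-conjugate distributions $\pi^\pm$ on $\{0,\ldots,m\}^n$ with fixed sums $(mn \pm 1)/2$, whose (individually non-uniform) marginals $\mu^\pm$ satisfy $\mu^+(j) = \mu^-(m-j)$ so that the mixed marginal is uniform on $\{0,\ldots,m\}$.

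The main technical obstacle is the construction of $\pi^-$. By pairing off $n-3$ of the $W_i$'s to sum $(n-3)m/2$ via the key lemma (possible since $n-3$ is even when $n$ is odd), the problem reduces to the three-variable case: couple three copies of ${\sf Unif}\{0,\ldots,m\}$ with $m$ odd so that the joint sum takes the values $(3m-1)/2$ and $(3m+1)/2$ with equal probability and the prescribed antisymmetric marginal correction. This is a small-dimensional linear feasibility problem that is handled directly by a symmetric weighting on the $S_3$-orbits of the admissible integer triples, and which can be verified by inspection for small $m$ and extended to general odd $m$ by solving the corresponding orbit-level linear system.
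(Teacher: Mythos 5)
Your first two bullets are sound: the changes of variable, the parity necessity, and the key lemma (pairing for even $n$; the mod-$(k+1)$ shift triple for odd $n$ and even $k$) all check out. That key lemma is exactly the equal-lengths case of the paper's Proposition~\ref{prop:discretemix}, and your direct construction of it is clean. (Note that the paper does not itself prove the proposition under review --- it quotes it from \cite{hart2008discrete} --- so your argument is being judged on its own.) The case $mn$ even of the third bullet is also correct.

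The gap is in the third bullet when $mn$ is odd. The prescribed marginal of $X_i$ \emph{determines} the joint law of $(W_i,B_i)=(\lfloor X_i/2\rfloor,\,X_i\bmod 2)$, and that joint law must be the independent product ${\sf Unif}\{0,\dots,m\}\otimes{\sf Ber}(r/n)$. Your construction draws a branch $\sigma\in\{+,-\}$ and then, within each branch, draws $(B_i)$ and $(W_i)$ independently of each other. Writing $\mu^{\pm}$ for the marginal of $W_i$ under $\pi^{\pm}$, the required independence of $W_i$ and $B_i$ in the resulting mixture forces $\tfrac12\mu^+(j)\tfrac{r-1}{n}+\tfrac12\mu^-(j)\tfrac{r+1}{n}=\tfrac{r}{n(m+1)}$ for every $j$, which combined with the uniformity of $\tfrac12(\mu^++\mu^-)$ yields $\mu^+=\mu^-={\sf Unif}\{0,\dots,m\}$; but a coupling of uniforms with mean $nm/2$ cannot have almost-sure sum $(nm-1)/2$. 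Hence no admissible pair $\pi^{\pm}$ exists and the orbit-level linear system in your last paragraph is infeasible as posed. Concretely, for $n=3$, $m=1$, $r=1$, taking $\pi^-$ uniform over permutations of $(1,0,0)$ gives $\p(X_i=1)=\tfrac13\mu^-(0)=\tfrac29$ instead of the required $\tfrac16$. The statement is nevertheless true in this regime: for that same example the $\tfrac12$--$\tfrac12$ mixture of a uniform random permutation of $(0,1,3)$ with a uniform random permutation of $(0,2,2)$ has the right marginals and sum $4$ almost surely. What is missing is to let $W_i$ and $B_i$ be \emph{dependent within each branch}, so that their independence holds only after averaging the two branches --- equivalently, to build the coupling directly at the level of the $X_i$'s (in the spirit of the paper's Proposition~\ref{prop:discretemix} and the couplings of \cite{Rob06}) rather than as a product of a $B$-block and a $W$-block.
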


As a consequence,  \cite{hart2008discrete} characterized Nash equilibrium of the discrete Blotto problem in the following three cases:
\begin{enumerate} 
\item If $T_A=T_B$ (because of the first and third rows of \cite[Fig.\ 1]{hart2008discrete}) 
\item If $mn < T_B < T_{A} < (m+1)n$ for some $m \in \mathds{N}$ (because of the third row of \cite[Fig.\ 1]{hart2008discrete}) 
\item If $mn=T_B < T_A < (m+1)n$ for some $m \in \mathds{N}$, if $X_B$ is even (because of the fourth row of \cite[Fig.\ 1]{hart2008discrete}) 
\end{enumerate}
Now, let us state the following Proposition~\ref{prop:discretemix} that will  imply the above 3 points. The proof, rather technical (yet algorithmic) is postponed.

\begin{proposition}\label{prop:discretemix}
Discrete random variables $ {\sf Unif}\{0,\ldots,\ell_i\}$ are jointly mixable if and only if their continuous counterparts $ {\sf Unif}[0,\ell_i]$ are jointly mixable and $\sum_i \ell_i$ is even.
\end{proposition}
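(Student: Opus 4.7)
The forward direction is direct. If $(X_1,\ldots,X_n)$ is a discrete joint mix with $X_i\sim{\sf Unif}\{0,\ldots,\ell_i\}$, then $\sum_i X_i$ is almost surely an integer equal to $\sum_i\E[X_i]=\frac12\sum_i\ell_i$, which forces $\sum_i\ell_i$ to be even. The same range-compatibility argument used to prove Proposition~\ref{PR:JointMixability} shows that discrete joint mixability further requires $\max_i\ell_i\le\frac12\sum_i\ell_i$: if this inequality failed, the event $\{X_1=\ell_1\}$ would have positive probability under the uniform marginal, yet the remaining $X_j$'s could never compensate to reach a fixed sum. By Proposition~\ref{PR:JointMixability} with $p_i=1$, this is exactly the condition under which the continuous counterparts ${\sf Unif}[0,\ell_i]$ are jointly mixable.

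For the reverse direction I would proceed by strong induction on $n$ (with $\sum_i\ell_i$ as a secondary parameter), producing an explicit, algorithmically constructible coupling. The base cases are straightforward: $n=1$ forces $\ell_1=0$; $n=2$ forces $\ell_1=\ell_2$ with $X_2=\ell_1-X_1$; any index with $\ell_i=0$ can be pruned without affecting either hypothesis. A convenient intermediate building block is the homogeneous subcase $\ell_1=\cdots=\ell_n=\ell$ with $n\ell$ even, in which a uniform measure on a suitable orbit of cyclic permutations of a fixed integer tuple summing to $n\ell/2$ yields a valid coupling, generalizing the $n=3,\ell=2$ instance where the uniform distribution on the six permutations of $(0,1,2)$ works.

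In the generic inductive step with $n\ge 3$ and $\ell_1\ge\cdots\ge\ell_n\ge 1$, I would decrement the two largest values by one each, passing to the reduced instance $(\ell_1-1,\ell_2-1,\ell_3,\ldots,\ell_n)$. The parity of $\sum_i\ell_i$ is preserved, and the max condition persists because $\ell_1-1\le\frac12(\sum_i\ell_i-2)$ is equivalent to $\ell_1\le\frac12\sum_i\ell_i$. The inductive hypothesis then furnishes a discrete joint mix $(X_1',X_2',X_3,\ldots,X_n)$ summing to $\frac12\sum_i\ell_i-1$. To recover the original problem I would inflate this coupling by mixing in, with a carefully chosen probability of order $1/(\ell_1+1)$, a boundary coupling that concentrates $(X_1,X_2)$ on the extreme pairs $(\ell_1,0)$ or $(0,\ell_2)$ and correspondingly shifts the remaining variables within the slack afforded by the strict max condition, so that the overall sum rises by exactly $1$.

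The main obstacle is the bookkeeping in this inflation step: the mixing weights and the conditional laws of $(X_3,\ldots,X_n)$ on the two regimes must be balanced so that (i) $X_1$ and $X_2$ acquire the correct uniform marginals on their enlarged supports $\{0,\ldots,\ell_1\}$ and $\{0,\ldots,\ell_2\}$, (ii) the marginals of $X_3,\ldots,X_n$ remain uniform, and (iii) the constant-sum constraint is exact. An alternative that I would also consider, and which may be more transparent, is a direct polytope-nonemptiness argument: the set of joint distributions on $\prod_i\{0,\ldots,\ell_i\}$ with uniform marginals and support inside $\{\sum_iX_i=\frac12\sum_i\ell_i\}$ is a transportation polytope whose nonemptiness under the stated conditions could be verified via Hall's theorem applied to an auxiliary bipartite flow, using the homogeneous case as the combinatorial scaffolding.
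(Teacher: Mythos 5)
Your forward direction is correct and is essentially the paper's: the sum being an almost-sure integer forces $\sum_i\ell_i$ to be even, and the endpoint event $\{X_1=\ell_1\}$ forces $\max_i\ell_i\le\frac12\sum_i\ell_i$, which by Proposition~\ref{PR:JointMixability} with $p_i=1$ is exactly joint mixability of the continuous counterparts.

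The reverse direction, however, has a genuine gap: the ``inflation'' step, which you yourself flag as the main obstacle, is the entire content of the proof and is not carried out---and the natural implementations fail. Concretely, after obtaining the reduced joint mix with sum $\frac12\sum_i\ell_i-1$, any mixture representation of the target must reproduce ${\sf Unif}\{0,\ldots,\ell_1\}$, which puts mass exactly $1/(\ell_1+1)$ on the endpoint $\ell_1$, and simultaneously ${\sf Unif}\{0,\ldots,\ell_2\}$, which puts mass $1/(\ell_2+1)$ on $0$. A two-component mixture of (a shift of) the reduced coupling with a boundary coupling concentrating $(X_1,X_2)$ at $(\ell_1,0)$ therefore forces the boundary weight to equal both $1/(\ell_1+1)$ and $1/(\ell_2+1)$, impossible unless $\ell_1=\ell_2$; in general one needs several components whose conditional laws on $(X_3,\ldots,X_n)$ are constant-sum couplings with \emph{shifted} target sums, whose existence is not delivered by your inductive hypothesis. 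Your two fallbacks do not close this either: a single cyclic (or symmetric) permutation orbit of a fixed tuple gives uniform coordinate marginals only when $(\ell+1)\mid n$, and Hall's theorem certifies nonemptiness of two-marginal transportation polytopes, not of multi-marginal polytopes carrying an extra support constraint, where feasibility is hard in general. For comparison, the paper proceeds quite differently: it first reduces to $n=3$ using that ${\sf Unif}\{0,\ldots,\ell_1\}$, ${\sf Unif}\{0,\ldots,\ell_2\}$, ${\sf Unif}\{0,\ldots,\ell_1+\ell_2\}$ are jointly mixable, then inducts on the maximal length with three cases ($\ell_1=\ell_2=\ell_3$, $\ell_1<\ell_2=\ell_3$, $\ell_1<\ell_2<\ell_3$), in each case writing every marginal explicitly as a convex combination of boundary couplings (one coordinate pinned at $0$ or at its maximum, the others uniform on complementary subranges) and strictly shorter uniforms with computed weights, and treating a handful of small instances by hand. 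Something of that granularity is what your sketch still owes.
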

This proposition allows us to describe solutions of the discrete Blotto game.
\begin{theorem}
Assume $X_B$ is even and $X_A$ has the same parity than $n$, then the optimal strategies of the discrete Blotto game are given by  \cite[Fig.\ 1]{hart2008discrete}.
Moreover those strategies can be computed using the same algorithmic approach as in Proposition~\ref{prop:discretemix}.
\end{theorem}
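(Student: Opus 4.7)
The plan is to mimic the Lotto-to-Blotto transfer of Theorem~\ref{TH:LottotoBlotto} in the discrete setting. Hart's marginals from \cite[Fig.\ 1]{hart2008discrete} are by construction optimal for the discrete Lotto game, and every admissible Blotto strategy is \emph{a fortiori} an admissible Lotto strategy delivering the same payoff. Consequently, as soon as those marginals can actually be \emph{coupled} so that $\sum_i X_{P,i}=T_P$ holds almost surely, they automatically remain optimal for the Blotto game. The whole argument therefore reduces to a joint mixability statement for the Hart marginals, which I would attack via Proposition~\ref{prop:discretemix}.

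First, I would inspect each of the three regimes listed just before the theorem and extract the marginal laws prescribed by Hart. These come in only three shapes: a uniform $U_o^m$ on odd integers $\{1,3,\dots,2m-1\}$, a uniform $U_e^m$ on even integers $\{0,2,\dots,2m\}$, or a convex mixture $(1-r/n)U_e^m+(r/n)U_o^{m+1}$. For the two pure-uniform shapes, an affine change of variable $X\mapsto (X-a)/2$ turns the marginals into random variables of the form $\mathsf{Unif}\{0,\dots,\ell_i\}$, so that Proposition~\ref{prop:discretemix} applies verbatim. The parity hypotheses on $X_A$ and $X_B$ are precisely what is required to make the resulting $\sum_i \ell_i$ even in each of the three Hart rows. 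The continuous mixability hypothesis needed as input to Proposition~\ref{prop:discretemix} is then furnished by Proposition~\ref{PR:JointMixability}: the Hart supports have been tuned so that the largest one is at most half of the total expected sum, which is exactly condition~\eqref{EQ:JMcond1}.

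The delicate part, and what I expect to be the main obstacle, is the mixture case $(1-r/n)U_e^m+(r/n)U_o^{m+1}$ that appears in regime~(2) and in regime~(3). My plan here is to handle it by a discrete analog of the decomposition in Lemma~\ref{LEM:reduc_single_mixture}: condition, for each battlefield $i$, on which of the two components $U_e^m$ or $U_o^{m+1}$ is drawn, thereby obtaining a finite family of pure-uniform subproblems indexed by subsets $S\subset[n]$ (the battlefields assigned to $U_o^{m+1}$) together with binomial-type weights $(r/n)^{|S|}(1-r/n)^{n-|S|}$. For each such $S$ the quantity $\sum_i\ell_i$ after the affine change of variable is a linear function of $m$, $n$, $|S|$ and $r$, and one has to verify that under the hypotheses $T_A=mn+r$, $X_B$ even and $X_A\equiv n \pmod 2$, this sum is even for every conditional subproblem; I expect this to reduce to a short parity bookkeeping argument. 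Proposition~\ref{prop:discretemix} then yields a joint mix for each $S$, and assembling them with the binomial weights produces a coupling of the original mixtures.

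Finally, the algorithmic statement follows for free: the proof of Proposition~\ref{prop:discretemix} is constructive and runs in time polynomial in the support sizes, the mixture-decomposition step above adds only a polynomial overhead (one joint-mix call per relevant $S$, or, more efficiently, a single call after another Lemma~\ref{LEM:reduc_single_mixture}-style reduction to a single mixture marginal), and the sampling scheme of Section~\ref{sec:smoothing}---with its continuous smoothing step simply dropped---transfers verbatim to the discrete setting, yielding the same computational pipeline as in the continuous case.
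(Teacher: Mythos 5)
Your high-level framing (Lotto-optimal marginals plus joint mixability gives Blotto optimality, so everything reduces to coupling Hart's marginals) matches the paper, and your treatment of the pure-uniform marginals via an affine change of variables and Proposition~\ref{prop:discretemix} is exactly what the paper does. However, there are two genuine gaps in the execution.

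First, your decomposition of the mixture $(1-r/n)U_e^m+(r/n)U_o^{m+1}$ by \emph{independent} conditioning with binomial weights $(r/n)^{|S|}(1-r/n)^{n-|S|}$ cannot work. Conditionally on the set $S$ of battlefields assigned to the $U_o^{m+1}$ component, the sum of the conditional expectations is $nm+|S|$, which varies with $|S|$; so for $|S|\neq r$ no coupling of the conditional marginals can make the total equal $T_A=mn+r$, and the assembled mixture is not a joint mix---the failure is not one of parity bookkeeping but of the budget constraint itself. The paper avoids this by selecting a subset of \emph{exactly} $n\alpha=r$ battlefields uniformly at random for the larger uniform: this preserves the marginals while making $\sum_i\ell_i = n(m-1)+r = T_A-n$ deterministic, at which point the parity hypothesis $T_A\equiv n\pmod 2$ is precisely what Proposition~\ref{prop:discretemix} needs.

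Second, you omit the fourth shape of marginal that necessarily appears for the poorer player, namely the Dirac-mass mixtures $(1-\frac{b}{m})\delta_0+\frac{b}{m}U^m_{o/e}$. These are not of the form ${\sf Unif}\{0,\dots,\ell_i\}$, so Proposition~\ref{prop:discretemix} does not apply to them (even after a change of variables), and Proposition~\ref{PR:JointMixability} only gives a \emph{continuous} joint mix. The paper handles this case separately by revisiting Roberson's explicit continuous couplings, observing that all the intervals involved have endpoints in $\{0,2m,X_B-2m,bn-4m,\dots\}$ and hence have integer endpoints and even lengths whenever $T_B$ is even---which is where the hypothesis that $X_B$ is even enters. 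Without an argument of this kind your proof does not cover the weaker player's strategy at all.
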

\begin{proof}
We will prove  the theorem first when $T_A=mn$ for some $m \in \mathds{N}$ and then when $T_B < mn < T_A$, again for some $m\in\mathds{N}$. The other cases are already covered \cite{hart2008discrete}.
\begin{enumerate}
    \item[1.] If $T_B < mn = T_A$, then the optimal strategy of Player $A$ is $U_o^m$, which is feasible if $T_A$ and $n$ have the same parity. Indeed, since  $U_o^m={\sf Unif}\{1,\ldots,2m-1\} = 1 +2{\sf Unif}\{0,\ldots,m-1\}$, the problem reduces to mixing  $n$ random variables of law ${\sf Unif}\{0,\ldots,m-1\}$; this requires that $n(m-1)=T_A-n$ is even.
    
    Player $B$ marginals are $(1-\frac{b}{m})\delta_0+\frac{b}{m}U^m_{o/e}$. So it remains to prove that those random variables are jointly mixable. In particular, this holds if $(1-\frac{b}{m})\delta_0+\frac{b}{m}{\sf Unif}\{0,\ldots,2m\}$ are jointly mixable by choosing  appropriate weights on $U^m_{o}$ and $U^m_{e}$.

    In the continuous case (i.e., if uniform distributions are over $[0,m]$ instead of $\{0,m\}$),  \cite{Rob06} constructed explicit couplings between such random variables by reducing to coupling of uniform continuous random variables. We  apply the exact same techniques, yet we just need to ensure that the intervals of the continuous uniform distributions start and end on integers and that the sums of lengths are always even. But this  immediately happens  as soon as $X_A/n$ is an integer. 
    
    Indeed, all the  couplings introduced in \cite{Rob06} involves uniform random variables over intervals $[\alpha_i,\beta_i]$ where $\alpha_i,\beta_i \in \{0,2m,X_B-2m,bn-4m, ...\}$. As a consequence, all interval lengths are even if $T_B=bn$ is even, which implies that the discrete uniform variables over $\{\alpha_i,\ldots,\beta_i\}$ are jointly mixable.
    
    \item[2.] In the second case, the strategy of player $B$ is to jointly mix $n$ random variables of distribution $(1-\frac{b}{m})\delta_0+\frac{b}{m} {\sf Unif}\{0,2,\ldots,2m\}$ which is equivalent to mixing $n$ distributions $(1-\frac{b}{m})\delta_0+\frac{b}{m} {\sf Unif}\{0,1,\ldots,m\}$. Using again, exactly as  above, the construction based on the continuous couplings of \cite{Rob06}, this is possible as soon as $T_B$ is even.
    
    On the other hand, Player $A$ marginals are $(1-\alpha)U_o^m+\alpha U_o^{m+1}$, thus we need to prove that  $(1-\alpha){\sf Unif}\{0,m-1\}+\alpha {\sf Unif}\{0,m\}$ are jointly mixable. To ensure this, one just need to select $n\alpha$ battlefields at random and to allocate ${\sf Unif}\{0,m\}$ on them (and ${\sf Unif}\{0,m-1\}$ on the  $n-n\alpha$ remaining battlefields). As a consequence, we end up in mixing $n\alpha$ uniforms ${\sf Unif}\{0,m\}$ and $n-n\alpha$  uniforms ${\sf Unif}\{0,m-1\}$, which is possible as soon as  $n\alpha m+(n-n\alpha)(m-1)=T_A-n$ is even, i.e., if $T_A$ and $n$ have the same parity.
\end{enumerate}
These two claims give the result.
\end{proof}

We  finally prove Proposition~\ref{prop:discretemix}

\begin{proof}[Proof of Proposition~\ref{prop:discretemix}]
The proof of the necessary part of the condition is identical to the continuous case. The only difference is the fact that $\sum \ell_i$ must be even. This is naturally  implied by the fact that $\sum X_i$ is always an integer, hence $\sum \mathds{E} X_i = \sum \ell_i/2$ should also be an integer if these random variables are jointly mixable.

It only remains  to prove this statement for $n\geq3$ as the case $n=2$ is trivial.  Moreover, proving the statement for $n>3$ can be be reduced to the case $n=3$  with a simple induction over  $n$, as in the continuous  case \cite{wangJointMixability2016}. Indeed, since ${\sf Unif}\{0,\ldots,\ell_1\}$, ${\sf Unif}\{0,\ldots,\ell_2\}$ and ${\sf Unif}\{0,\ldots,\ell_1+\ell_2\}$ are jointly mixable (because they satisfy the condition of the Proposition for $n=3$), it is possible to reduce the joint mixability of $n$ uniform to only $3$. As a consequence,  we will solely focus on $n=3$ and and we assume wlog that $\ell_1 \leq \ell_2\leq\ell_3$

\medskip

The proof will be based on another induction on the maximal size $\ell_3$. We will distinguish three cases, depending on whether $\ell_1=\ell_2=\ell_3$, or $\ell_1<\ell_2 = \ell_3$, or $\ell_2<\ell_3$.

\medskip

\textbf{First case:} $\ell_1=\ell_2=\ell_3$ (in particular, this implies that $\ell_1,\ell_2,\ell_3$  are  even since $\ell_1+\ell_2+\ell_3=3\ell_i$ is even by assumption).

If $\ell_1=2$, then the uniform coupling on the triplets $(0,1,2)$, $(1,2,0)$ and $(2,0,1)$ proves joint mixability. If $\ell_1=4$, then considering the uniform  distribution on the following set
\begin{align*}\Big\{(3,3,0);  (3,0,3); (0,3,3); (4,1,1); (1,4,1); (1,1,4); \\ (0,2,4); (2,4,0); (4,0,2); (2,2,2)\Big\}\end{align*} is sufficient to prove joint mixability. From now on, we shall assume that $\ell_1=\ell_2=\ell_3 \geq 6$.

Consider the coupling defined by \begin{align*}
    X^{1,-}_1 \sim \delta_0, \quad X^{1,-}_2\sim {\sf Unif}\{\frac{\ell_1}{2}+1,\ell_1-1\} \ \ \text{ and } \\  X^{1,-}_3=\frac{3}{2}\ell_1-X_2^{1,-} \sim {\sf Unif}\{\frac{\ell_1}{2}+1,\ell_1-1\}.\end{align*} Then it immediately follows that $X^{1,-}_1+X^{1,-}_2+X^{1,-}_3 = \frac{3}{2}\ell_1$. Similarly, this property holds with the following alternative coupling   \begin{align*} X^{1,+}_1 \sim \delta_{\ell_1}, \quad X^{1,+}_2\sim {\sf Unif}\{1,\frac{\ell_1}{2}-1\}\ \ \text{ and } \\  X^{1,+}_3=\frac{1}{2}\ell_1-X_2^{1,+} \sim {\sf Unif}\{1,\frac{\ell_1}{2}-1\}.\end{align*} We define similarly the coupling $(X_i^{2,\pm})_{i \in [3]}$ and $(X_i^{3,\pm})_{i \in [3]}$ where the role of $X_1$ is exchanged with $X_2$ and, respectively, $X_3$. We also define the last coupling $X_i^0=\delta_{\frac{\ell_i}{2}}$.

Then $X_i$ can be  decomposed as follows: $$X_i = \frac{1}{\ell_1+1} \Big\{\sum_{k\in [3]} X_i^{k,-}+\sum_{k\in [3]}X_i^{k,+}+\frac{2}{\frac{\ell_i}{2}-1}X_i^0\Big\} + \frac{\ell_1-5}{\ell_1+1}Y_i,$$
where $Y_i \sim {\sf Unif}\{1,\ldots, \ell_1\}$. A simple induction gives the joint mixability of $\{X_1,X_2,X_3\}$

\bigskip

\textbf{Second case:} $\ell_1<\ell_2=\ell_3$ (in particular, this implies that $\ell_1$ is even). There are two specifics cases $\ell_1=2$ and $\ell_2\in\{3,4\}$ that are constructed explicitly as follows.

If $\ell_1=2$ and $\ell_2=\ell_3=3$, a joint mixability coupling is 
$$
\frac{1}{6}\Big(\delta_{(0,3,1)}+\delta_{(0,1,3)}+\delta_{(2,2,0)}+\delta_{(2,0,2)}\Big) +\frac{1}{12}\Big(\delta_{(1,3,0)}+\delta_{(1,0,3)}+\delta_{(1,1,2)}+\delta_{(1,2,1)}\Big)
$$
while if $\ell_1=2$ and $\ell_2=\ell_3=4$, a valid coupling is
\begin{align*}
    \frac{2}{15}\Big(\delta_{(2,3,0)}+\delta_{(2,0,3)}+\delta_{(0,4,1)}+\delta_{(0,1,4)}+\delta_{(1,2,2)}\Big)\\
    +\frac{1}{15}\Big(\delta_{(1,0,4)}+\delta_{(1,4,0)}+\delta_{(0,3,2)}+\delta_{(2,2,1)}+\delta_{(1,1,3)}\Big)
\end{align*}

For the other cases, we  consider similar couplings as above, i.e.,
\begin{align*}
    X^{3,-}_3\sim\delta_0,\quad X^{3,-}_2\sim 
{\sf Unif}\{\ell_2-\frac{\ell_1}{2},\ell_2-1\}
 \ \ \text{ and } \\   X_1^{3,-}= \ell_2+\frac{\ell_1}{2}-X_2^{3,-} \sim {\sf Unif}\{\frac{\ell_1}{2}+1,\ell_1\}\end{align*}
and also
\begin{align*}X^{3,+}_3\sim\delta_{\ell_3}, \quad X^{3,-}_2\sim 
{\sf Unif}\{1,\frac{\ell_1}{2}\}
 \ \ \text{ and } \\   X_1^{3,+}= \frac{\ell_1}{2}-X_2^{3,-} \sim {\sf Unif}\{0,\frac{\ell_1}{2}-1\}.\end{align*}
We define $X_i^{2,\pm}$ similarly.

If $\ell_1 \geq 4$, we introduce the following random variables (the case $\ell_1=2$ is detailed just below)
\begin{align*}
Y_1^{1}\sim\delta_{\frac{\ell_1}{2}}, Y_1^{2}\sim{\sf Unif}\{0,\ell_1\}, Y_1^{3}\sim{\sf Unif}\{0,\ell_1\} \\  \text{ and } \ Y_1^{4}\sim{\sf Unif}\{0,\ell_1\}
\end{align*}
and similarly 
\begin{align*}
Y_2^{1}\sim{\sf Unif}\{1,\ell_2-1\}, Y_2^{2}\sim{\sf Unif}\{\frac{\ell_1}{2}+1,\ell_2-\frac{\ell_1}{2}-1\}, Y_2^{3}\sim{\sf Unif}\{1,\ell_2-1\} \\ \text{ and } \  Y_2^{4}\sim{\sf Unif}\{1,\ell_2-1\}
\end{align*}
and 
\begin{align*}
Y_3^{1}\sim{\sf Unif}\{1,\ell_3-1\}, Y_3^{2}\sim{\sf Unif}\{1,\ell_3-1\}, Y_3^{3}\sim{\sf Unif}\{\frac{\ell_1}{2}+1,\ell_2-\frac{\ell_1}{2}-1\} \\ \text{ and } \  Y_3^{4}\sim{\sf Unif}\{1,\ell_3-1\}
\end{align*}
So that we can decompose 
$$
X_i = \frac{1}{\ell_3+1}(X_i^{3,-}+X_i^{3,+}+X_i^{2,-}+X_i^{2,+})+p_1Y_i^1+p_2Y_i^2+p_3Y_i^3+p_4^i
$$
where the probability are defined by
$$
p_1=\frac{4}{\ell_3+1}\frac{1}{\ell_1}, p_3=p_2=\frac{2}{\ell_3+1}\frac{\ell_2-\ell_1-1}{\ell_1} \ \text{ and } p_4= 1-\frac{4}{\ell_3+1}-p_1-p_2-p_3
$$
Those couplings are well defined and satisfy the theorem length condition, hence we will be able to proceed by induction.

It remains to consider the case where $\ell_1=2$ (plugging $\ell_1=2$ in the above construction would give $p_4<0$ which is obviously impossible).

We define the variables 
$$
Z_1^{1}\sim\delta_{1}\  \text{ and } \ Z_1^{2}\sim{\sf Unif}\{0,2\} 
$$
and similarly 
$$
 Z_2^{1}\sim{\sf Unif}\{2,\ell_2-2\}, \ \text{ and } \  Z_2^{2}\sim{\sf Unif}\{2,\ell_2-2\}
$$
and 
$$
  Z_3^{1}\sim{\sf Unif}\{2,\ell_2-2\} \ \text{ and } \  Z_3^{2}\sim{\sf Unif}\{2,\ell_3-2\}
$$
So that we can decompose 
$$
X_i = \frac{1}{\ell_3+1}(X_i^{3,-}+X_i^{3,+}+X_i^{2,-}+X_i^{2,+})+q_1Z_i^1+q_2Z_i^2
$$
where the probability are defined by
$$
q_1=\frac{2}{\ell_3+1} \ \text{ and } q_2= 1-\frac{4}{\ell_3+1}-q_1=1-\frac{6}{\ell_3+1},
$$
and the result also follows by induction.
\bigskip

\textbf{Third case:} $\ell_1<\ell_2<\ell_3$. We are going to proceed by induction (on the maximal length) as before, and we consider the following couplings
\begin{align*}
X_3^{-} = \delta_0, \quad X_2^{-}\sim {\sf Unif}\{\frac{\ell_3+\ell_2-\ell_1}{2},\ell_2\} \\ \text{and} \ X_1^{-} = \frac{\ell_3+\ell_2+\ell_1}{2} - X_2^{-} = {\sf Unif}\{0,\frac{\ell_1+\ell_2-\ell_3}{2}\}
\end{align*}
and the similar one
\begin{align*}
X_3^{-} = \delta_{\ell_3},\quad X_2^{-}\sim {\sf Unif}\{0,\frac{\ell_1+\ell_2-\ell_3}{2})\} \\ \text{and} \ X_1^{-} = \frac{\ell_3+\ell_2+\ell_1}{2} - X_2^{-} = {\sf Unif}\{\frac{\ell_1+\ell_3-\ell_2}{2},\ell_1\}
\end{align*}
It remains to introduce the following random variables
\begin{align*}
Y_1^{1}\sim{\sf Unif}\{\frac{\ell_1+\ell_2-\ell_3}{2}+1,\frac{\ell_1+\ell_3-\ell_2}{2}-1\},\quad Y_1^{2}\sim{\sf Unif}\{0,\ell_1\} \\ \text{ and } \ Y_1^{3}\sim{\sf Unif}\{0,\ell_1\}
\end{align*}
and similarly 
\begin{align*}
Y_2^{1}\sim{\sf Unif}\{0,\ell_2\},\quad Y_2^{2}\sim{\sf Unif}\{\frac{\ell_1+\ell_2-\ell_3}{2}+1,\frac{\ell_2+\ell_3-\ell_1}{2}-1\} \\ \text{ and } \ Y_2^{3}\sim{\sf Unif}\{0,\ell_2\} 
\end{align*}
and 
\begin{align*}
Y_3^{1}\sim{\sf Unif}\{1,\ell_3-1\},\quad Y_3^{2}\sim{\sf Unif}\{1,\ell_3-1\} \\ \text{ and } \ Y_3^{3}\sim{\sf Unif}\{1,\ell_3-1\}.
\end{align*}
So that we can decompose 
$$
X_i = \frac{1}{\ell_3+1}(X_i^-+X_i^+)+p_1Y_i^1+p_2Y_i^2+p_3Y_i^3
$$
where the probability are defined by
\begin{align*}
p_1=\frac{1}{\ell_3+1}\frac{\ell_3-\ell_2-1}{\frac{\ell_1+\ell_2+\ell_3}{2}+1}, \quad p_2=\frac{1}{\ell_3+1}\frac{\ell_3-\ell_1-1}{\frac{\ell_1+\ell_2+\ell_3}{2}+1} \\ \text{ and } \ p_3=1-\frac{2}{\ell_3+1}-p_1-p_2
\end{align*}
The proof relies on a simple induction by noticing that $\{Y_1^1,Y_2^1,Y_3^1\}$ satisfies the theorem condition (as soon as $p_1 >0$) since these three random variables are uniform over intervals of respective lengths $\ell_3-\ell_2-2$, $\ell_2$ and $\ell_3-2$ (in particular, $\ell_2$ is the maximum of these 3 quantities, necessarily $\ell_2=\ell_3-1$ and $p_1=0$). Similarly $\{Y_1^2,Y_2^2,Y_3^2\}$ and $\{Y_1^3,Y_2^3,Y_3^3\}$ satisfy the theorem condition as well.
 \end{proof}
 \section{Omitted Proofs and algorithms}

 \subsection{Algorithm \LOTTOTOBLOTTO}\label{SE:LOTTO2BLOTTO}
 We provide the simple pseudo-code of the main algorithm; it is decomposed into several procedures described in subsequent sections.
 \begin{center}
    \begin{algorithm}
     \caption{\LOTTOTOBLOTTO \label{Algo:Lotto2Blotto}}
\SetAlgoLined
\KwData{Length \& weight vectors $\mathbf{b} \in \R_+^n$ and  $\bp \in [0,1]^n$, budget $T$, approximation levels $\eta, h$\;}
\KwResult{Allocation vector $X \in \R_+^n$\;}

$\{(\bp^{(j)},q_j)\}_j \leftarrow \DECOMPOSE(\mathbf{b},T,\mathbf{p})$\Comment{Couplings with one strict mixture}

Sample $j^*=j$ with probability $q_j$\;
$(\cI_0,\cI_1,\cI_2,\cI_3,\cI_4)\leftarrow \REDUCTION(\mathbf{b},\bp^{(j^*)})$\Comment{Reduction to 4 random variables}

\For{$i \in [4]$}{$b^*_i\leftarrow\sum_{j\in\cI_j}b_j$; $p^*_j\leftarrow\sum_{j\in\cI_j}p_j$\;}
$(\mu_1,\mu_2,\mu_3,\mu_4) \leftarrow \DISCRETIZE((b^*_1,b^*_2,b^*_3,b^*_4),p^*_4,h)$\Comment{Appropriate Discretization}

$(\xi_1,\xi_2,\xi_3,\xi_4)\leftarrow \SINKHORN((\mu_1,\mu_2,\mu_3,\mu_4),\eta)$\Comment{Numerical computations}

$X \leftarrow \SAMPLE((\xi_1,\xi_2,\xi_3,\xi_4),T,\mathbf{b},(\cI_0,\cI_1,\cI_2,\cI_3,\cI_4),h)$\Comment{Reconstruction and sampling}

 \textbf{Output:} $X$ \Comment{A valid allocation}
\end{algorithm}
\end{center}
 \newpage
 \subsection{Proof of  Proposition \ref{PR:Gamma} and associated Algorithm~\LOTTO}\label{SE:ProofOfPR:Gamma}
  \begin{center}
    \begin{algorithm}
\SetAlgoLined
\KwData{Battlefield values $v_{A,i}$ and $v_{B,i}$, budgets $T_A$ and $T_B$\;}
\KwResult{Vector $\mathbf{b}=(b_i)_{i \in [n]}$ and probability vector $\bp \in [0,1]^n$\;}
$\gamma\leftarrow 0$, $\lambda \leftarrow 0$, $k\leftarrow 0$\Comment{Initialization}

Sort $\big\{\frac{v_{A,i}}{v_{B,i}}; i \in [n]\big\}$\Comment{Sort to get degree 3 polynomial equation to solve}

\While{$\gamma=0$}{
$\mathcal{S}_k \leftarrow \text{roots}\Big( \gamma T_A\sum_{i=1}^kv_{a,i} + \gamma^3 T_A\sum_{i=k+1}^n \frac{v_{B,i}^2}{v_{A,i}}  -T_B\sum_{i=1}^k \frac{v^2_{A,i}}{v_{B,i}}-T_B\gamma^2\sum_{i=k+1}^n v_{B,i}\Big)$\;
\For{$g \in \mathcal{S}_k$}{$\gamma\leftarrow g\mathds{1}\big\{\gamma\in [\frac{v_{A,k}}{v_{B,k}},\frac{v_{A,k+1}}{v_{B,k+1}}]\big\}$\Comment{Keep valid root of degree 3 polynomial}}
$k\leftarrow k+1$\Comment{Move to next interval between two ratios $\frac{v{A,k}}{v_{B,k}}$}
}
$\lambda \leftarrow \frac{1}{T_A}\frac12\sum_{i=1}^n(\gamma v_{B,i})\wedge \frac{(v_{A,i})^2}{\gamma v_{B,i}}$\;
\For{$i\in[n]$}{$b_i \leftarrow \frac{\gamma v_{B,i}}{\lambda}\wedge\frac{v_{A,i}}{\lambda}$,\ 
$p_i \leftarrow \frac{v_{A,i}}{\gamma v_{B,i}}\wedge 1$\Comment{Description of Lotto strategies}}
 \textbf{Output:} $(\mathbf{b},\bp)$\Comment{lenghts of uniforms and associated weights}
 \caption{\LOTTO \label{Alg:Lotto}}
\end{algorithm}
\end{center}

 \begin{proof} 
 First, reorder the battlefields by increasing reward ratios $v_{A,i}/v_{B,i}$ so  we can assume that 
    $$\frac{v_{A,1}}{v_{B,1}} \leq \frac{v_{A,2}}{v_{B,2}}\leq \cdots \leq \frac{v_{A,n}}{v_{B,n}}
    $$

Recall that $f(\gamma)$ is the left-hand side of~\eqref{EQ:defGamma}. We first observe that $f$ is continuous on $\R$. Indeed, $f$ is obviously continuous on each open interval 
$$
\left(\frac{v_{A,i}}{v_{B,i}}, \frac{v_{A,i+1}}{v_{B,i+1}}\right)
$$
To check that it is continuous at $\gamma_{i_0}:=v_{A,i_0}/v_{B,i_0}$, note that for $\gamma$ in a small enough neighborhood of $\gamma_{i_0}$, we have
$$
\cN(\gamma)=\left\{
\begin{array}{ll}
 \{i_0 + 1,\ldots,  n\}    & \text{ if } \gamma>\gamma_{i_0} \\
 \{i_0, \ldots, n\}    & \text{ if } \gamma\le \gamma_{i_0} \,.
\end{array}
\right.
$$
Hence $f$ is left-continuous at $\gamma_{i_0}$. Moreover, we can check right-continuity by observing that
\begin{align*}
     \lim_{\substack{\gamma  \to \gamma_{i_0}\\ \gamma > \gamma_{i_0} }} f(\gamma)=
\gamma_{i_0}^3\left(T_A \sum_{i=i_0+1}^n\frac{v_{B,i}^2}{v_{A,i}}\right)-\gamma_{i_0}^2T_B\sum_{i=i_0+1}^n v_{B,i}+ \gamma_{i_0} T_A\sum_{i=1}^{i_0}v_{A,i}-T_B\sum_{i=1}^{i_0} \frac{v_{A,i}^2}{v_{B,i}}=f(\gamma_{i_0})
\end{align*}
where the last identity can be readily checked by substitution.

Having proved that $f$ is continuous, note that for $\gamma$ large enough we have that $\cN(\gamma)=\emptyset$ so that 
\begin{equation*}
    \label{EQ:defg}
    f(\gamma)=\gamma T_A\sum_{i=1}^nv_{A,i}-T_B\sum_{i=1}^n \frac{v_{A,i}^2}{v_{B,i}} \xrightarrow[\gamma\to \infty]{}+\infty\,.
\end{equation*}
Moreover, for $\gamma>0$ small enough, we have $\cN(\gamma)=[n]$ so that 
\begin{equation*}
    \label{EQ:defh}
    \frac{f(\gamma)}{\gamma^2}=\gamma\left(T_A \sum_{i=1}^n\frac{v_{B,i}^2}{v_{A,i}}\right)-T_B\sum_{i =1}^n v_{B,i}\xrightarrow[\gamma\to 0]{}-T_B\sum_{i =1}^n v_{B,i}<0\,.
\end{equation*}
Hence, by the intermediate value theorem, there exists $\gamma^*>0$ such that $f(\gamma^*)=0$. Moreover, observe that if one sets $\cN(\gamma)=[N]$ for $N\in\{0,\dots,n\}$, with the convention that $[0]=\emptyset$, Equation~\eqref{EQ:defGamma} becomes a polynomial equation of degree three with at most three solutions denoted $\gamma_{N,i}, i=1,2,3$. Hence, $\gamma^* \in \bigcup_{N=0}^n \{\gamma_{N,1}, \gamma_{N,2}, \gamma_{N,3}\}$ can take at most $3n+3$ values.

\medskip

We now check the bounds on the possible values of $\gamma^*$. To that end, recall from~\eqref{EQ:defGamma} that
$$
f(\gamma)=\sum_{i=1}^n\left(\gamma T_A-T_B\frac{v_{A,i}}{v_{B,i}}\right)\cdot \left(\gamma^2\frac{v_{B,i}^2}{v_{A,i}}\wedge v_{A,i}\right) \,. 
$$

To prove the an upper bound on $\gamma^*$, observe that
$$
\gamma T_A-T_B\frac{v_{A,i}}{v_{B,i}} \ge 0
 \iff \gamma \ge \frac{T_B}{T_A} \frac{v_{A,i}}{v_{B,i}}
 \iff 
 \gamma^2\frac{v_{B,i}^2}{v_{A,i}}\wedge v_{A,i} \ge \left(\frac{T_B}{T_A}\right)^2v_{A,i} \wedge v_{A,i}= \left(\frac{T_B}{T_A}\right)^2v_{A,i}\,,
$$
where the last identity follows from the assumption that $T_B \le T_A$.

It yields
$$
f(\gamma)\ge \left(\frac{T_B}{T_A}\right)^2\sum_{i=1}^n\left(\gamma T_A-T_B\frac{v_{A,i}}{v_{B,i}}\right)v_{A,i}=\left(\frac{T_B}{T_A}\right)^2\cdot \left(\gamma T_A  -T_B(1+ \chi^2(v_A \|v_B)\right)
$$
Therefore, if 
$$
\gamma>\frac{T_B}{T_A}(1+\chi^2(v_A \|v_B))\,,
$$
then $f(\gamma)>0$, which yields the desired upper bound on $\gamma^*$.

To prove the lower bound on $\gamma^*$, we proceed essentially in the same fashion:
$$
\gamma T_A-T_B\frac{v_{A,i}}{v_{B,i}} \le 0
\implies \gamma \le \frac{T_B}{T_A} \frac{v_{A,i}}{v_{B,i}}
\implies \gamma^2\frac{v_{B,i}^2}{v_{A,i}}\wedge v_{A,i} =\gamma^2\frac{v_{B,i}^2}{v_{A,i}}\,.
$$

It yields
$$
f(\gamma)\le \gamma^2\sum_{i=1}^n\left(\gamma T_A-T_B\frac{v_{A,i}}{v_{B,i}}\right)\frac{v_{B,i}^2}{v_{A,i}}=\gamma^2 \left((1+\chi^2(v_B\|v_A))\cdot \gamma T_A -T_B\right)
$$
Therefore, if 
$$
\gamma<\frac{T_B}{T_A}\frac{1}{1+\chi^2(v_B \|v_A)}\,,
$$
then $f(\gamma)<0$, which yields the desired lower bound on $\gamma^*$.

To complete the proof of the proposition, it remains to observe that the computational complexity is dominated by sorting reward ratios which costs $\cO(n \log n)$ operations since finding roots of degree three polynomials for each of $\cO(n)$ polynomials costs $\cO(1)$ time.

e if $X_B = X_A$ and $\frac{v_{A,1}}{v_{B,1}}=\frac{v_{A,n}}{v_{B,n}}$, which implies that $v_{A,i}=v_{B,i}$ (as they both sum to 1).

 \end{proof}
 
 \subsection{Proofs of Proposition \ref{PR:JointMixability} and Theorem \ref{TH:LottotoBlotto}}\label{SE:ProofJointMixability}
 
 \subsubsection{Proof of Proposition \ref{PR:JointMixability}}
\begin{proof} 
Note first that~\eqref{EQ:JMcond1} is necessary. Indeed,  denote by $\max \in [k]$ any index such that $b_{\max}=\max_ib_i$ and assume $Z_1, \ldots, Z_k$ are coupled so that
$$
\sum_{i\in [k]}Z_i=\sum_{i\in [k]}\E[Z_i]=\frac12\sum_{i\in [k]}p_ib_i \quad \text{a.s.}
$$
We have for $\delta>0$ small enough, 
$$
p_{\max}\delta=\p(Z_{\max}>b_{\max}(1-\delta))\le \p\big(\sum_{i\in [k]}Z_i >b_{\max}(1-\delta)\big)=\p\big(\frac{1}{2}\sum_{i\in [k]}p_ib_i >b_{\max} (1-\delta)\big)\,.
$$
Hence
$$
\frac{1}{2}\sum_{i\in [k]}p_ib_i >b_{\max} (1-\delta)
$$
since the above deterministic inequality holds with positive probability, it holds with probability one for all $\delta>0$. Letting $\delta \to 0$ yields~\eqref{EQ:JMcond1}.

To show that~\eqref{EQ:JMcond1} is sufficient, recall from~\cite[Theorem~3.2]{wangJointMixability2016} that a collection of random variables $Z_1,\ldots,Z_k$ where  $Z_i$ is a continuous random variable, with  non-increasing density function and supported on the interval $[0,b_i]$ are jointly mixable  if and only if
\begin{equation*}
\max_{i \in [k]} b_i \leq  \sum_{i \in [k]} \mathds{E}[Z_i] \leq \sum_{i \in [k]} b_i - \max_{i \in [k]} b_i\,.
\end{equation*}
In particular, if $\E[Z_i]=p_ib_i/2$, these two inequalities reduce to~\eqref{EQ:JMcond1}.

Unfortunately the $Z_i$s do not have a  density so we use the following approximation. For $i \in [k]$, let $\eps<b_i$ and 
$$
Z_i^\eps \sim (1-q_i^\eps){\sf Unif}[0,\eps] + q^\eps_i {\sf Unif}[0,b_i]\,,
$$
where $q_i^\eps \in (0,1)$ is chosen precisely so that $Z_i^\eps$ has the same expectation and the same support as $Z_i$. Moreover, $Z_i^\eps$ has a monotone decreasing density and hence the $Z_i^\eps$ are jointly mixable under condition~\eqref{EQ:JMcond1} resulting in a coupling $\pi^\eps$ over the product space $\prod_{i \in [k]} [0,b_i]$. By Prokhorov's theorem, letting $\eps\to 0$ implies that
 $Z_1, \ldots, Z_k$ are jointly mixable.

\end{proof}
 
 \subsubsection{Proof of Theorem \ref{TH:LottotoBlotto}}
 \begin{proof}
 Recall that the marginal strategy  $Z_i$  of player $A$ on battlefield $i$ has a distribution of the form~\eqref{EQ:Zimixt} with
\begin{align*}
    p_i&=1, & b_i&=\frac{\gamma^* v_{B,i}}{\lambda^*}, & \text{if } i &\in \cN(\gamma^*)\\
    p_i&=\frac{v_{A,i}}{\gamma^*v_{B,i}}, & b_i&=\frac{ v_{A,i}}{\lambda^*}, & \text{if } i &\notin \cN(\gamma^*)
\end{align*}
Plugging this values into~\eqref{EQ:JMcond1} yields
$$
\max_{i \in \cN(\gamma^*)} \frac{\gamma^* v_{B,i}}{\lambda^*}\vee \max_{i \notin \cN(\gamma^*)} \frac{ v_{A,i}}{\lambda^*} \le \frac12\left(\sum_{i \in \cN(\gamma^*)}\frac{\gamma^* v_{B,i}}{\lambda^*} + \sum_{i \notin \cN(\gamma^*)} \frac{v_{A,i}^2}{\lambda^*\gamma^*v_{B,i}}\right)=T_A\,,
$$
where the last equality follows from the saturation of the budget constraint in~\eqref{EQ:Lambda1}. It is easy to check that
$$
\max_{i \in \cN(\gamma^*)} \gamma^* v_{B,i}\vee \max_{i \notin \cN(\gamma^*)}  v_{A,i}=\max_{i \in [n]} (\gamma^* v_{B,i} \wedge v_{A,i})
$$

Similarly, we get using~\eqref{EQ:Lambda2} that
$$
\max_{i \in [n]} (\gamma^* v_{B,i} \wedge v_{A,i}) \le \lambda^*T_B\,,
$$
is a necessary and sufficient condition for mixability the Lotto strategy of player $B$ into a Blotto strategy.
The proof can then be concluded by recalling that $T_A \ge T_B$.

\end{proof}

 \subsection{Proof of Corollary \ref{cor:stability}}
 \label{SE:Proofcor:stability}
\begin{proof}
Using~\eqref{EQ:Lambda2}, we get
$$
\frac{\lambda^*T_B}{\gamma^*}=\frac1{2\gamma^*}\sum_{i=1}^n\frac{(\gamma^* v_{B,i})^2}{v_{A,i}}\wedge v_{A,i} \ge  \frac{1}{2}\left(\gamma^*\wedge \frac{1}{\gamma^*}\right)\sum_{i=1}^n\frac{ v_{B,i}^2}{v_{A,i}}\wedge v_{A,i}\,.
$$
Next, observe that
\begin{align*}
   \sum_{i=1}^n\frac{ v_{B,i}^2}{v_{A,i}}\wedge v_{A,i}&=\sum_{i=1}^n\frac{ v_{B,i}^2}{v_{A,i}}\1(v_{A,i}>v_{B,i})+\sum_{i=1}^nv_{A,i}\1(v_{A,i}\le v_{B,i})\\
   &=\sum_{i=1}^n\frac{ v_{B,i}^2}{v_{A,i}}\1(v_{A,i}>v_{B,i})+1-\sum_{i=1}^nv_{A,i}\1(v_{A,i}> v_{B,i}) \\
   &=1-\sum_{i=1}^n v_{A,i}\left(1-\left(\frac{v_{B,i}}{v_{A,i}}\right)^2\right)\1(v_{A,i}> v_{B,i})\,.
\end{align*}
Noting now that $1-x^2\le 2-2x$ for $x\in [0,1]$, we get for $x=v_{B,i}/v_{A,i}$ that
\begin{align*}
    \sum_{i=1}^n\frac{ v_{B,i}^2}{v_{A,i}}\wedge v_{A,i}&\ge 1-\sum_{i=1}^n v_{A,i}\left(1-\frac{v_{B,i}}{v_{A,i}}\right)\1(v_{A,i}> v_{B,i})=1-{\sf TV}(v_{A}, v_{B})\,,
\end{align*}
where ${\sf TV}$ denotes the total variation distance and is defined as
$$
{\sf TV}(v_A, v_B)=\frac12\sum_{i=1}^n |v_{A,i}-v_{B,i}|\,.
$$
Using Pinsker's and Jensen's inequalities, see, e.g.,\cite[Chapter~2]{Tsy09}, we get the classical result
$$
{\sf TV}(v_{A}, v_{B}) \le \frac{1}{2}\sqrt{\chi^2(v_B\|v_A)}\le \frac{r}2
$$

Next, note that Proposition~\ref{PR:Gamma}  yields
$$
\gamma^* \wedge \frac1{\gamma^*} \ge  \frac{T_B}{T_A}\frac{1}{1+\chi^2(v_B\|v_A) }\wedge \frac{T_A}{T_B}\frac{1}{1+\chi^2(v_A\|v_B) }\ge \frac{T_B}{T_A}\frac{1}{1+r^2}\,.
$$
Hence we have established that
\begin{align*}
    \frac{\lambda^*T_B}{\gamma^*} &\ge \frac{T_B}{2T_A}\frac{1-r/2}{1+r^2}\ge\frac{T_B}{2T_A}(1-r)\,.
   \end{align*}
Together with~\eqref{EQ:blotto_cond_mix}, this completes our proof.

\end{proof}

\subsection{Proof of Lemma \ref{LEM:reduc_single_mixture} and Algorithm \DECOMPOSE}
\label{SE:ProofLEM:reduc_single_mixture}
\begin{center}
    \begin{algorithm}
\SetAlgoLined
\KwData{length vector $\mathbf{b} \in \R_+^n$, budget $T$, weights vector $\bp=(p_i)_{i \in [n]} \in [0,1]^n$\;}
\KwResult{(at most) $n$ vectors $\bp^{(k)} \in [0,1]^n$ and weights $q_k \in [0,1]$\;}
$k\leftarrow 0$\Comment{Initialization}

\While{$\supp(\bp)>2$}{
$k\leftarrow k+1$\;
$(\bp^{(k)},\bar{\bp}^{(k)},\theta_k) \leftarrow \Cube(\mathbf{b},T,\bp)$\Comment{Find an extreme point $\bp^{(k)}$}

$\bp\leftarrow \bar{\bp}^{(k)}$\;
$q_k\leftarrow q_{k-1}\frac{\theta_k}{\theta_{k-1}}(1-\theta_{k-1})$\Comment{Compute convex weight of $\bp^{(k)}$}
}
 \textbf{Output:} $\{(\bp^{(j)},q_j)\,; j \in [k]\}$\Comment{(a coupling with  its convex weight)}
 \caption{\DECOMPOSE \label{Algo:Decomp}}
\end{algorithm}
\end{center}
\begin{proof}
Throughout this proof we write 
$$
p_i=\left\{
\begin{array}{cl}
1, & \text{if } i \in \cN(\gamma^*)\\
\frac{v_{A,i}}{\gamma^* v_{B,i}},& \text{if } i \notin \cN(\gamma^*)
\end{array}\right.
$$
so that the $i$th marginal of $\pi$ is given by 
$$
(1-p_i)\delta_0+p_i {\sf Unif}\left[0, \frac{\gamma^*v_{B,i}\wedge v_{A,i}}{\lambda^*}\right]\,.
$$

Note that~\eqref{EQ:mixcouplings} consists in representing $\pi$ as a convex combination of couplings. To obtain this representation we are going to appeal to Carath\'eodory's theorem. However the latter requires finite dimension so we first make the following observation: the map $\bp^{(k)}=(p_1^{(k)}, \ldots, p_n^{(k)}) \mapsto \pi_k$, (resp. $\bp=(p_1, \ldots, p_n) \mapsto \pi$) is linear and injective. Therefore,~\eqref{EQ:mixcouplings} is sufficient to produce the decomposition
\begin{equation}
    \label{EQ:carat}
    \bp = \sum_k q_k \bp^{(k)}\,.
\end{equation}
Furthermore, we need $\pi_k$ to be satisfy the budget constraint~\eqref{EQ:const_lotto} of the Lotto game. The saturation of this constraint translates into the constraint $\bp^{k} \in \cH, k=1, \ldots, n$, where $\cH$ is the affine hyperplane in $\R^n$ defined by
\begin{equation}
    \label{EQ:hyperplane}
    \cH=\left\{x \in \R^n\,:\,\sum_{i=1}^n x_i \frac{\gamma^*v_{B,i}\wedge v_{A,i}}{2\lambda^*}=T_A\right\}
\end{equation}

As a result, we must ensure that $\bp^{(k)} \in \cC_n=\cH \cap [0,1]^n$; see Figure~\ref{FIG:carat} for a representation of this constraint set. Since this set has dimension $n-1$, Carath\'edory's theorem ensures the existence of a decomposition~\eqref{EQ:carat} where $\bp^{(k)}$ are extreme points of $\cC_n$. In particular, each such extreme point has at most one coordinate in $(0,1)$; this completes the proof of point {\it 1}.

Note that~\eqref{EQ:carat} readily ensures that $\pi$ is a solution for the Lotto game. Moreover, since $\bp^{(k)} \in \cH$ by construction, we have that marginals of $\pi_k$ automatically satisfy the mixability condition. Hence, we can choose $\pi_k$ to be a joint mix; this completes the proof of point {\it 2.} As stated above, this readily implies that $\pi$ defined in~\eqref{EQ:mixcouplings} is a joint mix and hence a solution for the Blotto game; this completes the proof of point {\it 3.}

It remains to find an efficient algorithm 
that outputs decomposition~\eqref{EQ:mixcouplings}. The algorithm is initialized at $\bp\in \cC_n=\cH \cap [0,1]^n$. From there, we use Lemma~\ref{Lemma:hypercube} to construct two points $\bp^{(1)}, \bar \bp^{(1)} \in \cH \cap \partial [0,1]^n$ such that $\bp=\theta_1 \bp^{(1)}+ (1-\theta_1) \bar \bp^{(1)}$, for some $\theta_1 \in [0,1]$, where $\bp^{(1)}$ is an extreme point of $\cC_n$ and $\bar \bp^{(1)}$ belongs to a face $\cF_{n_1}$ of dimension $n_1<n$. We repeat this procedure as follows.
Define  by $\cC_{n_1}=\cH\cap \cF_{n_1}$ so that  $\bar \bp^{(1)} \in \cC_{n_1}$. Hence, using again Lemma~\ref{Lemma:hypercube}, it may be decomposed as $\bar \bp^{(1)} =\theta_2 \bp^{(2)} + (1-\theta_2) \bar \bp^{(2)}$ for some $\theta_2 \in [0,1]$ and  $\bp^{(2)}$ is an extreme point of $\cC_n$ while $\bar \bp^{(2)} \in \cC_{n_2}$, where $\cC_{n_2}$ is the intersection of $\cH$ with a face $\cF_{n_2}$ of $[0,1]^n$ of dimension $n_2 < n_1$. Moreover, 
$$
\bp=\theta_1 \bp^{(1)} + (1-\theta_1)\theta_2 \bp^{(2)} + (1-\theta_1)(1-\theta_2)\bar \bp^{(2)}
$$
Iterating this procedure yields the decomposition
$$
\bp= \sum_{i=1}^{N+1} \theta_i \prod_{j \le i-1}(1-\theta_j) \bp^{(i)}\,,
$$
with the convention that $\theta_{N+1}=1$ and 
where $N\le n$ and the $\bp^{(j)}$s are extreme points of $\cC_n$. Moreover, one can readily check that for any sequence $\theta_1, \ldots \theta_N \in [0,1]$, $\theta_{N+1}=1$, it holds 
$$
\sum_{i=1}^{N+1}  \theta_i \prod_{j \le i-1}(1-\theta_j)=1\, ,
$$
which gives the result by defining $q_k = \theta_k \prod_{j \le k-1}(1-\theta_j)$.

As we appealed at most $n$ times to Lemma~\ref{Lemma:hypercube}, the overall complexity of this algorithm is of order $\mathcal{O}(n^2\log n)$.
\end{proof}

\begin{figure}
\begin{center}
\begin{tikzpicture}[scale=3]
\draw[black, thick] (0,0,1) -- (1,0,1) -- (1,1,1) -- (0,1,1) -- node[anchor=east]{$\cC_n$}(0,0,1); 
\draw[black, thick] (0,1,1) -- (0,1,0) -- (1,1,0) -- (1,1,1); 
\draw[black, thick] (1,0,1) -- (1,0,0) -- (1,1,0) -- (1,1,1); 
\draw[gray, dashed] (0,0,0) -- (0,0,1);
\draw[gray, dashed] (0,0,0) -- (0,1,0);
\draw[gray, dashed] (0,0,0) -- (1,0,0);

\draw[blue, thick] (0.1,0,0) -- (0.6,0,1) -- (0.9,1,1) -- (0.4,1,0) -- node[anchor=east]{$\mathcal{H}$} (0.1,0,0); 

\node (p*1) at (0.6,0,1) {} ;
\node (p*2) at (0.4,1,0) {} ;
\node (p*3) at (0.9,1,1) {} ;
\filldraw[red] (p*1) circle (.3pt) node[anchor=north]{$\bp^{(1)}$};
\filldraw[red] (p*2) circle (.3pt) node[anchor=south]{$\bp^{(2)}$};
\filldraw[red] (p*3) circle (.3pt) node[anchor=south west]{$\bp^{(3)}\textcolor{blue}{=\,\bar \bp^{(2)}}$};

\node (p) at (0.625,0.5,0.75) {} ;
\filldraw[blue] (p) circle (.3pt) node[anchor=east]{$\bp$};

\node (hatp1) at (0.65,1,0.5) {}; 
\filldraw[blue] (hatp1) circle (.3pt) node[anchor=west]{$\bar \bp^{(1)}$};

\draw[red, dashed] (p*1) -- (p);
\draw[red, dashed] (p) -- (hatp1);

\end{tikzpicture}
\end{center}
\caption{Starting from $\bp$, the first step computes a first extreme point $\bp^{(1)}$. Then $\bp$ is written as a convex combination of $\bp^{(1)}$ and $\bar{\bp}^{(1)}$. The latter is decomposed (iteratively) into $\bp^{(2)}$ and then $\bp^{(3)}$. }
\label{FIG:carat}
\end{figure}
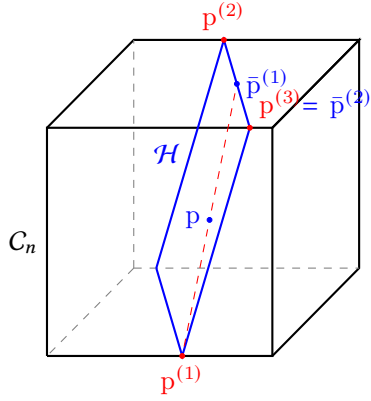

\subsection{Proof of Lemma \ref{Lemma:hypercube} and Algorithm \DECOMPOSE}
\label{SE:ProofLemma:hypercube}

\begin{center}
    \begin{algorithm}
\SetAlgoLined
\KwData{vector $\mathbf{b}=(b_i)_{i \in [n]} \in  \R^n$, budget $T \in \R_+$, vector $x \in \R^n$\;}
\KwResult{2 vectors $y,\bar{y}$, convex weight $\theta$\;}
 $t \leftarrow0$; $k \leftarrow0$, $z \leftarrow x\odot\mathds{1}_{\overline{\supp(x)}}$\Comment{Initialization}
 
\While{$t<T$}{
$k \leftarrow k+1$; $ z \leftarrow z+\mathds{1}_{\{k\}}\odot\mathds{1}_{\supp(x)}$; $t \leftarrow z^\top \mathbf{b}$\Comment{Find the index with non $\{0,1\}$ component}}
$\varepsilon' \leftarrow \frac{t-T}{b_k}$; $y\leftarrow z-\varepsilon'\mathds{1}_{\{k\}}$\Comment{Compute explicitly the extreme point $y$}

$\delta\leftarrow \min_{i \in \mathcal{S}}\frac{\mathds{1}\{x_i \geq y_i\}-x_i}{x_i-y_i}$; 
$\theta \leftarrow \frac{\delta}{1+\delta}$\;
$\bar{y}\leftarrow x+\eta(x-y)$\;
 \textbf{Output:} $(y,\bar{y},\theta)$ \Comment{(two vectors, one convex weight)}
 \caption{\Cube
 \label{Algo:ProjCube}}
\end{algorithm}
\end{center}

\begin{proof}
Assume that the coordinates of $\ell$ are sorted as $\ell_1 \leq \ell_2 \leq \ldots \leq \ell_n$.
For any subset $\mathcal{S} \subset [n]$, we denote its associated indicator vector by  $\mathds{1}_{\mathcal{S}}=(\mathds{1}\{i \in \mathcal{S}\})_i \in\{0,1\}^n$.

First, notice that there exists some $\eps_k \in [0,1]$ such that the following vector $y$ belongs to $\cC_n\cap \cH$:
$$y:=x\odot\mathds{1}_{\overline{\supp}(x)}+(\underbrace{1,1,\ldots,1}_{k-1 \text{ terms}},\eps_k,0,\ldots,0)\odot\mathds{1}_{\supp(x)},$$
 where $\overline{\supp}(x)=[n]\backslash \supp(x)$ is the complement of the support of $x$. The reason is simply that $$x\odot\mathds{1}_{\overline{\supp}(x)} \leq x \leq x\odot\mathds{1}_{\overline{\supp}(x)} + (1,\ldots,1)\odot\mathds{1}_{\supp(x)}\ ,$$
where the inequalities are component wise.

We now introduce 
$\bar{y}= x+\delta(x-y),$
where $\delta \in \R_+$ is defined by
$$
\delta= \delta_{i^*}:=\min_{i \in \supp(x)} \delta_i, \quad \text{with}\ \delta_i:=\frac{\mathds{1}\{x_{i}\geq y_{i}\}-x_{i}}{x_{i}-y_{i}}\ . 
$$
Then $\bar{y}$ belongs to $\cH$ as  $x$ and $y$ are two vectors of this affine hyperplane. Moreover, as $\eta_{i^*}=\min \eta_i$, it also holds that  $\bar{y} \in [0,1]^n$ and  
 $\bar{y}_{i^*}=\mathds{1}\{x_{i^*}\geq y_{i*}\}$, and therefore $\supp(\bar{y}) \subset \supp(x) \backslash\{i^*\}$. Finally, one just needs to define $\theta = \delta/(1+\delta)$. 

The construction of those quantities requires sorting the coordinates of $\ell$, finding the $k$-th coordinate of $y$, say, by binary search (computing the value of $\eps_k$ is immediate as the value of $y^\top \ell$ is linear in $p_k$) and finding $i^*$.
\end{proof}

\subsection{Proof of Lemma \ref{LM:reduc_to_four} and Algorithm \REDUCTION}\label{SE:ProofLM:reduc_to_four}

\begin{center}
    \begin{algorithm}
\SetAlgoLined
\KwData{length vector $\mathbf{b} \in  \R_+^n$, weights $\bp \in [0,1]^n$\;}
\KwResult{5 sets of indices\;}
$\mathcal{I}_{0}\leftarrow\emptyset$,  $\mathcal{I}_1 \leftarrow [n]$, $\mathcal{I}_2 \leftarrow \emptyset$, $\mathcal{I}_3 \leftarrow \emptyset$, $\mathcal{I}_{4}\leftarrow\emptyset$\Comment{Initialization of sets}

 \For{$i \in [n]$
 } 
 {
 \eIf{$p_i=1$}
 {$\mathcal{I}_{0}\leftarrow\mathcal{I}_{0}\cup\{i\}$, 
 $\mathcal{I}_{1}\leftarrow\mathcal{I}_{1}\backslash\{i\}$ \Comment{Dirac masses not treated, index removed}}
 {\eIf{$p_i\in(0,1)$}{$\mathcal{I}_{n}\leftarrow \{i\}$,  $\mathcal{I}_{1}\leftarrow\mathcal{I}_{1}\backslash\{i\}$\Comment{At most one strict mixture}}
 {$\mathcal{I}(i)\leftarrow\{i\}$ \Comment{End of initialization}}
 }}
 \While{$|\mathcal{I}_1|>3$}{
 $i_1\leftarrow \arg\min_{i \in \mathcal{I}_1} b_i$, $\mathcal{I}_{1}\leftarrow\mathcal{I}_{1}\backslash\{i_1\}$, 
 $i_2\leftarrow \arg\min_{i \in \mathcal{I}_1} b_i$ \Comment{$i_1,i_2$: 2 smallest uniform lengths}
 $b_{i_2} \leftarrow b_{i_2}+b_{i_1}$, 
 $\mathcal{I}(i_2)\leftarrow\mathcal{I}(i_2)\cup\mathcal{I}(i_1)$\Comment{2 uniforms combined in 1}
    }
$m=|\mathcal{I}_1|$, denote $\mathcal{I}_1=\{i^*_1,\ldots,i^*_m\}$\;
\For{$j \in [m]$}
{$\mathcal{I}_j \leftarrow \mathcal{I}(i^*_j)$\Comment{(At most) 3 sets of combined indices}}
 \textbf{Output:} $(\mathcal{I}_{0},\mathcal{I}_1,\mathcal{I}_2,\mathcal{I}_3,\mathcal{I}_4)$ \Comment{(Dirac, 3 sets of uniforms - possibly empty-, a strict mixture)}
 
 \caption{\REDUCTION\label{Algo:Reduc}}
\end{algorithm}
\end{center}

\begin{proof}
Write for simplicity $X_i=X_{A,i}$ and 
$$
X_i \sim (1-p_i)\delta_0+ p_i {\sf Unif}[0, b_i]\,,
$$
where $p_i \in \{0,1\}$ for $i \in [n-1]$, $p_n \in [0,1]$, and 
$b_i= ({\gamma^*v_{B,i}\wedge v_{A,i}})/{\lambda^*}$. If $p_i=0$, then $X_i$ is almost surely equal to 0 and this random variable can be removed from the analysis; we might therefore assume that $p_i >0$ for all $i$.

The joint mixability condition~\eqref{EQ:JMcond1} rewrites as 
\begin{equation}
    \label{EQ:JMcond1rep}
    2\max_k b_k \leq \sum_{i=1}^{n-1}b_i + p_nb_n  
\end{equation}
Without loss of generality, assume that  $b_1 \leq b_2 \leq \ldots\leq b_{n-1}$ and define $X_{12} \sim {\sf Unif}[0, b_1+b_2]$. Note that  $\{X_1,X_2,-X_{12}\}$ are jointly mixable (for instance, consider $X_1=\frac{b_1}{b_1+b_2}X_{12}$ and $X_2=\frac{b_2}{b_1+b_2}X_{12}$), and $\{X_{12}, X_3,\ldots,X_n\}$ are also jointly mixable if
\begin{equation}
    \label{EQ:JMcond1iter}
    2\max\{ \max_{k \geq 3} b_k, b_1+b_2\} \leq \sum_{i=1}^{n-1} b_i + p_nb_n  
\end{equation}
On the one hand, if $b_1+b_2 \leq b_{n-1}$ then~\eqref{EQ:JMcond1iter} follows readily from~\eqref{EQ:JMcond1rep}. On the other hand, if $b_1+b_2 >  b_{n-1}$ and $n-1 \geq 4$ then 
$$
2(b_1+b_2) \leq   b_1+b_2 + b_3+b_4  \leq b_1+b_2 + \sum_{k= 3}^{n-1} b_k +t^{(n)}b_n 
$$
so that~\eqref{EQ:JMcond1rep} is satisfied. The result follows from sequentially iterating this construction. Indeed, recall that $X_{12}$ is a variable created by mixing together $X_1$ and $X_2$. At a further step, this variable might be mixed again with another one, ending up with a set of, at least, 3 variables mixed together. At the end of the procedure, there are at most 3 remaining random variables $Y_j$, that are joint mix of (disjoints) subsets of $X_i$.  

We denote by $\mathcal{I}_j$ the set of indices of variables that are mixed to form $Y_j$ and, for any index $i \in \cI_j$, we just need to define $\theta_i=\frac{b_i}{\sum_{k \in \cI_j}b_k}$.
\end{proof}

\subsection{Algorithm \DISCRETIZE}\label{SE:DISCRETIZE}

\begin{center}
    \begin{algorithm}
\SetAlgoLined
\KwData{4 lengths $b^*_i \in \R_+$, positive weight $p_4 \in (0,1)$, discretization parameter $h$\;}
\KwResult{4 vectors $\mu_i \in \R_+^{d_i}$ with $d_i=\lfloor\frac{b^*_i}{h}\rfloor+1$\;}
\For{$i\in [3]$}{$\mu_i\leftarrow\frac{h}{b^*_i}\big(\underbrace{1,\ldots,1}_{\lfloor\frac{b^*_i}{h}\rfloor \text{times}},\frac{b^*_1}{h}-\lfloor\frac{h}{b^*_1}\rfloor\big)$}
$\mu_4\leftarrow p_4\frac{h}{b^*_i}\big(\frac{b^*_4}{h}-\lfloor\frac{h}{b^*_4}\rfloor,\underbrace{1,\ldots,1}_{\lfloor\frac{b^*_4}{h}\rfloor \text{times}}\big)+(1-p_4)(0,\ldots,0,1)$\;
\textbf{Output:} $(\mu_1,\mu_2,\mu_3,\mu_4)$ \Comment{(4 discrete ditributions)}
 
 \caption{\DISCRETIZE \label{Algo:Discretize}}
\end{algorithm}
\end{center}

\subsection{Algorithm \SINKHORN}\label{SE:SINKHORN}

\begin{center}
    \begin{algorithm}
\SetAlgoLined
\KwData{Marginals $\mu_j \in  \R^{d_j}, j=1, \ldots 4$\;
Precision level $\eta >0$\;}
\KwResult{A Coupling $\Gamma$, represented by $(\xi_1,\xi_2,\xi_3,\xi_4)\in \R^{d_1}\times\R^{d_2}\times\R^{d_3}\times\R^{d_4}$, with marginals such that $\sum_{i=1}^4 \|\bar \Gamma^{(i)} - \mu_i\|_1 \leq \eta$\;}
 $\Gamma\leftarrow {\bf 1} \in \R^{d_1 \times d_2\times d_3 \times 3}$\Comment{Initialization}
 
 $\xi_1\leftarrow {\bf 1} \in \R^{d_1}$; $\xi_2\leftarrow {\bf 1} \in \R^{d_2}$;
 $\xi_3\leftarrow {\bf 1} \in \R^{d_3}$;
 $\xi_4\leftarrow {\bf 1} \in \R^{d_4}$\;
 \While{$\sum_{i \in [4]} \|\mu_i-\Gamma_i\|_1> \eta$}{
  $\displaystyle \tau \leftarrow \argmax_{i\in [4]} \KL(\mu_i\|\Gamma_i)$\;
  $\displaystyle \xi_{\tau} \leftarrow \xi_{\tau}\odot \mu_{\tau}\oslash \bar \Gamma_{\tau}$\Comment{Matrix scaling}
  
$\Gamma_{ijke}=\xi_{1,i}\cdot \xi_{2,j}\cdot\xi_{3,k}\cdot\xi_{4,i+j+k+e}\,, \ \forall  i,j,k,e$\Comment{Computation of coupling}
      }
 \textbf{Output:} $(\xi_1,\xi_2,\xi_3,\xi_4)$
 \caption{\SINKHORN\label{Algo:Sinkhorn}}
 
\end{algorithm}
\end{center}
\newpage
\subsection{Algorithm \SAMPLE}\label{SE:SAMPLE}

\begin{center}
    \begin{algorithm}
\SetAlgoLined
\KwData{Coupling  $(\xi_1,\xi_2,\xi_3,\xi_4)$, budget $T\in\R_+$, length vector $\mathbf{b} \in \R_+^n$, sets of indices $(\cI_0,\cI_1,\cI_2,\cI_3,\cI_4)$, precision $h$ \;}
\KwResult{An allocation vector $X \in \R_+^n$\;}
Set $\{j_4\}=\cI_4$; $b_4^*\leftarrow b_{j_4}$\Comment{Initialization}

Sample $\tilde{Y}_1\sim\xi_1$, $\tilde{Y}_2\sim\xi_2$, $\tilde{Y}_3\sim\xi_3$\Comment{Sampling from computed coupling} 

Sample $\varepsilon =e \in \{0,1,2\}$ with proba.\  $\xi_{4,\tilde{Y}_1+\tilde{Y}_2+\tilde{Y}_3+e}$\; 

Sample $U \sim {\sf Unif}[0,1]$\Comment{smoothing}

\For{$i\in[3]$}{$b_i^*\leftarrow \sum_{j\in \cI_i}b_j$\;$Y'_i\leftarrow(\tilde{Y}_i+\frac{\varepsilon}{3}+\frac{U}{3})h\wedge b_i^*$}

$S\leftarrow Y'_1+Y'_2+Y'_3$\; $\zeta \leftarrow \mathds{1}\{S>T\} \frac{T-S}{3}+\mathds{1}\{S<T-b^*_4\} \frac{T-b^*_4-S}{3}$\;
\For{$i\in[3]$}{$\bar Y_i\leftarrow Y'_i+\zeta$\;\For{$j \in \cI_i$}{$X_j=\frac{b_j}{b^*_i}\bar Y_i$\Comment{Reconstruction of uniforms}}}
\For{$j \in \cI_0$}{$X_j=0$ \Comment{Dirac masses set at 0}}
$X_{j_4}=T-(\bar Y_1+\bar Y_2+\bar Y_3)$ \Comment{Budget saturation}

 \textbf{Output:} $X$
 \caption{\SAMPLE\label{Algo:Sample}}
\end{algorithm}
\end{center}

 \end{document}